\documentclass[review,onefignum,onetabnum]{siamonline220329}



\usepackage{amsfonts}
\usepackage{graphicx}
\usepackage{algorithmic}

\usepackage{amsmath,amssymb,bm}
\usepackage{dsfont}
\usepackage{mathrsfs}
\usepackage{mathtools}

\usepackage{hyperref,bookmark}

\hypersetup{
  bookmarksnumbered,
  colorlinks=true,
  allcolors=blue,
  pdfauthor={Romain Ait Abdelmalek-Lomenech, Julien Bect, Emmanuel Vazquez},
  pdftitle={Bayesian Active Learning of (small) Quantile Sets through Expected Estimator Modification},
  pdfsubject={},
  pdfkeywords={}}

\usepackage{enumitem}
\setlist[enumerate]{leftmargin=.5in}
\setlist[itemize]{leftmargin=.5in}


\newsiamremark{remark}{Remark}
\newsiamremark{hypothesis}{Hypothesis}
\crefname{hypothesis}{Hypothesis}{Hypotheses}
\newsiamthm{claim}{Claim}

\usepackage{xparse}

\NewDocumentCommand{\citep}{o o m}{%
\IfValueT{#1}{(%
  \IfValueT{#1}{#1 }%
  \cite{#3}%
  \IfBlankTF{#2}{}{, #2}%
  )}%
  \IfNoValueT{#1}{\cite{#3}}%
}
\newcommand{\aref}[1]{%
  \hyperref[#1]{Appendix~\ref*{#1}}}%
  
\usepackage{changepage}
\renewenvironment{adjustwidth}[2]%
  {\ignorespaces}%
  {\ignorespacesafterend}

\headers{Bayesian Active Learning of (small) Quantile Sets through Expected Estimator Modification}{R. Ait Abdelmalek-Lomenech, J. Bect and E. Vazquez}

\title{Bayesian Active Learning of (small) Quantile Sets through Expected Estimator Modification\thanks{Submitted to the editors DATE.
\funding{This work was funded by the french National Research Agency (ANR) in the context of the project SAMOURAI (ANR-20-CE46-0013).}}}


\author{Romain Ait Abdelmalek-Lomenech\footnote[2]{Université Paris-Saclay, CNRS, CentraleSupélec, Laboratoire des Signaux et Systèmes.} , Julien Bect\footnotemark[2] \footnote[4]{Corresponding author. Email: \email{julien.bect@centralesupelec.fr}} , Emmanuel Vazquez\footnotemark[2]}

\usepackage{amsopn}

\newcommand{\ProbExpectLetter}[1]{\mathsf{#1}}

\renewcommand{\P}{\ProbExpectLetter{P}}
\newcommand{\Ps}{\P_{S}}

\newcommand{\Pn}{\P_n}

\newcommand{\E}{\ProbExpectLetter{E}}
\newcommand{\En}{\E_n}
\newcommand{\Enn}{\E_{n+1}}

\newcommand{\X}{\mathds{X}}
\renewcommand{\S}{\mathds{S}}
\newcommand{\Uset}{\mathds{U}}
\newcommand{\Rset}{\mathds{R}}

\newcommand{\Gf}{\Gamma(f)}

\newcommand{\Gxi}{\Gamma(\xi)}

\newcommand{\Hn}{\mathcal{H}_n}
\newcommand{\Hnn}{\mathcal{H}_{n+1}}
\newcommand{\Hnr}{\mathcal{H}_{n+r}}

\newcommand{\In}{\mathcal{I}_n}
\newcommand{\mn}{\mu_n}
\newcommand{\sn}{\sigma_n}
\newcommand{\hGn}{\widehat\Gamma_n}

\newcommand{\hGnr}{\widehat\Gamma_{n+r}}
\newcommand{\tGn}{\tilde\Gamma_n}

\newcommand{\tGnr}{\tilde\Gamma_{n+r}}
\newcommand{\tS}{\widetilde\S}
\newcommand{\tX}{\widetilde\X}
\newcommand{\tUset}{\widetilde{\mathds{U}}}

\newcommand{\dx}{\mathrm{d}x}

\newcommand{\du}{\mathrm{d}u}
\newcommand{\dPs}{\mathrm{d}\Ps}
\newcommand{\one}{\mathds{1}}

\DeclareMathOperator{\Var}{Var}

\newlength{\toto}\setlength{\toto}{70mm}


\usepackage{psfrag}

\ifpdf
\hypersetup{
  pdftitle={Bayesian Active Learning of (small) Quantile Sets through Expected Estimator Modification},
  pdfauthor={R. Ait Abdelmalek-Lomenech, J. Bect, E. Vazquez}
}
\fi




\usepackage[normalem]{ulem}

\begin{document}

\maketitle

\begin{abstract}
  Given a multivariate function taking deterministic and uncertain inputs, we consider the problem of estimating a quantile set: a set of deterministic inputs for which the probability that the output belongs to a specific region remains below a given threshold. To solve this problem in the context of expensive-to-evaluate black-box functions, we propose a Bayesian active learning strategy based on Gaussian process modeling. The strategy is driven by a novel sampling criterion, which belongs to a broader principle that we refer to as \emph{Expected Estimator Modification} (EEM). More specifically, the strategy relies on a novel sampling criterion combined with a sequential Monte Carlo framework that enables the construction of batch-sequential designs for the efficient estimation of small quantile sets. The performance of the strategy is illustrated on several synthetic examples and an industrial application case involving the ROTOR37 compressor model.
\end{abstract}

\begin{keywords}
Active learning; Computer experiments; Gaussian processes; Expected Estimator Modification; Excursion set; Quantile set.
\end{keywords}

\begin{MSCcodes}
62L05, 68T05, 62P30. 
\end{MSCcodes}

\section{Introduction}

Numerous scientific or industrial fields use \emph{numerical simulators} to model complex physical systems. Given such a numerical model, a fundamental challenge in uncertainty quantification and risk management is to estimate the set of input parameters for which the outputs satisfy prescribed properties. This problem is often called \emph{set inversion}~\citep{jaulin:1993:automatica} in the literature. A typical example of such a problem is the estimation of the excursion set 
\begin{align*}
\Lambda(f) = \{u \in \Uset \, : \, f(u) \ge T\},
\end{align*}
where $f\,:\, \Uset \mapsto \Rset$ represents a deterministic numerical simulator and $T \in \Rset$ is a given threshold.

\medbreak

Recently, formulations of the set inversion problem in which the simulator takes two types of inputs—deterministic and uncertain (or stochastic)—have been introduced. Such robust formulations arise in various applications, including nuclear safety \citep{chevalier:2013:phdthesis, marrel:2022:icscream}, %
flood defense optimization \citep{richet:2019:inversion}, %
and pollution control systems \citep{elamri:2023:set-inversion}.

\medbreak

Following the suggestions of \cite{leriche:2008:hdr} and \cite{richet:2019:inversion}, and the preliminary work of \cite{ait:2024:qsi}, we focus in this article on a specific robust formulation of the set inversion problem, called \emph{quantile set inversion} (QSI). Given a continuous black-box function
$f \, : \, \Uset = \X\times\S \to \Rset^q$, where $\X$ and~$\S$ are bounded
subsets of $\Rset^{d_\X}$ and $\Rset^{d_\S}$, corresponding to the
sets of possible values for the deterministic and uncertain input variables, our objective is to estimate the \emph{quantile set}
\begin{equation}
  \label{eem:eq:gammaf}
  \begin{split}
  \Gf & = \{x\in\X\,:\,\P(f(x,S) \in C) \le \alpha\}\\
     & = \left\{x\in\X \, : \, \int_\S \mathds{1}_C(f(x,s))\,\dPs(s) \le \alpha\right\},
  \end{split}
\end{equation}
where~$S$ is a random vector with known distribution~$\Ps$ on~$\S$, $C\subset\Rset^q$ is a given region of the output domain, and $\alpha \in (0,1)$.

\begin{remark}
By convention, when the simulator is stochastic and its evaluation yields noisy observations, $f(x,s)$ represents the true value of the underlying function at the point $(x,s)$, i.e., the expected output conditional on $(x,s)$.
\end{remark}
\medbreak

Our goal is thus to estimate the set of deterministic inputs $x \in \X$ for which the output $f(x,S)$ falls outside a critical region $C$---e.g., a failure or rejection domain---with sufficiently high probability under $\Ps$. %
This formulation arises naturally in \emph{Reliability-Based Design Optimization (RBDO)} \citep[see, e.g.,][]{dubourg:2011:rbdo,janusevskis:2013:simultaneous,hawchar:2018:rbdo}, where the goal is to solve an optimization problem of the form
\begin{align*}
&\underset{x\in\X}{\min}\quad h(x)\\
&\textrm{s. t.} \quad \P(f(x,S) \le 0) \le \alpha,
\end{align*} 
while identifying feasible solutions, i.e., inputs that violate the constraints with probability at most $\alpha$. With our notations, the
constraints are violated when~$f(x, s)$ belongs to the region~$C$.

\medbreak

A simple example (from \cite{ait:2024:qsi}) illustrating the QSI problem is shown in \autoref{eem:fig:example_qsi}, with $d_\X = d_\S = 1$, $C = (-\infty, 7.5]$, and $\alpha = 0.05$. The domain of uncertain inputs $\S$ is equipped with a rescaled Beta distribution that concentrates probability mass on large values. In this example, we observe that $\Gf$ consists of two relatively small, disjoint subsets of the deterministic input space $\X$. This illustrates that accurately estimating the full boundary of $f^{-1}(C)$ is unnecessary for approximating $\Gf$.

\begin{figure}[h]
  \newcommand{\ticksize}{\scriptsize}
  \psfrag{-0.02} {\ticksize $-0.02$}
  \psfrag{0}     {\ticksize $0$}
  \psfrag{0.02}  {\ticksize $0.02$}
  \psfrag{0.025} {\ticksize $0.025$}
  \psfrag{0.04}  {\ticksize $0.04$}
  \psfrag{0.05}  {\ticksize $0.05$}
  \psfrag{0.06}  {\ticksize $0.06$}
  \psfrag{0.075} {\ticksize $0.075$}
  \psfrag{0.08}  {\ticksize $0.08$}
  \psfrag{0.1}   {\ticksize $0.1$}
  \psfrag{0.12}  {\ticksize $0.12$}
  \psfrag{0.14}  {\ticksize $0.14$}
  \psfrag{0.16}  {\ticksize $0.16$}
  \psfrag{0.125} {\ticksize $0.125$}
  \psfrag{0.15}  {\ticksize $0.15$}
  \psfrag{0.2}   {\ticksize $0.2$}
  \psfrag{0.25}  {\ticksize $0.25$}
  \psfrag{0.3}   {\ticksize $0.3$}
  \psfrag{0.35}  {\ticksize $0.35$}
  \psfrag{0.4}   {\ticksize $0.4$}
  \psfrag{0.5}   {\ticksize $0.5$}
  \psfrag{0.6}   {\ticksize $0.6$}
  \psfrag{0.7}   {\ticksize $0.7$}
  \psfrag{0.8}   {\ticksize $0.8$}
  \psfrag{0.9}   {\ticksize $0.9$}
  \psfrag{1}     {\ticksize $1$}
  \psfrag{5}     {\ticksize $5$}
  \psfrag{10}    {\ticksize $10$}
  \psfrag{15}    {\ticksize $15$}
  \psfrag{20}    {\ticksize $20$}
  \psfrag{25}    {\ticksize $25$}
  \psfrag{30}    {\ticksize $30$}
  \psfrag{40}    {\ticksize $40$}
  \psfrag{50}    {\ticksize $50$}
  \psfrag{60}    {\ticksize $60$}
  \psfrag{80}    {\ticksize $80$}
  \psfrag{100}   {\ticksize $100$}
  \psfrag{150}   {\ticksize $150$}
  \psfrag{200}   {\ticksize $200$}
  \psfrag{250}   {\ticksize $250$}
  \psfrag{300}   {\ticksize $300$}
  
  \psfrag{bigbigmiscbased}{\small misc.-based}
  \psfrag{bigbigvarbased}{\small var.-based}
  \psfrag{bigbigentrbased}{\small entr.-based}
  
  \psfrag{X}{$x$}
  \psfrag{S}{$s$}

  \centering
  \psfrag{0}{\tiny $0$}
  \psfrag{0.1}{\tiny $0.1$}
  \psfrag{0.2}{\tiny $0.2$}
  \psfrag{0.3}{\tiny $0.3$}
  \psfrag{0.4}{\tiny $0.4$}
  \psfrag{0.5}{\tiny $0.5$}
  \psfrag{0.6}{\tiny $0.6$}
  \psfrag{0.8}{\tiny $0.8$}
  \psfrag{1}{\tiny $1$}
  \psfrag{2}{\tiny $2$}
  \psfrag{-2}{\tiny $-2$}
  \psfrag{4}{\tiny $4$}
  \psfrag{6}{\tiny $6$}
  \psfrag{8}{\tiny $8$}
  \psfrag{5}{\tiny $5$}
  \psfrag{10}{\tiny $10$}
  \psfrag{12}{\tiny $12$}
  \psfrag{14}{\tiny $14$}
  \psfrag{15}{\tiny $15$}
  \psfrag{16}{\tiny $16$}
  \psfrag{18}{\tiny $18$}
  \psfrag{density}{\footnotesize \raisebox{-4pt}{density}}
  \psfrag{  boundaryboundaryboundary}{\footnotesize boundary of $f^{-1}(C)$}
  \psfrag{  Gamma}{\footnotesize $\Gamma(f)$}
  \psfrag{  indicator}{\footnotesize $\mathds{1}_{\Gf} (x)$}
  \psfrag{  proba}{\footnotesize $\P(f(x,S) \in C)$}
  \psfrag{  alpha}{\footnotesize $\alpha$}
  \includegraphics[width=\textwidth,height=6.5cm]{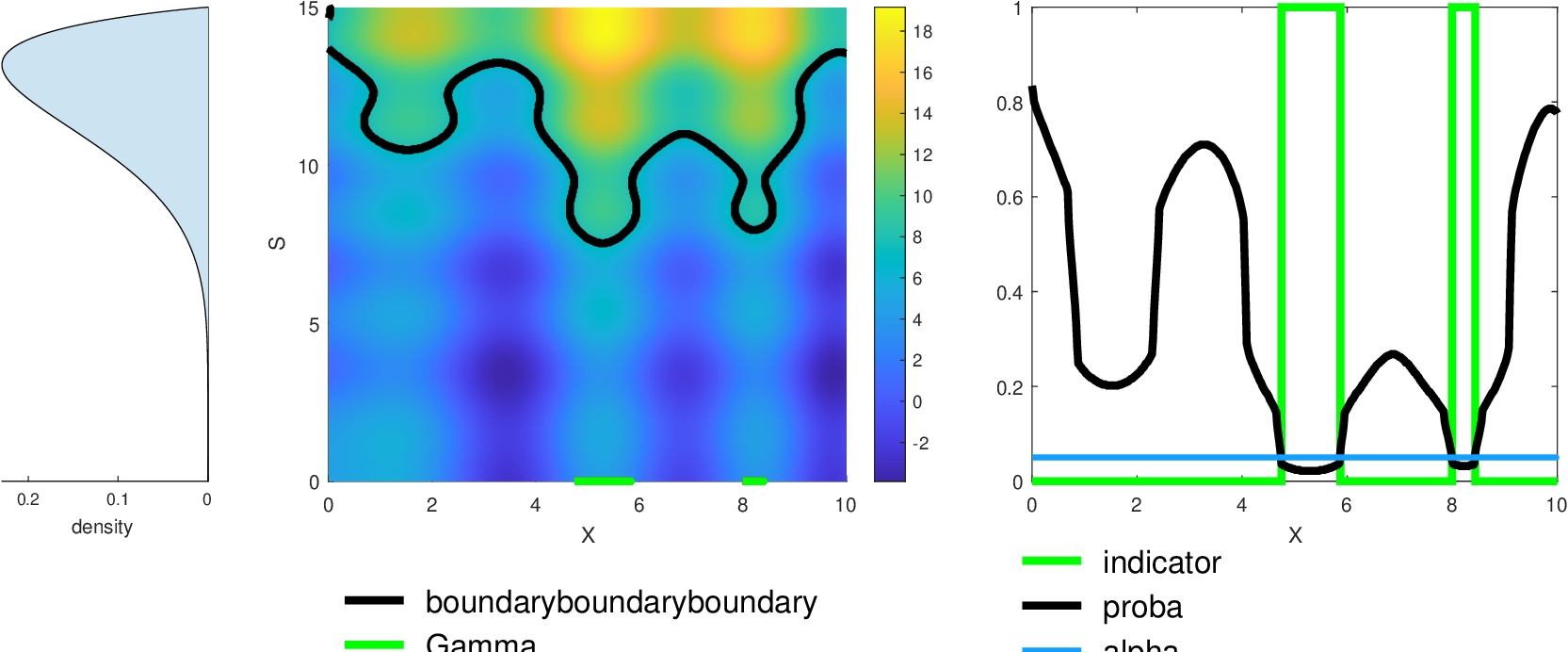}
  \caption{%
    Representation of a two-dimensional QSI problem. %
    Left: probability density function of~$\Ps$. %
    Middle: representation of~$f$, boundary of~$f^{-1}(C)$ and quantile set~$\Gamma(f)$
    associated to $C = (-\infty, 7.5]$ and $\alpha = 0.05$. %
    Right: indicator function of the quantile set,
    probability~$\P(f(x,S) \in C)$ and probability
    threshold~$\alpha$.}
  \label{eem:fig:example_qsi}
\end{figure}

\begin{remark}The QSI problem is equivalent to that of approximating the indicator function~$\one_{\Gf}$ through a classifier~$\X \to \{ 0, 1 \}$.
\end{remark}

\medbreak

In this article, we focus on the problem of estimating small quantile sets, that is, cases where $\lambda(\Gf) \ll \lambda(\X)$, with $\lambda$ denoting the Lebesgue measure on $\X$. Assuming an \emph{expensive-to-evaluate} underlying function, estimating $\Gf$ requires limiting the number of evaluations of~$f$. Recently, \cite{ait:2024:qsi} proposed a Bayesian method based on Gaussian process metamodeling and the Stepwise Uncertainty Reduction principle \citep[SUR, see, e.g.,][and references therein]{bect:2019:supermartingale} to address this task. Although this method performs efficiently on moderately difficult QSI problems, it suffers from high computational complexity. This limits its applicability to more challenging cases, particularly when estimating small quantile sets (relative to the entire deterministic input domain).

\medbreak

To address this issue, we propose a sequential Bayesian strategy based on a new approach called \emph{Expected Estimator Modification (EEM)}.
Coupled with a Sequential Monte Carlo framework (in the spirit of \cite{Li:2012:thesis,bect:2017:bss}), this strategy efficiently constructs batch designs of experiments for the estimation of such quantile sets.

\medbreak

This article is structured as follows: \autoref{eem:sec:framework} introduces the Bayesian framework considered, along with the main notations used throughout this work. \autoref{eem:sec:related} provides a concise review of previous work regarding QSI and, more generally, Bayesian methods for the estimation of excursion sets. In \autoref{eem:sec:EEM}, we introduce the EEM approach and derive a new criterion for the QSI problem. We also provide a proof of concept by applying it to moderately difficult problems. The problem of estimating small quantile sets is tackled in \autoref{eem:sec:small}, which describes a combination of the EEM strategy for QSI with a \emph{Sequential Monte Carlo (SMC)} framework, allowing the estimation of small quantile sets. The results of this strategy are illustrated on several synthetic examples and on the ROTOR37 compressor model \citep{reid:1978:rotor37}  in \autoref{eem:sec:numerical}. Finally, our conclusions are summarized in \autoref{eem:sec:conclusion}. 

\section{Bayesian framework \& notations}
\label{eem:sec:framework}

In the following, we consider an unknown continuous multivariate function $f:\Uset = \X\times\S \mapsto \Rset^q$. We adopt a Bayesian metamodeling approach by placing a Gaussian process (GP) prior $\xi \sim \text{GP}(\mu, k)$ on $f$, where $\mu$ and $k$ are respectively the mean and covariance functions of the GP. For further details on metamodeling and Gaussian processes, see \cite{rasmussen:2006:gpml,santner:2018:design}.

\medbreak

We sequentially select evaluation points using a Bayesian strategy. At each selected point $U_i \in \Uset$, we observe a possibly noisy evaluation $Z_i^{obs}$ defined by
\begin{align*} 
Z_i^{obs} = f(U_i) + \epsilon_i,
\end{align*}
\noindent
with $\epsilon_i \sim \mathcal{N}(0, \Sigma)$ independent and identically distributed. The noise covariance $\Sigma$ is a matrix of size $q \times q$, which may be identically zero.

\begin{remark}
The case $\Sigma = 0$ corresponds to a noiseless setting, as with a deterministic numerical simulator.
\end{remark}

\begin{remark}
The parameters of the GP and the noise covariance $\Sigma$ are assumed to be known in the methodological parts of this article. In \autoref{eem:sec:numerical}, the parameters of the GP are estimated using ReML \citep{stein1999interpolation}. Although this article includes no noisy examples, $\Sigma$ would be estimated jointly with the GP parameters when necessary.
\end{remark}

\medbreak

In the rest of this article, $\In = \mathcal{I}_0\cup\{(U_1, Z_1^{obs}),\dotsc, (U_n, Z_n^{obs})\}$ denotes the current information after $n_0+n$ evaluations of $f$, where $n_0$ is the size of the initial design and evaluations $\mathcal{I}_0$. For clarity, we do not distinguish between $\In$ and the sigma-algebra $\sigma(\In)$ it generates. We denote by $\Pn$ the conditional probability measure given $\In$, and $\En$ the associated expectation operator. Finally, $\mn$, $\sn$ and $k_n$ refer respectively to the posterior mean, standard deviation, and covariance of $\xi$ given $\In$.

\section{Related work}
\label{eem:sec:related}

Several formulations of the set inversion problem with uncertain inputs have been proposed \citep[see, e.g.,][]{marrel:2022:icscream,chevalier:2013:phdthesis,richet:2019:inversion,elamri:2023:set-inversion}, but the specific QSI problem has only recently been addressed \citep{ait:2024:qsi}. However, existing methods for the estimation of excursion sets can be applied in the space $\Uset = \X\times\S$ as described below.

\medbreak

For simplicity, we assume in this section that $f$ is scalar-valued and that $C = (-\infty, T)$. Defining the excursion set
\begin{align*} 
\Lambda(f) = \{(x,s)\in \X\times\S \, : \, f(x,s) \ge T\},
\end{align*}
we observe that the sets $\Gf$ and $\Lambda(f)$ are closely related:
\begin{align*} 
\Gf & = \{x \in \X \, : \, \P(f(x,S) < T) \le \alpha\}\\
    & = \{x \in \X \, : \, \P((x,S) \in \Lambda(f)) \ge 1 - \alpha\}.
\end{align*}
This relation implies that an accurate classification of the points $(x,s)$ with respect to membership in $\Lambda(f)$ allows, via Monte Carlo sampling, the classification of points $x$ with respect to $\Gf$.

\medbreak

\begin{remark}
Although the definition of $\Lambda(f)$ does not depend on the distribution $\Ps$, we retain the product notation $\X \times \S$ for clarity.
\end{remark}

\medbreak

A number of Bayesian methods have been developed to construct designs of experiments for estimating excursion sets such as $\Lambda(f)$. Most of these methods fall into one of two main categories: maximal uncertainty sampling strategies and Stepwise Uncertainty Reduction (SUR) strategies.

\medbreak

Maximal uncertainty strategies aim to sample the points $u \in \Uset$ that maximize a local uncertainty measure $G_n(u)$. For instance, the straddle heuristic \citep{bryan:2005:active} uses $G_n(u) = 1.96\sn(u) - |\mn(u) - T|$. For other strategies of this type, see \cite{cole:2023:entropy, echard:AK-MCS:2011, bichon:2008, ranjan:2008:contour, duhamel:2025:inversion}.

\medbreak

SUR strategies, by contrast, aim to minimize the \emph{expected future uncertainty} on $\Lambda(f)$. Given a global uncertainty measure $\Hn$ that depends on the current information $\In$, the next evaluation point is chosen by minimizing the criterion
\begin{align*}
J_n(u) = \En(\Hnn \mid U_{n+1} = u).
\end{align*}
Several choices for $\Hn$ have been proposed in the literature, including the integrated probability of misclassification \citep{bect:2012:sur_failureproba, chevalier:2014:fast_parallel_kriging}, the weighted integrated mean square error \citep{picheny:2010:adaptive_design}, and the integrated expected feasibility function \citep{duhamel:2023:sur-bichon}. More details on SUR methods, including sufficient conditions for convergence, can be found in \cite{bect:2019:supermartingale, stange:2024:SUR_theorie}.

\medbreak

Recently, \cite{ait:2024:qsi} proposed a method designed specifically for the QSI problem. The QSI-SUR strategy, based on the SUR principle, relies on the uncertainty measure
\begin{align*}
\Hn = \int_\X \min(\pi_n(x), 1 - \pi_n(x)) \, \dx,
\end{align*}
where $\pi_n(x) = \Pn(x \in \Gxi)$ is the probability that $x$ belongs to the random quantile set $\Gxi$ induced by the GP $\xi$. In the same spirit as \cite{bect:2012:sur_failureproba, chevalier:2014:fast_parallel_kriging}, the method selects the point that minimizes the expected future integrated probability of misclassification of $x$ with respect to $\Gxi$.

\medbreak

Empirical results indicate that, for a given evaluation budget, this criterion yields more accurate or at least comparable estimates of $\Gf$ than methods focusing on the estimation of $\Lambda(f)$. However, this improved performance comes at the cost of high computational complexity in its current implementation\footnote{\url{https://github.com/stk-kriging/contrib-qsi}}. Since no closed-form expression is available, the QSI-SUR criterion must be approximated using conditional GP simulations. As a result, it is difficult to use this method for constructing batch designs of experiments. Moreover, the criterion is approximated and optimized on a finite grid constructed using importance sampling. When $\lambda(\Gf) \ll \lambda(\X)$, this approach fails to allocate points effectively in relevant areas, which limits its suitability for estimating small quantile sets.

\section{QSI through Expected Estimator Modification}
\label{eem:sec:EEM}

\subsection{The QSI-EEM criterion}
\label{eem:sec:qsi-eem}

As discussed in \autoref{eem:sec:related}, the QSI-SUR criterion suffers from high computational complexity. To address instances involving small quantile sets, we propose in this section a computationally lighter approximation, based on a new family of criteria, termed \emph{Expected Estimator Modification} (EEM).

\medbreak

Let $J_n(u_{n+1}, \dotsc, u_{n+r})$ denote the QSI-SUR criterion for a batch of size $r$:
\begin{equation}
J_n(u_{n+1}, \dotsc, u_{n+r}) = \En\left( \Hnr \,\middle|\, U_{n+i} = u_{n+i},\ i=1,\dotsc,r \right),
\end{equation}
where $\Hnr = \int_\X \min(\pi_{n+r}(x), 1 - \pi_{n+r}(x))\,\dx$, and $\pi_{n+r}(x) = \P_{n+r}(x \in \Gxi)$. The following equivalence holds:

\medbreak

\begin{proposition}
\label{eem:prop:equiv_sur_gain}
Let $\Gamma^{\mathrm{B}}_n$ denote the Bayesian estimator of $\Gf$ given $\In$, defined by
\begin{equation}
\Gamma^{\mathrm{B}}_n = \bigl\{x \in \X \, : \, \pi_n(x) > \frac{1}{2} \bigr\},
\end{equation}
and let $\Delta$ denote the symmetric difference operator, defined for sets $A$ and $B$ as $A \Delta B = (A \setminus B) \cup (B \setminus A)$. Then
\begin{equation}
\underset{u\in\Uset^r}{\arg\min}\, J_n(u) = \underset{u\in\Uset^r}{\arg\max} \, G_n(u),
\end{equation}
where
\begin{equation}
G_n(u) = \int_\X \En\left( \left|2\pi_{n+r}(x) - 1\right| \cdot \mathds{1}_{\Gamma^{\mathrm{B}}_{n+r} \Delta \Gamma^{\mathrm{B}}_n}(x) \,\middle|\, U_{n+i} = u_{n+i},\ i=1,\dotsc,r \right) \,\dx.
\end{equation}
\end{proposition}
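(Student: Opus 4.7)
The plan is to reduce the QSI-SUR criterion to an equivalent maximization through two ingredients: a pointwise rewriting of the misclassification integrand, and a case analysis driven by the Bayesian estimator. I will freely use the identity $\min(p,1-p) = \tfrac{1}{2}(1 - |2p-1|)$, the tower property (i.e.\ that $k \mapsto \pi_k(x)$ is a $(\P_k)$-martingale for each fixed $x$), and Fubini--Tonelli.

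\textbf{Step 1 (rewriting $J_n$).} Applying the identity above pointwise gives $\Hnr = \tfrac{\lambda(\X)}{2} - \tfrac{1}{2}\int_\X |2\pi_{n+r}(x) - 1|\,\dx$. Taking conditional expectation $\En(\cdot \mid U_{n+i} = u_{n+i})$ and exchanging integral and expectation, minimizing $J_n(u)$ is equivalent to maximizing
\[
\tilde J_n(u) \;:=\; \int_\X \En\bigl( |2\pi_{n+r}(x) - 1| \,\bigm|\, U_{n+i}=u_{n+i},\ i=1,\dotsc,r \bigr) \,\dx.
\]

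\textbf{Step 2 (centering by a $u$-independent constant).} Because the $u_{n+i}$ are deterministic, the martingale property gives $\En(\pi_{n+r}(x) \mid U_{n+i}=u_{n+i}) = \pi_n(x)$, so $\int_\X |2\pi_n(x)-1|\,\dx$ does not depend on $u$ and may be subtracted from $\tilde J_n$ without changing its argmax. Writing $Y_x := 2\pi_{n+r}(x)-1 \in [-1,1]$ and $m_x := 2\pi_n(x)-1 = \En(Y_x \mid \cdot)$, it suffices to prove the pointwise identity
\[
\En(|Y_x| \mid \cdot) - |m_x| \;=\; 2\,\En\bigl(|Y_x|\,\mathds{1}_{\Gamma^{\mathrm{B}}_{n+r} \Delta \Gamma^{\mathrm{B}}_n}(x) \,\bigm|\, \cdot\bigr).
\]

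\textbf{Step 3 (the key algebraic identity, the main obstacle).} Since $\Gamma^{\mathrm{B}}_n = \{x : m_x > 0\}$ and similarly for $n+r$, the event $\{x \in \Gamma^{\mathrm{B}}_{n+r} \Delta \Gamma^{\mathrm{B}}_n\}$ is precisely a sign mismatch between $Y_x$ and $m_x$. I split on the sign of $m_x$: when $m_x > 0$, the indicator equals $\mathds{1}_{\{Y_x \leq 0\}}$, and the decomposition $|Y_x| = Y_x - 2 Y_x \mathds{1}_{\{Y_x \leq 0\}}$ combined with $\En(Y_x \mid \cdot) = m_x$ yields $\En(|Y_x| \mid \cdot) - m_x = 2\,\En(|Y_x|\,\mathds{1}_{\{Y_x \leq 0\}} \mid \cdot)$; when $m_x \leq 0$, the symmetric computation using $|Y_x| = -Y_x + 2Y_x \mathds{1}_{\{Y_x > 0\}}$ gives the analogous expression. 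This is the only nontrivial step; it requires careful handling of the boundary case $m_x = 0$, which is absorbed by the strict vs.\ non-strict inequalities in the definitions of $\Gamma^{\mathrm{B}}_{\cdot}$ and in the decompositions of $|Y_x|$.

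\textbf{Step 4 (conclusion).} Integrating the pointwise identity over $\X$ and using Fubini once more gives $\tilde J_n(u) - \int_\X |2\pi_n(x)-1|\,\dx = 2\,G_n(u)$. As the subtracted term is independent of $u$ and the multiplicative factor $2$ is positive, the argmin of $J_n$ and the argmax of $G_n$ coincide, which is the claimed equivalence.
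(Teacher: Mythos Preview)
Your proof is correct and follows the same overall strategy as the paper: both compute the SUR gain $\Hn - J_n(u)$ via the tower property and show it equals $G_n(u)$ by a case analysis on membership in $\Gamma^{\mathrm{B}}_n$ versus $\Gamma^{\mathrm{B}}_{n+r}$. The packaging differs slightly: the paper represents $\min(\pi,1-\pi)$ as $\pi\,\mathds{1}_{\Gamma^{\mathrm{B},c}} + (1-\pi)\,\mathds{1}_{\Gamma^{\mathrm{B}}}$ and does an explicit four-case check on $(\Gamma^{\mathrm{B}}_n,\Gamma^{\mathrm{B}}_{n+r})$, whereas you use $\min(p,1-p)=\tfrac12(1-|2p-1|)$ and the decomposition $|Y|=\pm Y \mp 2Y\,\mathds{1}_{\{\pm Y\le 0\}}$, reducing the analysis to two cases on the sign of $m_x$. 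Your route is a bit more algebraically streamlined and makes the factor of $2$ explicit (so $\Hn - J_n(u) = G_n(u)$ exactly, consistent with the paper), but the underlying mechanism is the same.
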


\begin{proof}
See \aref{eem:app:proof_sur_gain}.
\end{proof}

\medbreak

From \autoref{eem:prop:equiv_sur_gain}, we observe that minimizing the QSI-SUR criterion is equivalent to maximizing an expected weighted symmetric difference between Bayesian estimators. However, the computation of $G_n$ requires evaluating the posterior probabilities $\pi_n$ and $\pi_{n+r}$, both to determine the classification sets $\Gamma^{\mathrm{B}}_n$ and $\Gamma^{\mathrm{B}}_{n+r}$ and to compute the weights $|2\pi_{n+r}(x) - 1|$. These posterior quantities cannot be computed analytically and must be approximated using conditional simulations of the Gaussian process, which incurs significant computational cost.

\medbreak

To reduce the computational cost, we propose two simplifications. First, we replace the weights $|2\pi_{n+r}(x)-1|$, which lie in $[0,1]$, by the constant value~$1$. Second, we substitute the Bayesian set estimator $\Gamma_n^{\mathrm{B}}$ by a more tractable surrogate $\hGn$ (see below). We then obtain a rough approximation of $G_n$ as
\begin{equation}
  G_n(x) \approx\En(d(\hGnr,\hGn)\, | \, U_{n+i} = u_{n+i}, \, i=1,\dotsc,r),
\end{equation}
where
\begin{align*}
  d(\hGnr,\hGn) = \lambda(\hGnr \Delta \hGn).
\end{align*}

The function $d(\cdot, \cdot)$ defines a pseudo-metric on the Lebesgue $\sigma$-algebra of~$\X$, known as the Fréchet–Nikodým–Aronszajn distance \citep{nikodym:1930:set_distance, marczewski:1958:set_distance}. Under this approximation, $G_n$ measures the expected change between two estimators of the target set, and can thus be interpreted as an instance of EEM.

\medbreak

Regarding the choice of estimators $\hGn$, we introduce a class of plug-in estimators obtained by replacing the process $\xi$ with its posterior mean $\mu_n$. Specifically, define
\begin{equation}
  \begin{aligned}
  \hGn & = \{x\in\X \, : \, \Pn(\mu_n(x,S) \in C) \le \alpha\}\\
      & = \bigl\{x\in\X \, : \, \int_\S \mathds{1}_C(\mu_n(x,s))\,\dPs(s) \le \alpha\bigr\}.
  \end{aligned}
\end{equation}
These estimators are particularly appealing from a computational perspective due to the following property:

\begin{proposition}
  \label{eem:prop:sample_estim}
  Let $\xi$ be a univariate GP, $Z$ be a standard Gaussian random variable and $\overset{\mathrm{d}}{=}$ denote equality in distribution. Given $\In$ and a set $U = \{U_{n+1},\dotsc, U_{n+r}\}$ of candidate points, the estimator $\hGnr$ defines a closed random set \citep{molchanov:2005:random_sets} such that:
\begin{equation} 
  \begin{split}
  \hGnr \overset{\mathrm{d}}{=} \left\{x \in \X \, : \, \int_\S \mathds{1}_C\left(\mu_n(x,s) + \kappa_n(x,s)Z\right)\, \dPs(s) \le \alpha\right\},
  \end{split}
\end{equation}
where the function $\kappa_n(x,s)$ is given by
\begin{align*}
  \kappa_n(x,s) = \sqrt{k_n(U, (x, s))\, \Sigma_n(U)\, k_n(U, (x, s))^t},
\end{align*}
with $\Sigma_n(U)$ the posterior covariance matrix of the vector $(\xi(U_{n+1}),\dotsc,\xi(U_{n+r}))$, and
\begin{align*}
  k_n(U, (x,s)) = (k_n(U_{n+1}, (x,s)),\dotsc, k_n(U_{n+r}, (x,s))).
\end{align*}
\end{proposition}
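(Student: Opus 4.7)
The plan is to describe the conditional distribution of $\mu_{n+r}$ given $\In$ and the batch locations $U = (U_{n+1}, \dotsc, U_{n+r})$, and then transport this distribution through the defining condition of $\hGnr$. Since $U$ and $\In$ are fixed, all remaining randomness comes from the future observations $\Zobs_{n+1}, \dotsc, \Zobs_{n+r}$, which conditionally on $\In$ form a Gaussian vector with mean $\mu_n(U)$ and an explicit covariance built from $\Sigma_n(U)$ and the noise covariance.

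Applying the standard GP updating formula, $\mu_{n+r}(x,s)$ is an affine function of this observation vector, so conditionally on $\In$ and $U$ the increment $\mu_{n+r}(x,s) - \mu_n(x,s)$ is a centered Gaussian. A direct computation of its variance as a quadratic form in the cross-covariance vector $k_n(U,(x,s))$, combined with the matrix identities that relate the update weights back to $\Sigma_n(U)$, matches the expression $\kappa_n(x,s)^2$ given in the statement. For each $(x,s)$ this yields $\mu_{n+r}(x,s) \overset{\mathrm{d}}{=} \mu_n(x,s) + \kappa_n(x,s) Z$ with $Z \sim \mathcal{N}(0,1)$; substituting into the defining condition of $\hGnr$ then delivers the claimed equality in distribution.

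To conclude that $\hGnr$ is a closed random set in the sense of Molchanov, I would verify closedness and measurability separately. Closedness follows from the upper semi-continuity of $x \mapsto \int_\S \one_C(\mu_n(x,s) + \kappa_n(x,s) Z) \dPs(s)$ for each realization of $Z$, which itself follows from the continuity of $\mu_n$ and $\kappa_n$ in $x$, the closedness of $C$, and dominated convergence (using $\one_C \le 1$); measurability of the set-valued map in $Z$ is then obtained via Fubini, so that the sub-level set $\{x : \cdot \le \alpha\}$ yields a random closed set of $\X$.

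The main subtlety is the variance computation and, more delicately, promoting the pointwise distributional identity to the joint level required for an equality of random sets. This hinges on the way the $r$-dimensional Gaussian noise driving the GP update collapses, under the integral criterion defining $\hGnr$, into the scalar representation $\kappa_n(x,s) Z$; the linear structure of $\mu_{n+r}$ in the observations, together with the specific quadratic form used to define $\kappa_n$, is what makes this reduction possible.
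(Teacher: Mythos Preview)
Your approach is essentially the paper's: write $\mu_{n+r}$ via the kriging/Matheron update as an affine function of the (Gaussian) future observations, read off the marginal variance $\kappa_n(x,s)^2$, and then carry the distributional identity into the definition of $\hGnr$. The paper's proof is actually terser than your outline --- it does not discuss closedness or measurability at all --- so on that front you go further than the paper.

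Where your write-up has a genuine gap is precisely the point you yourself flag as the ``main subtlety'': upgrading the \emph{pointwise} identity $\mu_{n+r}(x,s)\overset{\mathrm d}{=}\mu_n(x,s)+\kappa_n(x,s)Z$ to an equality in law of the \emph{processes} $(x,s)\mapsto\mu_{n+r}(x,s)$ and $(x,s)\mapsto\mu_n(x,s)+\kappa_n(x,s)Z$, which is what the equality of random \emph{sets} requires. Your sentence ``substituting into the defining condition of $\hGnr$ then delivers the claimed equality in distribution'' is not justified by a marginal identity, and the closing remark that the linear structure ``makes this reduction possible'' is not an argument. The paper's resolution is direct: under $\Pn$, both $\mu_{n+r}(\cdot)$ and $\mu_n(\cdot)+\kappa_n(\cdot)Z$ are Gaussian processes, and the paper asserts they share the same mean and covariance functions; since the law of a GP is determined by these two objects, equality in distribution as processes --- and hence of the derived random sets --- follows. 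So in place of your last paragraph you should either exhibit the matching of the cross-covariances $\mathrm{Cov}_n\bigl(\mu_{n+r}(u),\mu_{n+r}(v)\bigr)$ and $\kappa_n(u)\kappa_n(v)$, or at least state explicitly, as the paper does, that the two random functions are GPs with identical first and second moments.
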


\begin{proof}
See \aref{eem:app:proof_approx}.
\end{proof}
\medbreak
\begin{remark}
  For simplicity of notations, we have assumed in \autoref{eem:prop:sample_estim} that $\xi$ is a univariate Gaussian process. However, this result can easily be generalized to multivariate GPs.
\end{remark}

Unlike the Bayesian estimator $\Gamma_n^{\mathrm{B}}$, which requires Monte Carlo sampling of the process $\xi$ under the posterior $\Pn$, the estimator $\hGn$ can be computed using simpler operations. Indeed, \autoref{eem:prop:sample_estim} shows that once $\mu_n$ has been computed, realizations of $\hGnr$—conditioned on a hypothetical set of evaluation points—can be generated by drawing a standard Gaussian random vector.
\medbreak

Based on the approximations above, we now define the \emph{QSI-EEM} criterion for a batch of size $r$:
\begin{equation}
\Psi_{n,r}(u_{n+1},\dotsc,u_{n+r}) = \En(\lambda(\hGnr\Delta\hGn)\, | \, U_{n+i} = u_{n+i}, i = 1,\dotsc,r).
\end{equation}
This criterion is to be maximized, and defines a batch Bayesian strategy where the evaluation points are selected as
\begin{equation}
(U_{n+1},\dotsc,U_{n+r}) \in \underset{(u_{n+1},\dotsc, u_{n+r}) \in \Uset^r}{\arg\max} \, \Psi_{n,r}(u_{n+1},\dotsc, u_{n+r}).
\end{equation}
Heuristically, the criterion $\Psi_{n,r}$ can be viewed through the lens of sensitivity analysis. It quantifies the expected modification of the estimator when new points are evaluated, and may be interpreted as a sensitivity index of the estimation with respect to the chosen batch \citep{baucells:2013:sensi,daveiga:2015:sensi}.

\begin{remark}
 The maximization of the criterion $\Psi_{n,r}$ defines a deterministic decision rule. In particular, the selected batch $(U_{n+1},\dotsc, U_{n+r})$ is measurable with respect to the current information $\In$, i.e., there exists a function $\phi_n$ such that $(U_{n+1},\dotsc, U_{n+r}) = \phi_n(\In)$.
\end{remark}

\subsection{Approximation of the QSI-EEM criterion}
\label{eem:sec:QSI-EEM-approx}

Although the QSI-EEM criterion is simpler than the original QSI-SUR criterion, it still lacks a closed-form expression. However, it can be efficiently approximated using a finite discretization of the input spaces.

\medbreak

Let $\tX$ be a finite set of points sampled uniformly from $\X$, and let $\tS = \{(s_j, w_S(s_j))\}_j$ be a weighted sample from $\S$, defining a discrete measure
\begin{align*}
\tilde{\P}_\S = \sum_{s \in \tS}w_S(s)\delta_{\{s\}},
\end{align*}
that approximates $\Ps$. Given the posterior mean $\mn$ computed from $\In$, we define the approximate deterministic estimator
\begin{equation}
\tGn = \Bigl\{x \in \tX \, : \, \sum_{s \in \tS}w_S(s)\mathds{1}_C(\mn(x,s)) \le \alpha\Bigr\}.
\end{equation}
In addition (assuming a univariate GP $\xi$ for simplicity), as in \autoref{eem:prop:sample_estim}, given $\In$ and a batch $U = (U_{n+1},\dotsc, U_{n+r})$, we have
\begin{equation}
\tGnr \overset{\mathrm{d}}{=} \Bigl\{x \in \tX \, : \, \sum_{s \in \tS}w_S(s)\mathds{1}_C(\mn(x,s)+\kappa_n(U, x,s)Z) \le \alpha \Bigr\},
\end{equation}
where $Z$ is a standard Gaussian random variable. 

\medbreak
Let $\tGnr(z)$ denote the set obtained for a realization $Z = z$. Using a Gauss-Hermite quadrature $(z_j, w_Z(z_j))$, the criterion $\Psi_{n,r}$ can be approximated by
\begin{equation}
  \label{eem:eq:psi_tilde}
  \tilde{\Psi}_{n,r}(u_{n+1},\dotsc,u_{n+r}) = \frac{1}{|\tX|}\sum_j w_Z(z_j)\sum_{x\in\tX}|\mathds{1}_{\tGn}(x)-\mathds{1}_{\tGnr(z_j)}(x)|,
\end{equation}
where $|\tX|$ denotes the cardinality of the design grid $\tX$.

\begin{remark} The quantity $\tilde{\Psi}_{n,r}$ can be
    interpreted as a weighted average $\ell_1$ distance between two
    binary classifiers over the discrete grid $\tX$: one corresponding
    to the current plug-in estimator $\tGn$, and the other to a
    perturbed version $\tGnr(z_j)$ based on a simulated posterior
    modification. It quantifies the average classification
    disagreement induced by a potential batch of evaluations.
\end{remark}

\subsection{Results on moderately difficult problems}
\label{eem:sec:sanity_check}
Before addressing the challenging case of small quantile sets, we assess the relevance of the proposed criterion on moderately difficult QSI problems.

\medbreak

As a benchmark, we compare the accuracy of the proposed strategy with that of the QSI-SUR criterion \citep{ait:2024:qsi}, as well as with several alternative methods targeting the estimation of $\Lambda(f)$ in the joint input space $\X\times\S$ (see \autoref{eem:sec:related}). The experiments are conducted on synthetic test cases from \cite{ait:2024:qsi}, using the same implementation\footnote{\url{https://github.com/stk-kriging/contrib-qsi}} and parameters\footnote{\url{https://github.com/stk-kriging/qsi-paper-experiments}}, with Matlab v2022a and the STK toolbox v2.8.1 \citep{STK}. Implementation details and test case descriptions are provided in \aref{eem:app:details_implementation} and \aref{eem:app:sec4_examples}, respectively.

\medbreak

\begin{figure}
  \begin{adjustwidth}{-1cm}{-1cm}
  \centering
  \includegraphics[width=\toto]{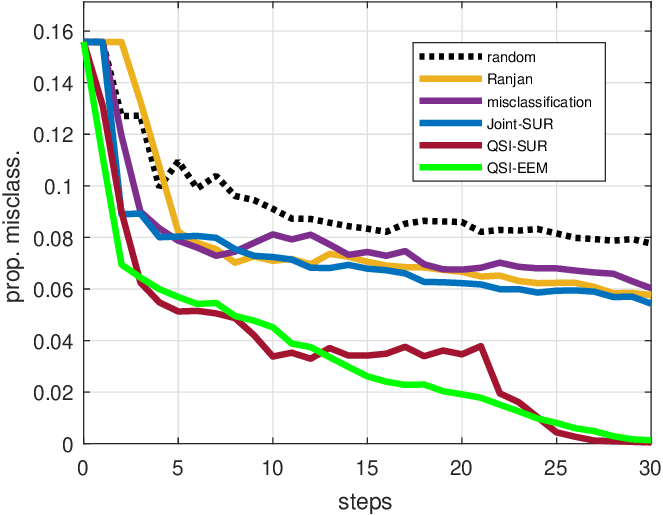}
  \hspace*{5mm}
  \includegraphics[width=\toto]{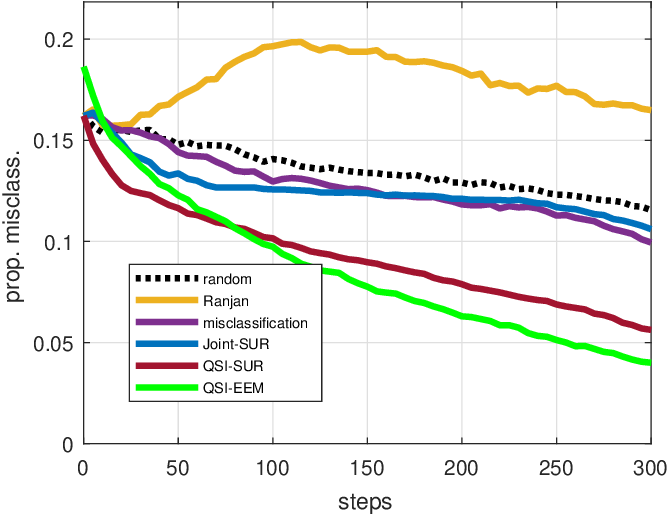}\\[5mm]
  \includegraphics[width=\toto]{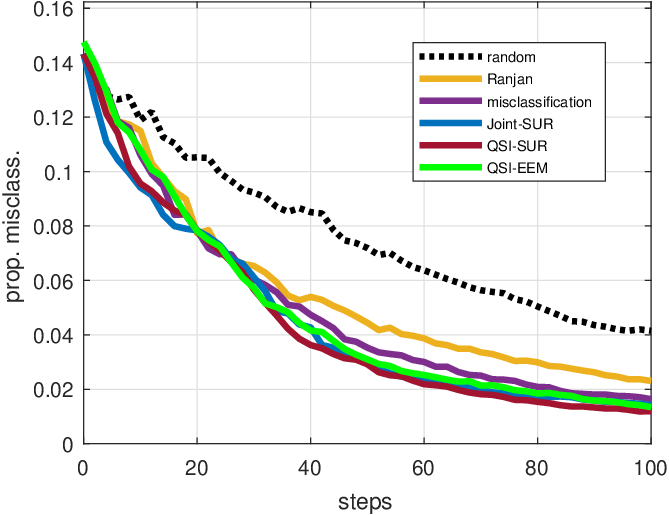}
  \caption{%
    Median of the proportion of misclassified points vs.\;number of
    steps, for 100~repetitions of the algorithms on the test
    functions~$f_1$ (top left), $f_2$ (top right) and~$f_3$
    (bottom). }
  \label{eem:fig:results_comp_EEM_SUR}
  \end{adjustwidth}
\end{figure}

\begin{figure}
  \begin{adjustwidth}{-1cm}{-1cm}
  \centering
  \includegraphics[width=\toto]{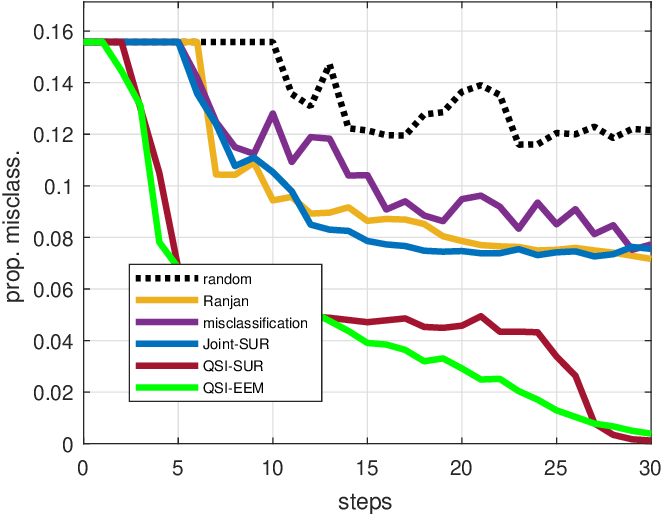}
  \hspace*{5mm}
  \includegraphics[width=\toto]{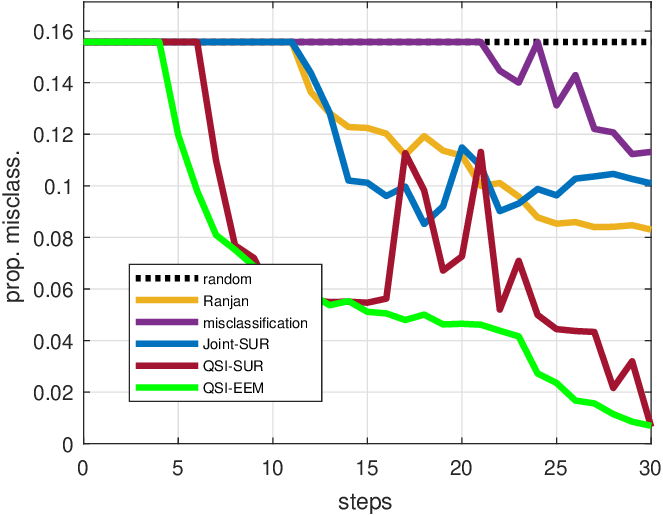}
  \caption{Quantiles of order $0.75$ (left) and $0.95$ (right) of the
    proportion of misclassified points vs.\;number of steps, for
    100~repetitions of the algorithms on the test function~$f_1$. }
  \label{eem:fig:results_comp_f1_quant_EEM_SUR}
  \end{adjustwidth}
\end{figure}

\begin{table}[ht]
  \centering
\begin{tabular}{l|c | c | c | c| c |}
  \cline{2-6}
  & Ranjan        & misclass.        & Joint-SUR      & QSI-SUR & QSI-EEM \\
  \hline
  \multicolumn{1}{|l|}{$f_1$}    &0.15 & 0.14 & 3.74 &  3.77 & 0.99\\
  \hline
  \multicolumn{1}{|l|}{$f_2$}    & 0.29 & 0.23 & 7.01  & 5.91 & 2.58  \\
  \hline
  \multicolumn{1}{|l|}{$f_3$}    & 0.30 & 0.23 & 6.57 & 5.35  & 2.44\\
  \hline
\end{tabular}
\caption{%
  Runtime (in seconds) to complete the first step. %
  Average over $10$~runs.}
\label{eem:table:first_time}
\end{table}

\begin{table}[ht]
  \centering
  \begin{tabular}{l|c | c | c | c| c|}
    \cline{2-6}
    & Ranjan        & misclass.        & Joint-SUR      & QSI-SUR & QSI-EEM \\
    \hline
    \multicolumn{1}{|l|}{$f_1$}    &1 & 0.81 & 19.26 &  21.94 & 5.67\\
    \hline
    \multicolumn{1}{|l|}{$f_2$}    & 1 & 0.84 & 4.96  & 9.45 & 4.27 \\
    \hline
    \multicolumn{1}{|l|}{$f_3$}    & 1 & 0.83 & 10.02 & 11.05 & 4.68 \\
    \hline
  \end{tabular}
  \caption{%
    Normalized total runtime (in seconds). %
    Average over $10$~runs.}
  \label{eem:table:total_time}
\end{table}

When comparing criteria under a similar implementation, the proposed strategy offers a clear computational advantage (see \autoref{eem:table:first_time} and \autoref{eem:table:total_time}). Most of the computational cost in our implementation---shared with that of QSI-SUR---arises from the sampling of the approximation grid $\tX\times\tS$ via importance sampling, using a density derived from the misclassification probability (see \aref{eem:app:details_implementation} for details). In particular, selecting points in $\X$ requires costly conditional Gaussian simulations to approximate $\pi_n(x) = \Pn(x \in \Gamma(\xi))$. As shown in \autoref{eem:fig:comp_time}, evaluating the criterion on a fixed approximation grid rather than a sampled one leads to a computational complexity of $O(M^2)$, where $M = |\tX \times \tS|$. In contrast, the QSI-SUR criterion exhibits $O(M^3)$ complexity. This difference stems from the fact that QSI-SUR requires conditional Gaussian simulations involving a Cholesky factorization of the full covariance matrix over $\tX \times \tS$, whereas QSI-EEM avoids this operation entirely.

\begin{figure}
  \center
\includegraphics[width=14cm]{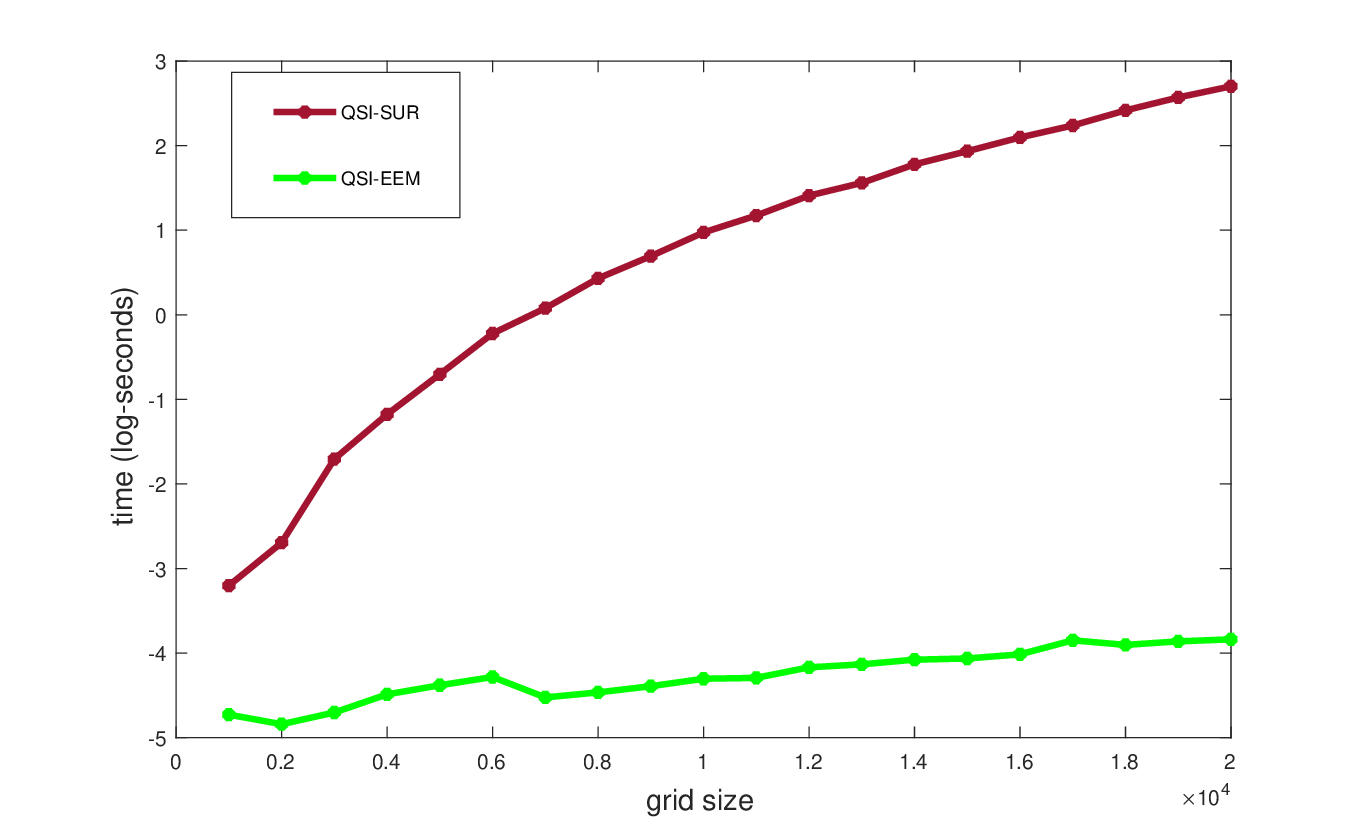}
\caption{Comparison of the time (in log-seconds) needed to evaluate the criterion at one points given the size of the approximation grid $|\tX\times\tS|$.}
\label{eem:fig:comp_time}
\end{figure}

\medbreak

More results regarding these experiments can be found in \autoref{eem:SM:sec:results-experiences}.

\subsection{Discussion on the concept of EEM and its link with the SUR principle}
\label{eem:sec:discuss-EEM}

We now discuss how the concept of EEM extends naturally to a broader class of Bayesian design strategies.

\medbreak

As before, let $f$ be an unknown function modeled by a Gaussian process $\xi$, and let $\theta(f) \in \Theta$ denote a quantity or object of interest. Let $\{\hat{\theta}_n\}$ be a sequence of estimators of $\theta(f)$, and let $d : \Theta \times \Theta \to \Rset^+$ be a divergence. An EEM strategy (with batch size $1$) targeting $\theta(f)$ selects the next evaluation point as

$$
U_{n+1} \in \underset{u \in \Uset}{\arg\max} \, \En(d(\hat{\theta}_{n+1}, \hat{\theta}_n) \mid U_{n+1} = u).
$$

Several Bayesian design strategies proposed in the literature can be reinterpreted within the EEM framework. This includes, in particular, several instances of the SUR principle.

\medbreak

For example, in Bayesian optimization, the classical Expected Improvement (EI) criterion \citep{mockus:1978:seeking}
$$
U_{n+1} \in \underset{u \in \Uset}{\arg\max}\, \En((\xi(u)-M_n)_+),
$$
with $M_n = \max\{\xi(u) \, : \, \sigma_n(u) = 0\}$, can be seen as an EEM strategy with $\hat{\theta}_n = M_n$ and divergence $d(\hat{\theta}_{n+1}, \hat{\theta}_n) = |\hat{\theta}_{n+1} - \hat{\theta}_n|$.

\medbreak

In the context of metamodel fitting, the Integrated Mean Square Error (IMSE) strategy
$$
U_{n+1} \in \underset{\breve{u} \in \Uset}{\arg\min}\, \int_\Uset\sigma_{n+1}^2(u) \, \du \quad \mid \quad U_{n+1} = \breve{u}
$$
can be expressed as an EEM strategy with $\hat{\theta}_n = \mu_n$ and divergence
$$
d(\hat{\theta}_{n+1}, \hat{\theta}_n) = \int_\Uset(\hat{\theta}_{n+1}(u)-\hat{\theta}_n(u))^2\, \du.
$$

\medbreak

A similar reformulation applies to the estimation of a probability of failure $\P_\Uset(\Lambda(f))$, where $\Lambda(f)$ is a critical excursion set and $\P_\Uset$ a probability measure on $\Uset$. The SUR-type criterion
$$
U_{n+1} \in \underset{u \in \Uset}{\arg\min}\, \En\left(\Var_{n+1}(\P_\Uset(\Lambda(\xi))) \mid U_{n+1} = u\right),
$$
where $\Lambda(\xi)$ is the random excursion set under the GP model, is equivalent to an EEM strategy with
$$
\hat{\theta}_n = \En(\P_\Uset(\Lambda(\xi))), \quad 
d(\hat{\theta}_{n+1}, \hat{\theta}_n) = (\hat{\theta}_{n+1} - \hat{\theta}_n)^2.
$$

\medbreak

The mathematical derivations of these equivalences are postponed to \aref{eem:app:eq_eem_sur}.

\medbreak
\begin{remark}
The estimator used in our formulation of expected improvement, following \cite{bect:2019:supermartingale}, differs from the more common one $\tilde{M}_n = \max\{f(U_i) \, : \, i\le n\}$. This modification handles the degenerate Gaussian case, where it may occur that $M_n > \tilde{M}_n$, allowing the strategy to be formulated consistently within the SUR framework.
\end{remark}

\medbreak

However, establishing EEM as a general-purpose principle requires some caution. Consider the simple case where $f = \theta$ is constant, defined over a two-point domain $\Uset = \{u_1, u_2\}$. Assume noisy observations of the form $Z_i = f(U_i) + \delta(U_i)\epsilon_i$, where $\epsilon_i \sim \mathcal{N}(0, 1)$ and $\delta > 0$. If we estimate $\theta$ with $\hat{\theta}_n = \frac{1}{n}\sum_{i=1}^n{Z_i}$ and use the divergence $d(\hat{\theta}_{n+1}, \hat{\theta}_n) = (\hat{\theta}_{n+1}-\hat{\theta}_{n})^2$, then
$$
\En(d(\hat{\theta}_{n+1}, \hat{\theta}_n) \mid U_{n+1} = u) = \frac{1}{(n+1)^2}\left[\E((\hat{\theta}_n-\theta)^2) + \delta^2(u)\right].
$$
As a result, the next point is selected by
$$
U_{n+1} \in \underset{u \in \{u_1, u_2\}}{\arg\max} \, \delta^2(u).
$$

In other words, the criterion favors the most noisy evaluations to estimate the mean. This illustrates that EEM strategies are not universally applicable: additional structure or constraints on the estimator $\hat{\theta}_n$ or the divergence $d$ may be necessary.

\section{Estimation of small quantile sets}
\label{eem:sec:small}

\subsection{Motivation and principle}

As discussed in \autoref{eem:sec:QSI-EEM-approx}, the QSI-EEM criterion does not admit a closed-form expression, to the best of our knowledge. As a result, both the Gaussian expectation and the integral arising in the criterion must be approximated. As shown in \autoref{eem:sec:sanity_check}, the QSI-EEM strategy, combined with the importance sampling scheme from \cite{ait:2024:qsi}, provides accurate estimations of quantile sets of moderate size.

\medbreak

However, this approach becomes ineffective when the target quantile set is small. Consider, for instance, a quantile set with size $\lambda(\Gf)\sim 10^{-8}\cdot\lambda(\X)$. 
A first difficulty is that for such small sets, it is common for the prior $\P_0$ to satisfy $\P_0(x \in \Gxi) \approx 0$ for all $x \in \X$. As a result, under this approximation, $\tilde{\Psi}_{0,r}$ becomes numerically zero for all $x \in \X$, making it unusable as a sampling criterion.
Second, an accurate approximation of the criterion requires a set $\tX_n \subset \X$ containing points whose classification may change following a new batch of evaluations. Given the very small size of $\Gf$ (and thus of $\hGn$), obtaining such points via importance sampling would require an impractically large number of approximations of the conditional probability $\pi_n$.

\medbreak

To address these issues, we draw inspiration from multilevel splitting \citep{Kahn:1951:Splitting} and subset simulation \citep{Au:2001:subset}, and introduce a Sequential Monte Carlo (SMC) framework \citep{delmoral:2006:sequential_mc_samplers,cerou:2012:smc}. Following the ideas of \cite{Li:2012:thesis, bect:2017:bss}, the core idea is to estimate a sequence of nested quantile sets:
$$
\Gamma^1(f) \supset \cdots \supset \Gamma^k(f) \supset \cdots \supset \Gamma^K(f) = \Gamma(f).
$$

Combining the SMC framework with the QSI-EEM criterion yields a strategy alternating between two phases: an \emph{estimation} phase and a \emph{sampling} phase. 
At iteration $k$, starting from $n$ evaluations of $f$, we define an intermediate target $\Gamma^k(f)$ and add evaluation points using the QSI-EEM criterion until the set is estimated with sufficient accuracy. Next, in the resampling phase, the set of points in $\X$ is concentrated around $\Gamma^k(f)$. These two phases are repeated until a satisfactory approximation of the target quantile set $\Gf$ is achieved. The next section provides details on this procedure.

\subsection{Coupling sequential Monte Carlo and QSI-EEM acquisition}

The strategy begins by conditioning the GP prior $\xi$ using $n_0$ points from $\Uset$, typically chosen as a space-filling design.As discussed in \autoref{eem:sec:QSI-EEM-approx}, a finite subset $\tS \subset \S$ with associated weights is selected to approximate the measure $\Ps$. An initial finite set $\tX_0$ is sampled from $\X$. The following steps are repeated for each stage $k = 1, \dotsc, K$. Let $n_k = n_0 + \sum_{j=1}^k N_j r$ denote the total number of evaluations performed up to stage $k$, where $N_j$ is the number of batches sampled during stage $j$. Assume that $\Gamma^{k-1}(f)$ has been satisfactorily estimated after $n_{k-1}$ evaluations of $f$. By convention, we set $\Gamma^0(f) = \X$.

\medbreak

The first step is to define a new intermediary quantile set $\Gamma^{k}(f)$ to be estimated. A decreasing sequence of quantile sets can be defined by setting
\begin{align*}
\Gamma^{k}(f) = \{x \in \X \, : \, P(f(x,s) \in C_ {k}) \le \alpha\},
\end{align*}
where $(C_k)_k$ denotes an increasing sequence of subsets of $\Rset^q$.
For now, assume that the sequence $(C_k)_k$ is given. The details regarding the construction of this sequence are given in \autoref{eem:sec:density}.

\medbreak

\begin{remark}
From this point onward, all criteria and estimators refer to their approximated versions, constructed according to the scheme described in \autoref{eem:sec:QSI-EEM-approx}.
\end{remark}

\medbreak

Given a batch size $r$, $N_k$ batches of $r$ evaluation points are then sampled sequentially using the QSI-EEM criterion targeting $\Gamma^k(f)$:
\begin{equation}
\Psi^k_{n,r}(u_{n+1},\dotsc,u_{n+r}) = \En(\lambda(\hGnr^k\Delta\hGn^k)\, | \, U_{n+i} = u_{n+i}, i = 1,\dotsc,r).
\end{equation}
This criterion is approximated on $\tUset_k = \tX_k \times \tS$, and batches are added until a suitable estimator $\widehat{\Gamma}^k_{n_k}$ of $\Gamma^k(f)$ is obtained, where
\begin{equation}
  \widehat{\Gamma}^k_{n_k} = \{x \in \X \, : \P(\mu_{n_k}(x,S) \in C_k) \le \alpha\}.
\end{equation} 
The number of new batches $N_k$ is not fixed in advance; it is determined by a stopping condition described in \autoref{eem:sec:stop}.

\medbreak

Once this phase is complete, the algorithm proceeds to the sampling step.
Let $q_{n_k}^{k}$ be a target density, built from the information $\mathcal{I}_{n_k}$, and concentrated around $\widehat{\Gamma}^k_{n_k}$ and its boundary.
Particles in $\tX_k$ are resampled using residual resampling \citep{douc:2005:resampling, kuptametee:2022:resampling} with weights proportional to $q_{n_k}^k(x)/q_{n_{k-1}}^k(x)$, and then moved using an adaptive Metropolis-Hastings algorithm (see \aref{eem:app:MH}) to define the next set $\tX_{k+1}$. A new target set $\Gamma^{k+1}(f)$ is then defined as described in \autoref{eem:sec:density}.

\begin{remark}
Note that an alternative approach consists in defining decreasing quantile sets using a decreasing sequence of thresholds $(\alpha_k)_k$. However, this is inconvenient in practice: since the quantization of $\Ps$ has finite support by construction, the empirical average $\sum_{s \in \tS}w_S(s)\mathds{1}_C(\mn(x,s))$ takes only finitely many values. This creates difficulties when attempting to calibrate a target quantile set of given relative size using the density described in \autoref{eem:sec:density}.
\end{remark}
\medbreak
A summary of the algorithm is given in \autoref{eem:table:code}. The choice of the target density, the construction of the sequence of quantile sets, and the stopping criterion are discussed in the next subsection. 
\begin{table}
 
  \caption{Summary of the strategy}
  \label{eem:table:code}

  \noindent\rule{\textwidth}{0.4pt}
  \begin{enumerate}
  \item {\bf Initialization}
    \begin{enumerate}
      \item Set a batch size $r$.
      \item Condition $\xi$ on an initial space-filling design of size $n_0$.
      \item Choose a collection $\tS \in \S$ and associated weights.
    \item Sample a finite set $\tX_1$ in $\X$.
    \end{enumerate}
    \medbreak
    
  \item {\bf Repeat} until a satisfactory approximation of $\Gf$ is reached.
    \begin{enumerate}
    \item {\bf Estimation} phase
      \begin{itemize}
        \item Set $\Gamma^k(f)$ according to \autoref{eem:sec:density}.
        \item Add $N_k$ new batches of evaluations points according to the approximated QSI-EEM criterion on $\tUset_k=\tX_k\times\tS$.
        \item Stop when stopping condition is satisfied (see \autoref{eem:sec:stop}).
      \end{itemize}
    \item {\bf Sampling} phase
      \begin{flushleft}
        \begin{itemize}
        \item Resample points in $\tX_k$ according to $\sum_{x \in \tX_k}\frac{q_{n_{k+1}}^k(x)}{q_{n_k}^k(x)}\delta_x$ (residual resampling).
        \item Define $\tX_{k+1}$ by applying the adaptive Metropolis-Hastings algorithm (\aref{eem:app:MH}) to the resampled points, with target density $q_{n_{k+1}}^k$.
        \end{itemize}
      \end{flushleft}
    \item Increment~$k$.
    \end{enumerate}
    
  \end{enumerate}

  \noindent\ignorespaces\rule{\textwidth}{.4pt}%

\end{table}

\subsection{Target density and choice of the sets $C_k$}
\label{eem:sec:density}

Following \cite{dubourg:2013:metamodel_IS, Li:2012:thesis, bect:2017:bss}, a natural candidate for the target density is
\begin{align*}
q_n^k(x) \propto \Pn(x\in\Gamma^k(\xi)),
\end{align*}
where $\propto$ denotes proportionality.
However, using this density would eliminate the computational advantage over \cite{ait:2024:qsi}, as discussed in \autoref{eem:sec:sanity_check}. Indeed, it lacks a closed-form expression and must be approximated via Monte Carlo simulation of conditional sample paths of $\xi(x, \cdot)$, which is computationally costly.

\medbreak

To retain simplicity and interpretability, we instead propose to use
\begin{align*}
q_n^k(x) \propto \mathds{1}_{\hGn^{k,+}}(x),
\end{align*}
where $\hGn^{k,+}$ is a relaxed estimation of $\Gamma^k(f)$ at step $n$, satisfying $\hGn^k \subset \hGn^{k,+}$.

\medbreak

Such a relaxed estimation can be easily constructed from the posterior quantile functions $\mu_n^{\beta}(x,s) = \mu_n(x,s) + \sn\Phi^{-1}(\beta)$, with $0 <\beta < 1$. For instance, assuming $C_k = (-\infty, T_k)$, one can choose $\beta > 1/2$. Then, setting
\begin{align*}
  \hGn^{k,+} = \{x \in \X \, : \, \P(\mu_n^{\beta}(x,S) \in C_k) \le \alpha\},
\end{align*}
we obtain $\hGn^k \subset \hGn^{k,+}$, since $\mu_n^{\beta}(x,s) \ge \mu_n(x,s)$ for all $(x,s)$. 
In practice, $\beta$ is calibrated so that $\lambda(\hGn^{k,+}) \approx \kappa \cdot \lambda(\hGn^k)$ for a given constant $\kappa$.

\medbreak

\begin{remark}
Using $q_n^k$ as target density turns the Metropolis-Hastings random walk into a constrained Gaussian random walk on $\X$.
\end{remark}

\medbreak

The calibration of the sequence $(C_k)_k$ is straightforward. Given $n$ evaluations of $f$, and a set of points $\tX_k = \{x_k^i, \, i = 1,\dotsc,M\}$ obtained after resampling and displacement via the MCMC scheme (with target density $q_n^k$) described in \aref{eem:app:MH}, we propose to choose $C_k$ such that
\begin{equation}
\frac{1}{M}\sum_{i=1}^M \mathds{1}_{\hGn^{k,+}}(x^i_{k}) \approx\rho,
\end{equation}
with $\rho \in (0,1)$. In other words, the next quantile set is defined so that a proportion $\rho$ of our particles in $\X$ lies in the relaxed set $\hGn^{k,+}$. Thanks to the resampling and movement steps based on $q_n^k$, this ensures that the particle set covers both the interior and the neighborhood of the boundary of the new target quantile set.

\begin{remark}
The number $K$ of quantile sets to estimate is not known in advance. Indeed, the sequence $(C_k)_k$ is constructed adaptively, based on past evaluations of the function $f$.
\end{remark}

\medbreak

\begin{remark}
The construction of $C_k$ and $\hGn^{k,+}$ is problem-specific. The scalar case with $C_k = (-\infty, T_k)$ has been used here for simplicity, but the method readily generalizes to more complex settings, as illustrated in \autoref{eem:sec:rotor}.
\end{remark}

\subsection{Stopping criterion}
\label{eem:sec:stop}

Given $n$ evaluations of $f$ and a quantile set $\Gamma^k(f)$, the number $N_k$ of additional batches needed to obtain a satisfactory approximation is not known in advance. It is therefore necessary to assess the quality of the current estimate in order to decide when to stop adding evaluation points.

\medbreak

We define the expected relative estimation error as
\begin{align*}
\Xi_n^k = \frac{\En(\lambda(\Gamma^k(\xi)\Delta\hGn^k))}{{\lambda(\hGn^k)}}.
\end{align*}
This quantity, however, does not admit a closed-form expression. Similarly to the QSI-EEM criterion, we approximate it as
\begin{align*}
\Xi_n^k \approx \frac{\Psi^k_{n, |\mathds{O}_n|}(\mathds{O}_n)}{\lambda(\hGn^k)},
\end{align*}
where $\mathds{O}_n \subset \Uset$ is a large set of points selected to capture regions of high residual uncertainty. In practice, as detailed in \autoref{eem:sec:implementation}, $\mathds{O}_n$ is obtained by sampling (without replacement) points according to a distribution proportional to $\min(p_n^k(u), 1 - p_n^k(u))$, where $p_n^k(u) = \Pn(\xi(u) \in C_k)$.

\medbreak

The stopping condition is thus defined as
\begin{equation}
  \frac{\Psi_{n, |\mathds{O}_n|}(\mathds{O}_n)}{\lambda(\hGn^k)} \le \tau_k,
\end{equation}
where $\tau_k \in (0,1)$ denotes a stage-dependent tolerance parameter.

\begin{remark}
The tolerance threshold $\tau_k$ may vary across stages of the algorithm, hence the subscript $k$. In particular, as done in the next section, it is reasonable to use a lower tolerance when estimating the final target quantile set.
\end{remark}

\section{Numerical results}
\label{eem:sec:numerical}

\subsection{Implementation}
\label{eem:sec:implementation}

\paragraph{Gaussian process prior} 
The unknown function $f$ is modeled by a multivariate Gaussian process with independent components $\xi_1,\dotsc,\xi_q$ ($q = 1$ in \autoref{eem:sec:synthetic_functions}, $q = 2$ in \autoref{eem:sec:rotor}). Each component is assumed to be a Gaussian process with unknown constant mean and anisotropic Matérn covariance function \citep[see, e.g.,][]{chiles:2012:geostats}. Covariance parameters are re-estimated after each new batch of evaluations using restricted maximum likelihood (ReML) \citep{stein1999interpolation}. The regularity parameter $\nu$ is restricted to $\{1/2, 3/2, 5/2, +\infty\}$ to take advantage of the simplified closed-form expressions of the Matérn kernel for half-integer values. Note that the limiting case $\nu = +\infty$ corresponds to the Gaussian (or squared exponential) covariance function. A nugget effect of $10^{-6}$ is added to the covariance matrix to mitigate numerical instabilities due to ill-conditioning.

\medbreak

\paragraph{Initial design}
The GPs are initially conditioned using a design of size $n_0 = 10(d_\X + d_\S)$, in line with the heuristic of \cite{loeppky2009choosing}, sampled via a maximin Latin hypercube design. The maximin design is approximated by selecting the best among $1000$ independent Latin hypercube samples, based on their minimum pairwise distance.

\medbreak

\paragraph{Parameters of the SMC algorithm}
The initial set $\tX_0$ is built as a random Latin hypercube sample of size $|\tX_0| = 250$. At each stage $k$, the relaxed estimator used to define $q_n^k$ is calibrated with $\kappa \approx 1.1$, and the next quantile set $\Gamma^{k+1}(f)$ is defined such that $\rho \approx 0.35$ (see \autoref{eem:sec:density}).
Particles are resampled using residual resampling \citep{hol:2006:resampling}, and then moved 25 times using the adaptive SMC kernel described in \aref{eem:app:MH}.

\medbreak

\paragraph{Approximation of the criterion}
The QSI-EEM criterion is approximated on a grid $\tX_k^{\mathrm{rand}}\times\tS$, where $\tX_k^{\mathrm{rand}}$ is a random sample of 100 particles drawn uniformly (without replacement) from the current population and $\tS$ is obtained by applying an inverse transformation (with respect to $\Ps$) to the first $2^9 = 512$ points of a Sobol sequence in $[0,1]^{d_\S}$.
Since in our examples the components of $S \sim \Ps$ are independent, the transformation is applied coordinate-wise using the inverse cumulative distribution functions of the marginals of $\Ps$.

\begin{remark}
Such a transformation exists under general conditions \citep{sklar1959fonctions}, and can be constructed via methods such as the Rosenblatt or Nataf transforms \citep{lebrun:2009:transform, Melchers:2017:structural}.
In recent machine learning research, when a large sample from $\Ps$ is available, such transformations can be approximated using normalizing flows \citep[see, e.g.,][]{Kobyzev2020NormalizingFA}.
\end{remark}

\medbreak
\paragraph{Optimization of the criterion}
To reduce the computational cost of optimizing over $\Uset^r$, we adopt a sequential strategy inspired by \cite{sacks:1989:optim_seq,chevalier:2014:fast_parallel_kriging}.
For $b = 1,\dotsc, r$, the $(n + b)$-th evaluation point is selected by optimizing the partial criterion
\begin{align*}
U_{n+b} \in \underset{u \in \Uset}{\arg\min}\, \Psi^k_{n,b}(U_{n+1},\dotsc,U_{n+b-1}, u).
\end{align*}
Each partial criterion is optimized numerically with a continuous solver, initialized at
\begin{align*}
  U_{n+b}^{start} \in \underset{u^{start} \in \Uset^{\circ}_b}{\arg\min}\, \Psi^k_{n,b}(U_{n+1},\dotsc,U_{n+b-1}, u^{start}),
\end{align*}
where $\Uset^{\circ}_b$ is a set of 100 points sampled without replacement from $\tUset_k = \tX_k \times \tS$, according to the misclassification probability $\min(p_n^k(u), 1 - p_n^k(u))$, with $p_n^k(u) = \Pn(\xi(u) \in C_k)$.

\medbreak

\paragraph{Stopping criterion}
Two different thresholds are used depending on whether the current target is an intermediate quantile set or the final one. The criterion is met when the estimated potential variation (see \autoref{eem:sec:stop}) satisfies $\Xi_n^k \le \tau = 1/3$ for an intermediate set, or $\Xi_n^K \le \tau = 1/5$ for the final target set. This quantity is approximated using an inducing set $\mathds{O}_n$ of 250 points sampled without replacement from $\tUset_k$, according to the weights $\min(p_n^k(u), 1 - p_n^k(u))$.

\medbreak

\paragraph{Restart procedure}
To avoid degeneration of the SMC algorithm, a restart procedure is included.
If, after adding a new batch at stage $k$, less than 10\% or more than 75\% of the particles in $\tX_k$ lie in the relaxed set $\hGn^{k,+}$ (see \autoref{eem:sec:density}), the algorithm is restarted from the earliest stage where either this condition or the stopping criterion was violated. This procedure is motivated by the fact that, by construction, $\hGn^{k,+}$ is expected to contain about $\rho = 35\%$ of the points in $\tX_k$.
Triggering the restart condition suggests a potential loss of information or misalignment in the estimation process at previous stages.

\medbreak

All numerical experiments are performed using Matlab v2022a and STK v2.8.1 \citep{STK}.

\subsection{Synthetic functions}
\label{eem:sec:synthetic_functions}

We begin with three QSI problems based on synthetic functions, all with scalar outputs and noiseless observations.
To assess the performance of our strategy, we report the number of batches required by the QSI-EEM method and the relative estimation error $\frac{\lambda(\hat{\Gamma}_N\Delta\Gf)}{\lambda(\Gf)}$.
Each experiment is repeated over 100 independent runs, with varying initial designs and batch sizes $r \in \{1, 2, 3\}$.

\medbreak

\paragraph{First case}
We consider an adaptation of the piston simulation function \citep{piston}, defined by
\begin{align*}
f_1(u) = 2\pi\sqrt{\frac{x_1}{K+x_2^2\frac{s_1x_3s_2}{x_4V}}},
\end{align*}
with $V = \frac{x_2}{2K}\left(\sqrt{A^2 + 4K\frac{s_1x_3}{x_4}s_2} - A\right)$, $A = s_1x_2 + 19.62x_1 - \frac{Kx_3}{x_2}$, and $K = 1000$.

\medbreak

This function models the cycle time (in seconds) of a piston as a function of its weight $x_1$, surface area $x_2$, initial gas volume $x_3$, gas temperature $x_4$, ambient temperature $s_1$, atmospheric pressure $s_2$, and spring coefficient $K$. 

\medbreak

The domains of definition are $\X = [30,60]\times[0.005,0.02]\times[0.002, 0.01]\times [340, 360]$ and $\S = [90000, 110000]\times[290, 296]$. The distribution $\Ps$ on $\S$ is taken as the product of two rescaled Beta distributions $\mathcal{B}(1/2,1/2)$ (i.e., arcsine distributions).
We consider $C = (-\infty, 1.12]$ and $\alpha = 0.05$.
Under this setting, the quantile set satisfies $\lambda(\Gf)\approx 10^{-5}\cdot\lambda(\X)$.

\medbreak

\paragraph{Second case}
The second example uses the 7-dimensional Trid function \citep{trid}, defined by
\begin{align*}
f_2(u) = \sum_{i=1}^7(u_i-1)^2 - \sum_{i=2}^7u_iu_{i-1},
\end{align*}
with $x = (u_1, \dotsc, u_4) \in \X = [-49, 49]^4$, $s = (u_5, u_6, u_7) \in \S = [-49, 49]^3$, and $\Ps = \mathcal{U}(\S)$ the uniform distribution on $\S$.
We take $C = [4700, +\infty)$ and $\alpha = 0.10$. This yields a quantile set of size $\lambda(\Gf)\approx 10^{-6}\cdot\lambda(\X)$.

\medbreak

\paragraph{Third case}
The final example is based on the OTL circuit function \citep{BenAri:2007:otl}, defined over $\X = [50, 150]\times[25, 70]\times[0.5, 3]\times[1.2, 2.5]\times[0.25, 1.2]$ and $\S = [-50, 300]$ by
\begin{align*}
f_3(x,s) = \frac{(h(x) + 0.74)s(x_5+9)}{s(x_5+9)+x_3} + \frac{11.35x_3}{s(x_2+9)+x_3} + \frac{0.74x_3s(x_5+9)}{(s(x_5+9)+x_3)x_4},
\end{align*}
with $h(x) = 12\frac{x_2}{x_1+x_2}$.
\medbreak
We use a Gaussian distribution for $\Ps$, with mean $175$ and variance $50$, truncated to $\S$.  We set $C = [2.65, +\infty)$ and $\alpha = 0.05$. This results in a quantile set of size $\lambda(\Gf) \approx 10^{-8}\cdot \lambda(\X)$.

\medbreak

This function models the midpoint voltage of a push-pull circuit, given resistance values $x_1,\dotsc,x_5$ and current gain $s$.
The problem can be interpreted as finding resistance values such that the voltage remains below 2.65 with 95\% probability over the distribution of current gain.

\medbreak

Experimental results are shown in \autoref{eem:fig:scatter_piston} to \autoref{eem:fig:scatter_otl}.

\begin{figure}
  \begin{adjustwidth}{-2cm}{-2cm}
  \centering
  \includegraphics[width=7.5cm]{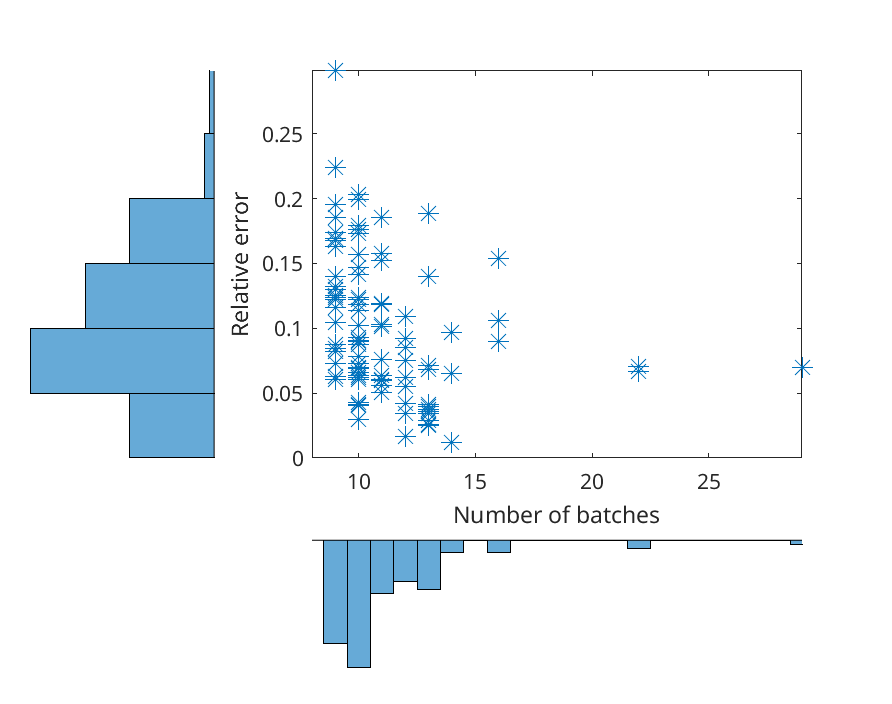}\\
  \includegraphics[width=7.5cm]{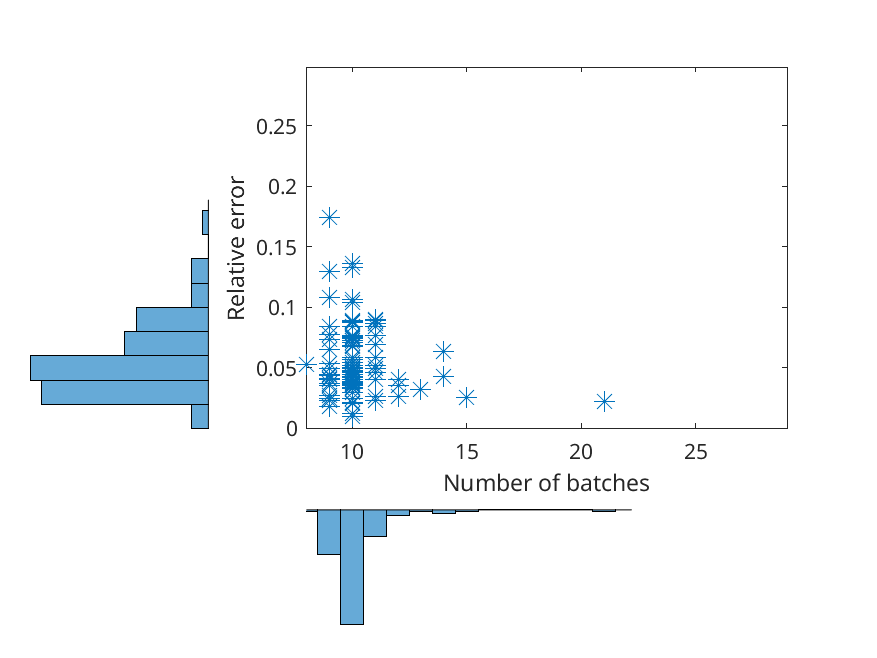}
  \includegraphics[width=7.5cm]{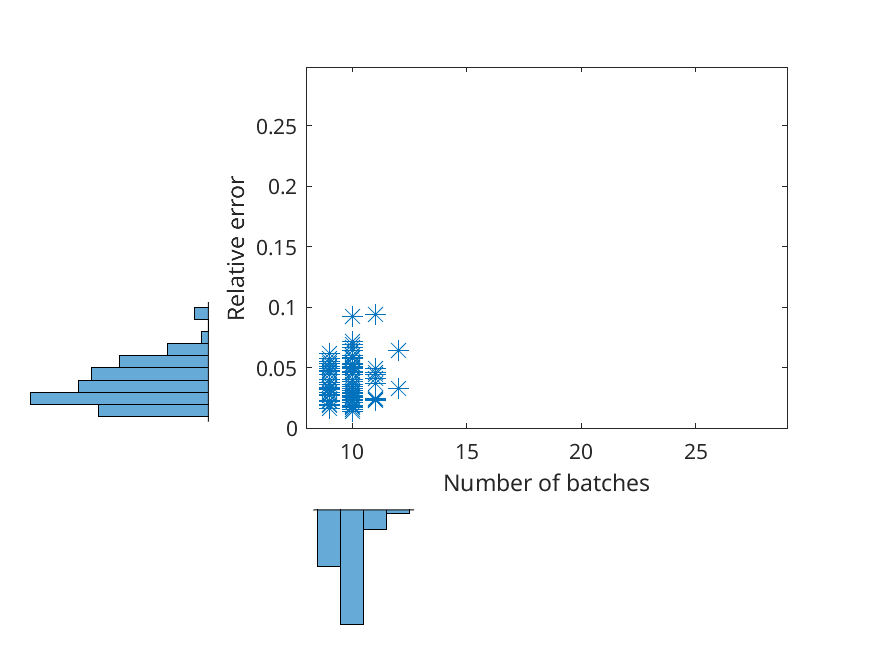}
  \caption{Scatter plots (and distributions) of the relative error against the number of batches on the piston simulation function example, for different batch sizes $r$ (top to bottom, left to right: $r=1,2,3$) and a $100$ runs of the algorithm.}
  \label{eem:fig:scatter_piston}
  \end{adjustwidth}
\end{figure}

\begin{figure}
  \begin{adjustwidth}{-2cm}{-2cm}
  \centering
  \includegraphics[width=7.5cm]{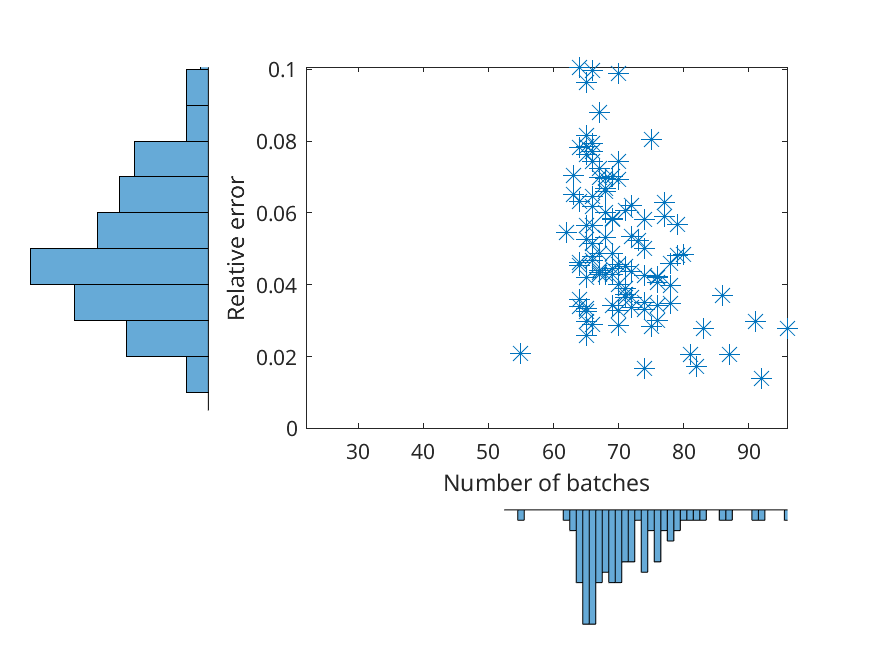}\\
  \includegraphics[width=7.5cm]{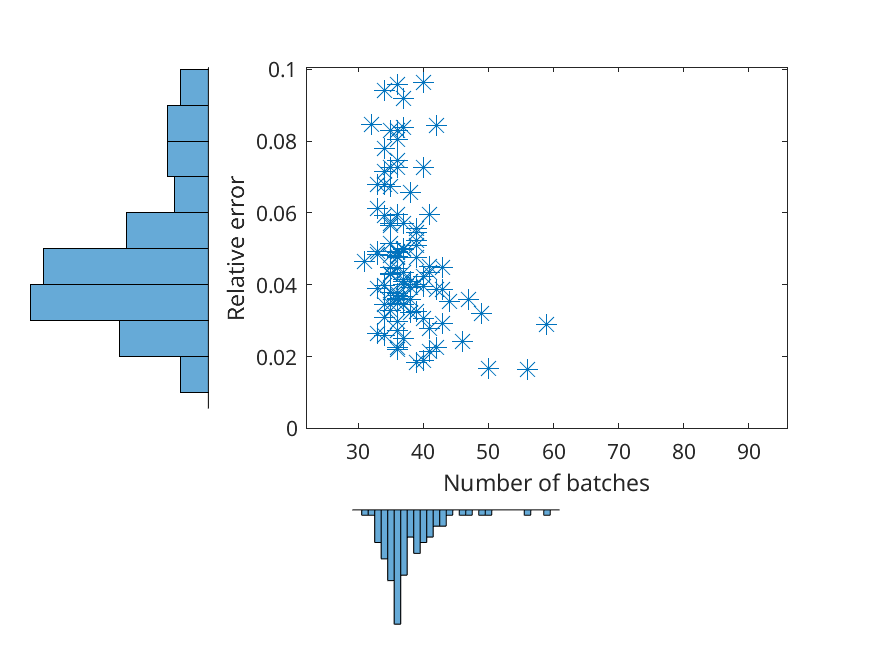}
  \includegraphics[width=7.5cm]{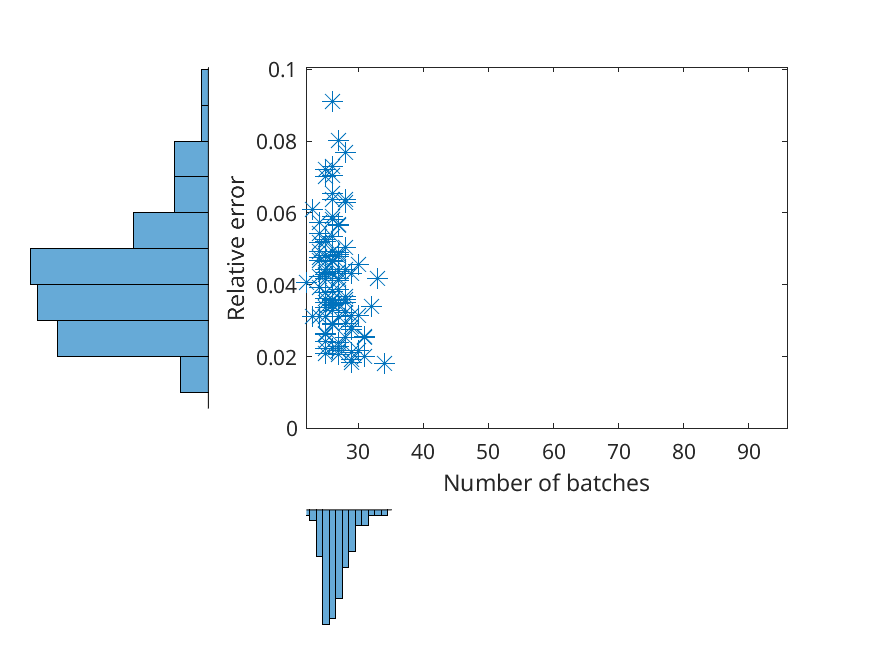}
  \caption{Scatter plots (and distributions) of the relative error against the number of batches on the $7$-Trid example, for different batch sizes $r$ (top to bottom, left to right: $r=1,2,3$) and a $100$ runs of the algorithm.}
  \label{eem:fig:scatter_trid}
  \end{adjustwidth}
\end{figure}

\begin{figure}
  \begin{adjustwidth}{-2cm}{-2cm}
  \centering
  \includegraphics[width=7.5cm]{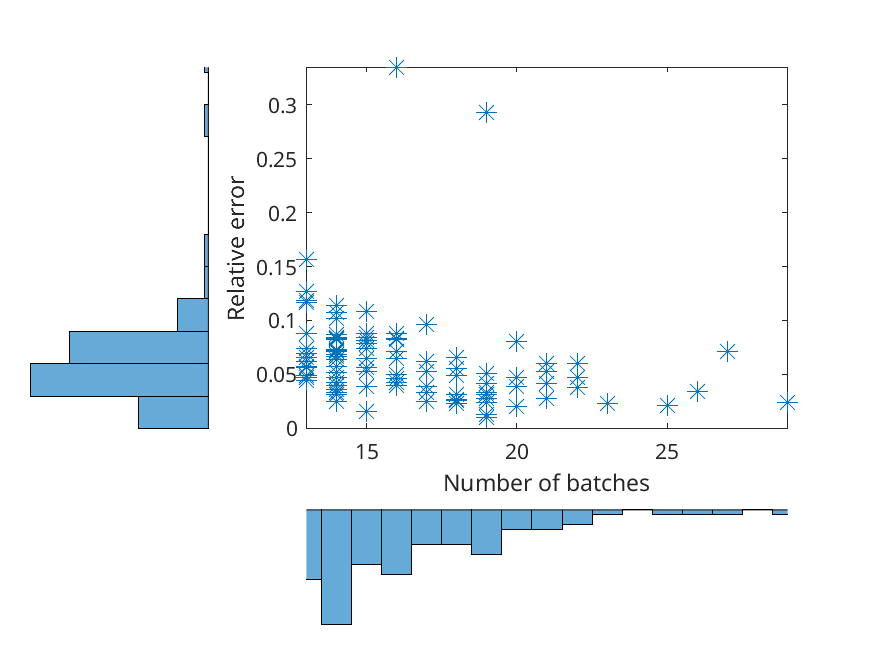}\\
  \includegraphics[width=7.8cm]{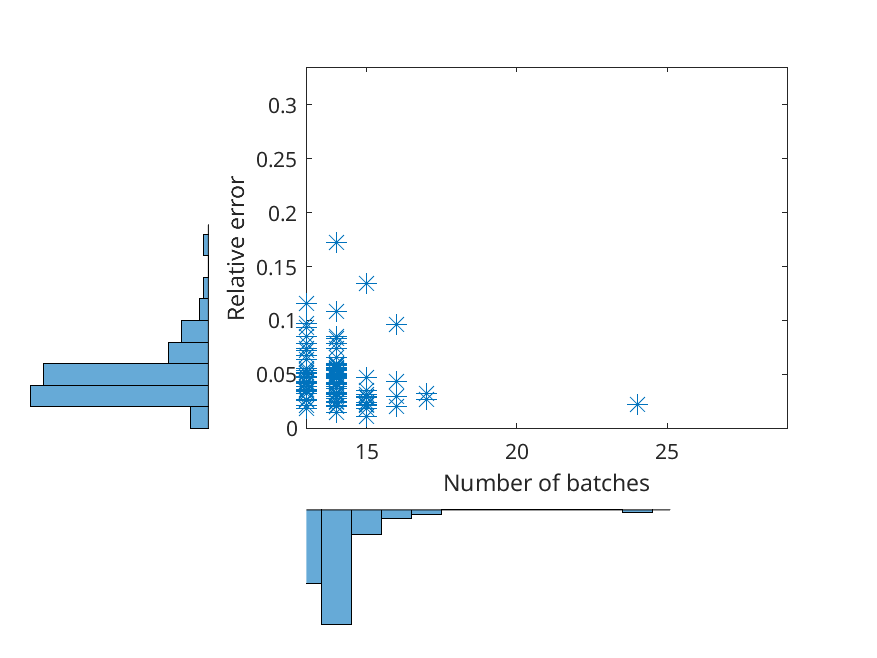}
  \includegraphics[width=7.5cm]{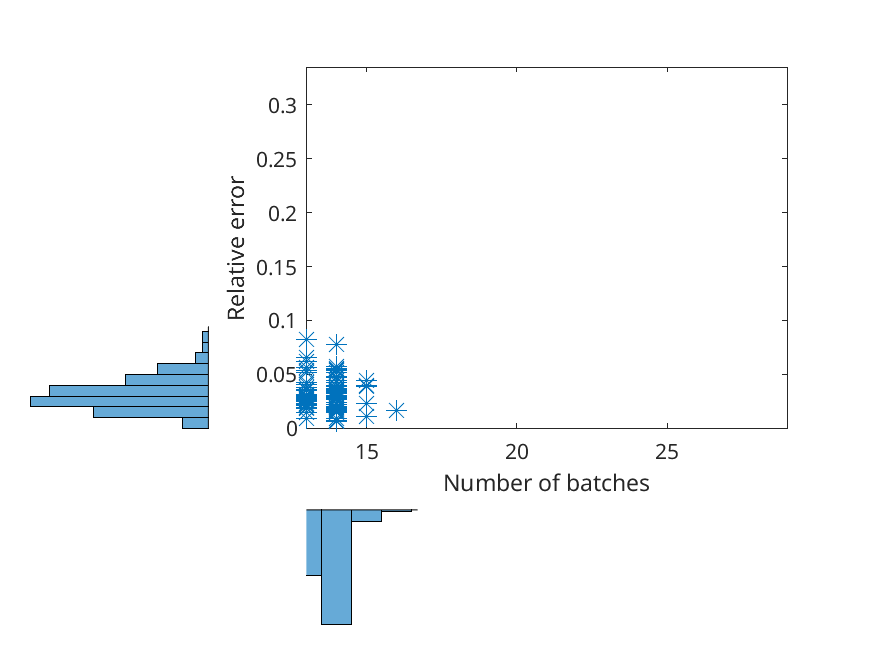}
  \caption{Scatter plots (and distribution) of the relative error against the number of batches on the OTL example, for different batch sizes $r$ (top to bottom, left to right: $r=1,2,3$) and a $100$ runs of the algorithm.}
  \label{eem:fig:scatter_otl}
  \end{adjustwidth}
\end{figure}

\medbreak

As a benchmark method, we include a comparison between our strategy and the Bayesian Subset Simulation (BSS) algorithm \citep{bect:2017:bss}, applied to the estimation of the set
\begin{align*}
\Lambda(f) = \{u \in \Uset \, : \, f(u) \in C\} \subset \X\times\S
\end{align*}
We remind, as discussed in \autoref{eem:sec:related}, that a metamodel accurately classifying the points $u \in \Uset$ with respect to membership in $\Lambda(f)$ can be used to infer membership of $x \in \X$ in $\Gf$.
We use the authors’ Matlab implementation\footnote{\url{https://github.com/stk-kriging/contrib-bss}}, with default parameters and $m = 5000$ particles in $\Uset$, adapting only the GP parameter estimation to match our QSI-EEM implementation.
The experiments are repeated over 100 independent runs, using the same initial designs as in the QSI-EEM experiments.

\medbreak

Results are shown in \autoref{eem:fig:scatter_bss} and summarized in \autoref{eem:table:piston} to \autoref{eem:table:otl}.

\medbreak 
We observe that the QSI-EEM strategy, combined with the SMC framework, provides highly accurate estimates of the quantile sets across the three synthetic examples. While the total number of evaluation points increases with batch size $r$, the number of batches queried tends to decrease slightly. The average and worst-case relative errors also decrease with batch size, likely due to the increased number of evaluations per stage. These results support, when computationally feasible, the use of parallel evaluations of the simulator to construct batch sequential designs.

\begin{figure}
  \begin{adjustwidth}{-2cm}{-2cm}
  \centering
  \includegraphics[width=7.5cm]{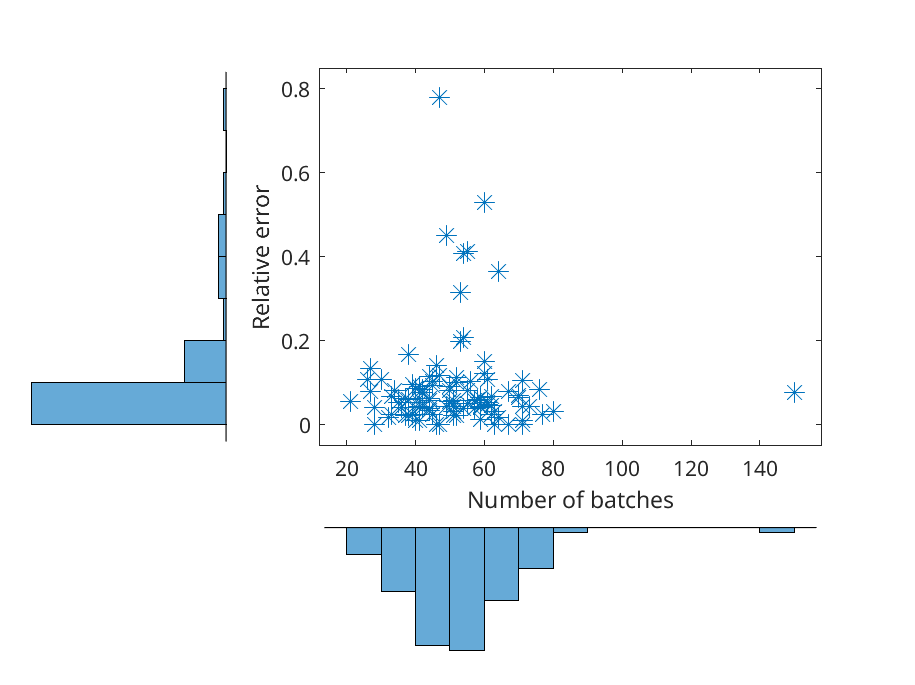}\\
  \includegraphics[width=7.8cm]{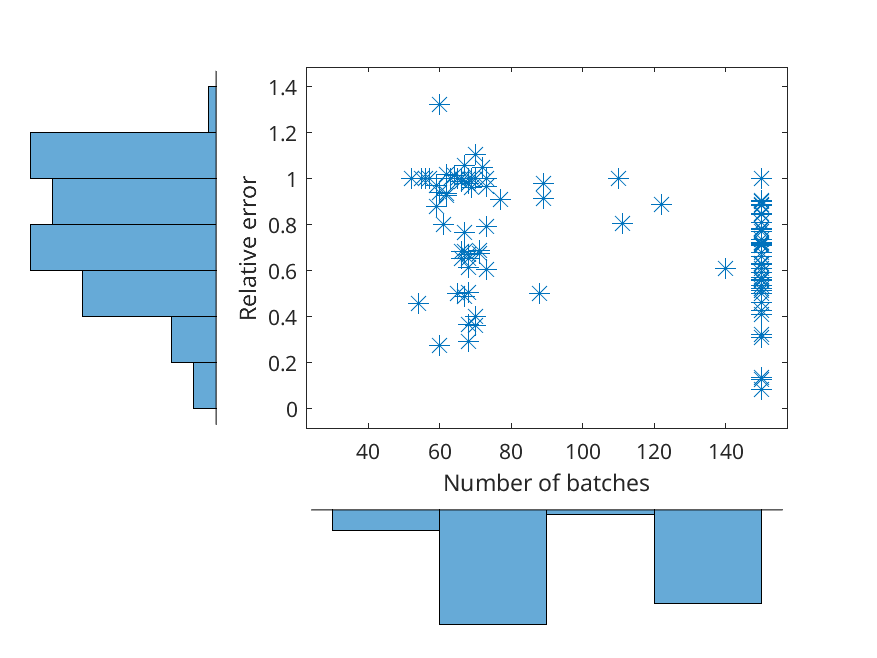}
  \includegraphics[width=7.5cm]{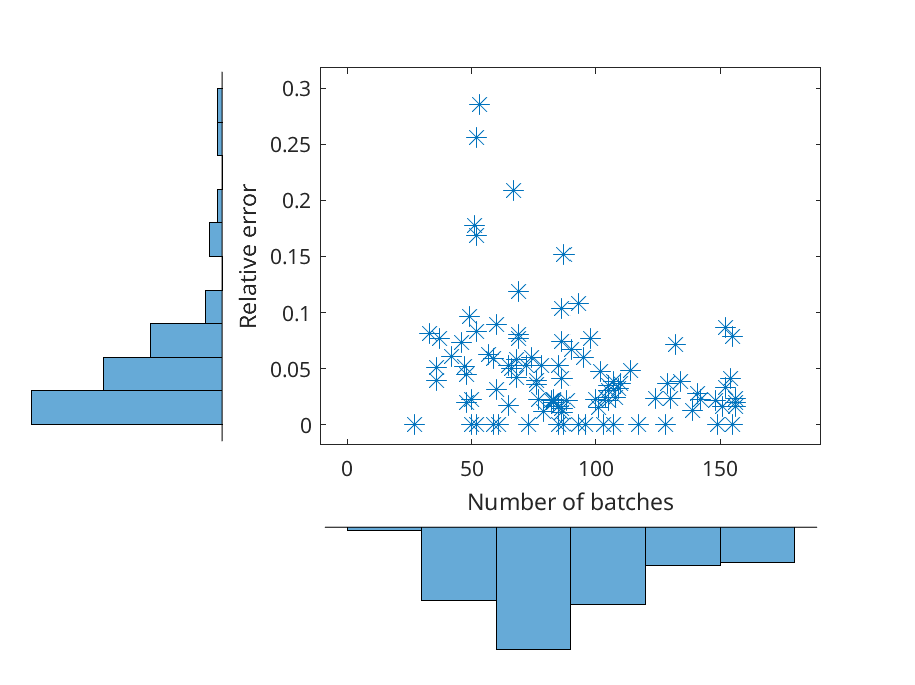}
  \caption{Scatter plots (and distributions) of the relative error against the number of batches using the BSS strategy on a $100$ independent runs. (top to bottom, left to right: $f_1$, $f_2$, $f_3$.)}
  \label{eem:fig:scatter_bss}
  \end{adjustwidth}
\end{figure}

\medbreak

The results obtained with the BSS algorithm indicate either poor performance or comparable performance at the cost of a significantly larger number of queried points.
In particular, for the second test case $f_2$, \autoref{eem:table:trid} shows that BSS requires a very large number of evaluations and often fails to construct a surrogate model capable of recovering the quantile set with acceptable accuracy. In some of the worst-case runs, the relative estimation error exceeds 1, indicating highly inaccurate estimates. This behavior can be attributed to the fact that BSS targets the entire set $\Lambda(f)$, rather than the specific regions of interest for estimating $\Gf$. Furthermore, the stopping condition in BSS is based on the estimation error of $\Lambda(f)$, which may poorly reflect the error on $\Gf$.
By contrast, BSS performs more consistently on the first and third test cases ($f_1$ and $f_3$), as shown in \autoref{eem:table:piston} and \autoref{eem:table:otl}. However, although the average estimation accuracy is comparable to that of the QSI-EEM strategy on these cases, it is achieved at the cost of a substantially higher number of evaluations.

\begin{table}[ht]
  \begin{adjustwidth}{-3cm}{-3cm}
  \centering
  \footnotesize
  \begin{tabular}{|l|c|c|c|c|}
    \hline
    & QSI-EEM ($r=1$) & QSI-EEM ($r=2$) & QSI-EEM ($r=3$) & BSS \\
    \hline
    No. of batches & 11.2 \scriptsize (9 -- 29) & 10.2 \scriptsize(8 -- 21) & 9.8 \scriptsize(9 -- 12) & 51.4 \scriptsize(21 -- 150) \\
    \hline
    No. of points & 11.2 \scriptsize(9 -- 29)  & 20.4 \scriptsize(16 -- 42) & 29.5 \scriptsize(27 -- 36) & 51.4 \scriptsize(21 -- 150)\\
    \hline
    Rel. error & 0.100 \scriptsize(0.012 -- 0.284) & 0.056 \scriptsize(0.010 -- 0.174) & 0.037 \scriptsize(0.014 -- 0.094) & 0.092 \scriptsize(0 -- 0.778) \\
    \hline
\end{tabular}
    \caption{Summary of QSI-EEM and BSS metrics for case 1. Average (min - max).}
  \label{eem:table:piston}
\end{adjustwidth}
\end{table}

\begin{table}[ht]
  \begin{adjustwidth}{-3cm}{-3cm}
  \centering
  \footnotesize
  \begin{tabular}{|l|c|c|c|c|}
    \hline
    & QSI-EEM ($r=1$) & QSI-EEM ($r=2$) & QSI-EEM ($r=3$) & BSS \\
    \hline
    No. of batches & 70.7 \scriptsize(55 -- 96) & 37.7 \scriptsize(31 -- 59) & 26.6 \scriptsize(22-- 34) & 100.5 \scriptsize(52 -- 150) \\
    \hline
    No. of points & 70.7 \scriptsize(55 -- 96)  & 75.5 \scriptsize(62 -- 118) & 79.8 \scriptsize(66 -- 102) & 100.5 \scriptsize(52 -- 150) \\
    \hline
    Rel. error & 0.050 \scriptsize(0.014 -- 0.100) & 0.047 \scriptsize(0.016 -- 0.096) & 0.041 \scriptsize(0.018 -- 0.091) & 0.751 \scriptsize(0.086 -- 1.322)\\
    \hline
\end{tabular}
    \caption{Summary of QSI-EEM and BSS metrics for case 2. Average (min - max).}
  \label{eem:table:trid}
\end{adjustwidth}
\end{table}


\begin{table}[ht]
  \begin{adjustwidth}{-3cm}{-3cm}
  \centering
  \footnotesize
    \begin{tabular}{|l|c|c|c|c|}
        \hline
        & QSI-EEM ($r=1$) & QSI-EEM ($r=2$) & QSI-EEM ($r=3$) & BSS \\
        \hline
        No. of batches  & 16.5 \scriptsize(13 -- 29) & 14.0 \scriptsize(13 -- 24) & 13.7 \scriptsize(13 -- 16) & 90.1 \scriptsize(27 -- 156)\\
        \hline
        No. of points  & 16.5 \scriptsize(13 -- 29) & 28.0 \scriptsize(26 -- 48) & 41.2 \scriptsize(39 -- 48) & 90.1 \scriptsize(27 -- 156)\\
        \hline
        Rel. error  & 0.062 \scriptsize(0.010 -- 0.334) & 0.047 \scriptsize(0.011 -- 0.172) &0.032 \scriptsize(0.007 -- 0.082) & 0.047 \scriptsize(0 -- 0.286)\\
        \hline
    \end{tabular}
    \caption{Summary of QSI-EEM and BSS metrics for case 3. Average (min - max).}
  \label{eem:table:otl}
  \end{adjustwidth}
\end{table}

\subsection{Application to ROTOR37}
\label{eem:sec:rotor}

We now consider the ROTOR37 compressor model \citep{reid:1978:rotor37}.

\medbreak

A key challenge for engine manufacturers is to design rotor compressors---which increase the density of a gas within the engine---that are both efficient and meet performance constraints with high probability. In the case of the ROTOR37 model, the goal is to control the probability that the mass flow or pressure ratio deviates from a given range around baseline values.

\medbreak

We model the simulator as a function $f : \X\times\S \to \Rset^3$, with $\X = [0,1]^{13}$ and $\S = [0,1]^5$, describing the behavior of a single-stage transonic axial compressor rotor (see \autoref{eem:fig:rotor}) under a given set of inputs. We assume that $\S$ is endowed with the uniform distribution $\mathcal{U}(\S)$. 

\medbreak

The input space $\X$ corresponds to geometric and design variables (e.g., camber, thickness) of the rotor blades, while $\S$ represents manufacturing uncertainties (e.g., blade deviations).
The output of the simulator consists of three variables: the mass flow ($f_1$), the pressure ratio ($f_2$), and the isentropic efficiency ($f_3$).
In the following, we focus exclusively on mass flow $f_1$ and pressure ratio $f_2$.

\begin{remark}
The simulator used is in fact a Gaussian metamodel of the ROTOR37 numerical simulator, based on normalized inputs and built using RobustGaSP \citep{RobustGaSP} with a large space-filling input design.
\end{remark}

\medbreak

Our objective is to estimate the set of feasible designs. More precisely, we aim to identify designs for which, under the distribution of manufacturing uncertainties, both $f_1$ and $f_2$ remain close to their baseline values with high probability.
We define $\Gf$ with $\alpha = 0.05$ and
\begin{align*}
C & = \left\{(z_1, z_2) \in \Rset^2 \, : \, \frac{|z_1 - b_1|}{|b_1|} >\mathrm{tol} \quad \text{or} \quad \frac{|z_2 - b_2|}{|b_2|} >\mathrm{tol}\right\},
\end{align*}
with baseline values $b_1 = 0.655$ and $b_2 = 0.582$, and a tolerance parameter $\mathrm{tol} = 0.18$ around these targets. This setup yields a quantile set of size $\lambda(\Gf) \approx 10^{-8} \cdot \lambda(\X)$.

\begin{figure}
  \center
  \includegraphics[width=10cm]{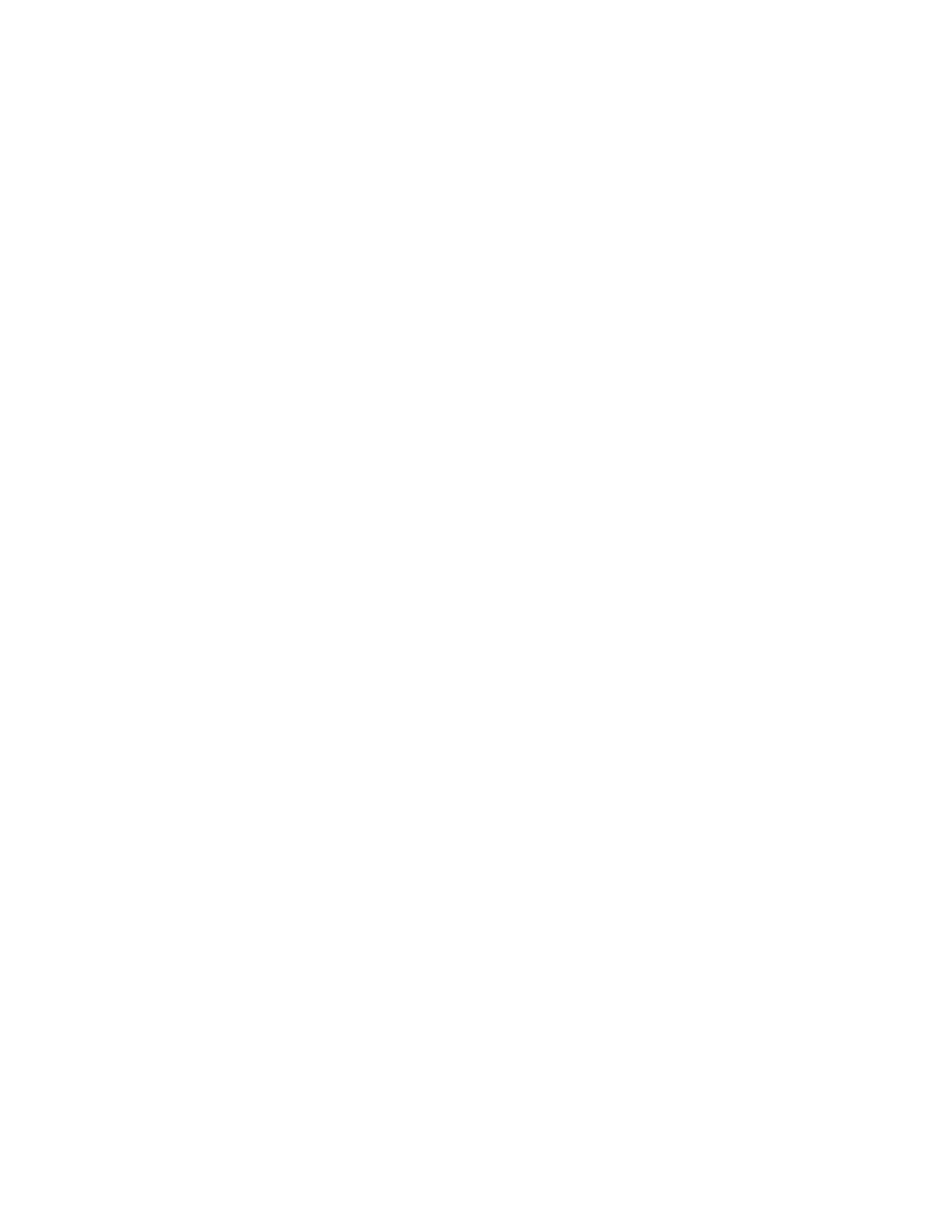}
  \caption{Illustration of the compressor rotor modeled by the simulator (source: US National Archives).}
\label{eem:fig:rotor}
\end{figure}

\medbreak

For the SMC component of our strategy, defining suitable target densities $q_n^k$ is more involved than in the synthetic examples of \autoref{eem:sec:synthetic_functions}, due to the shape of the region $C$.
Let $\mu_n^1$ and $\mu_n^2$ be the posterior means of the two independent GPs modeling the simulator outputs, and $\sigma_n^{2,1}$, $\sigma_n^{2,2}$ the corresponding posterior variances. Let $\beta > 1/2$ be a tuning parameter.
We define the modified mean function $\mu_n^\beta = (\mu_n^{\beta,1}, \mu_n^{\beta,2})$ by:
\begin{align*}
\mu_n^{\beta, i}(u) = \begin{cases}
\min\left(\mu_n^i(u) + \sqrt{\sigma_n^{2,i}(u)}\Phi^{-1}(\beta), b_i\right) & \quad \text{if} \quad \mu_n^i(u) < (1-\mathrm{tol})b_i,\\
\max\left(\mu_n^i(u) - \sqrt{\sigma_n^{2,i}(u)}\Phi^{-1}(\beta), b_i\right) & \quad \text{if} \quad \mu_n^i(u) > (1+\mathrm{tol})b_i,\\
\mu_n^i(u) & \quad \text{otherwise},
\end{cases}
\end{align*}
for $i = 1,\, 2$. Following the procedure of \autoref{eem:sec:density}, we define $q_n^k = \mathds{1}_{\hGn^{k,+}}$, where the relaxed estimator $\hGn^{k,+}$ is given by
\begin{align*}
\hGn^{k,+} = \{x \in \X \, : \, \P(\mu_n^\beta(x, S) \in C_k) \le \alpha\}.
\end{align*}

As in the previous section, we repeat the QSI-EEM procedure over 100 independent runs using a batch size of $r = 5$, starting from different initial designs. A minor implementation change is made by setting the initial design size to $n_0 = 5(d_\X + d_\S) = 90$.

\medbreak

The results are shown in \autoref{eem:fig:scatter_rotor}. We observe that our strategy succeeds in retrieving the small quantile set of relative size on the order of $10^{-8}$ within 18 to 28 batches, requiring between 90 and 140 additional evaluations beyond the initial design. In most runs, the relative error is below or equal to 0.3, with a maximum near 0.4.
From \autoref{eem:fig:rotor_histo}, we can appreciate that, for each of the $100$ independent runs, the last batch of $5$ points sampled appears (at least marginally) to be in or close to the target quantile set. \autoref{eem:fig:rotor_histo} shows that in all 100 runs, the final batch of five points tends to lie within or close to the target quantile set. Furthermore, as seen in \autoref{eem:fig:parallel_rotor}, the estimated quantile set, although not identical to $\Gf$, is geometrically close, suggesting that the strategy captures the overall structure of the target set.


\begin{figure}
  \center
\includegraphics[width=10cm]{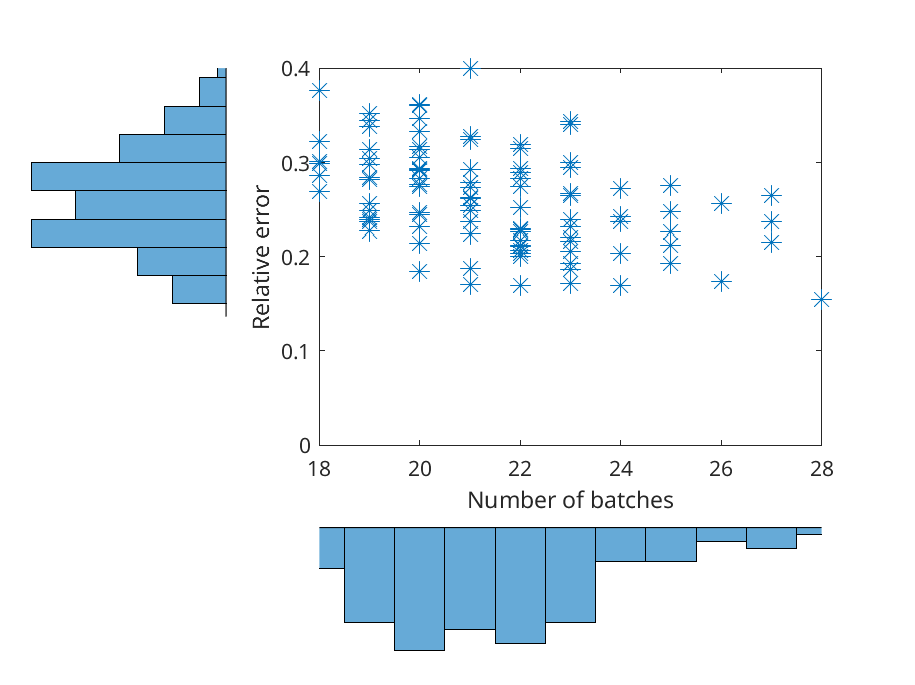}
\caption{Scatter plots (and distributions) of the relative error against the number of batches (with $r=5$) on the ROTOR37 case for a $100$ independent runs of the algorithm.}
\label{eem:fig:scatter_rotor}
\end{figure}

\begin{figure}
  \begin{adjustwidth}{-1.5cm}{-1.5cm}
    
  \centering
  \includegraphics[width=5cm]{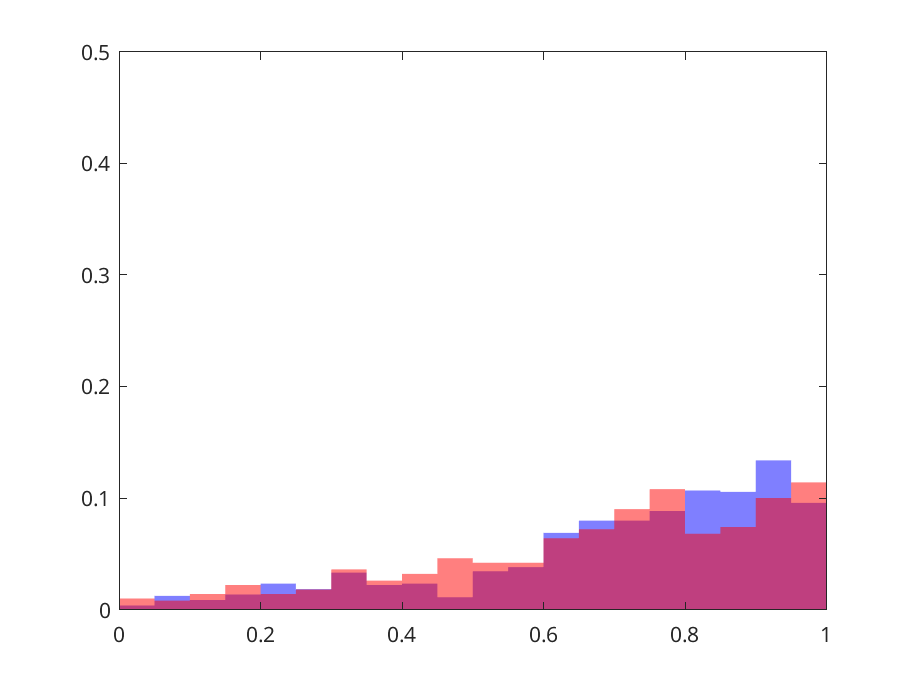}\\
  \includegraphics[width=5cm]{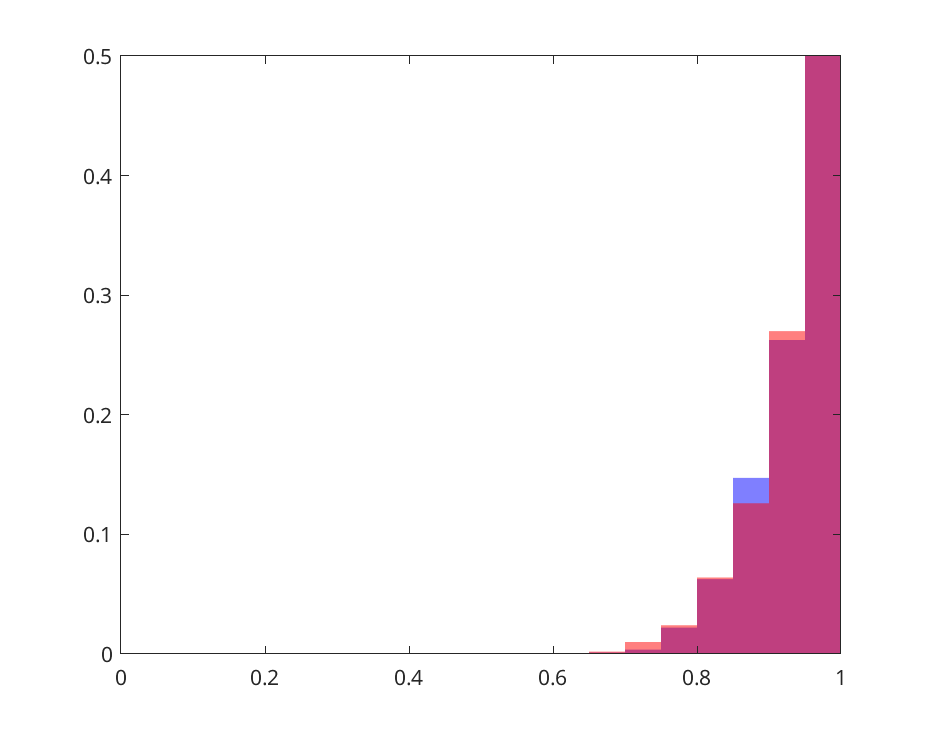}
  \includegraphics[width=5cm]{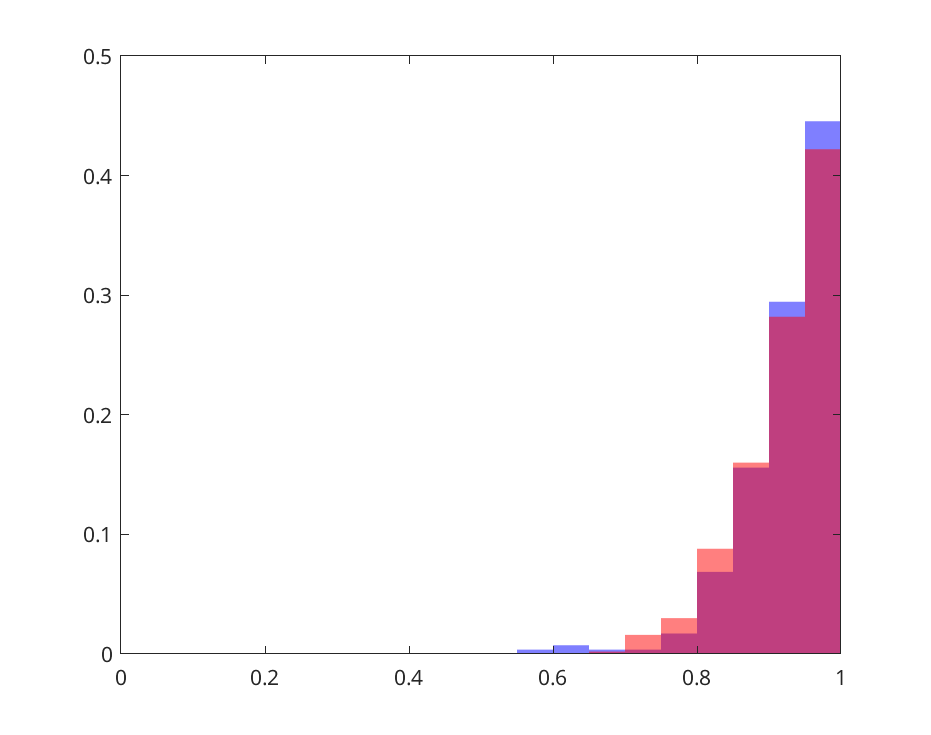}
  \includegraphics[width=5cm]{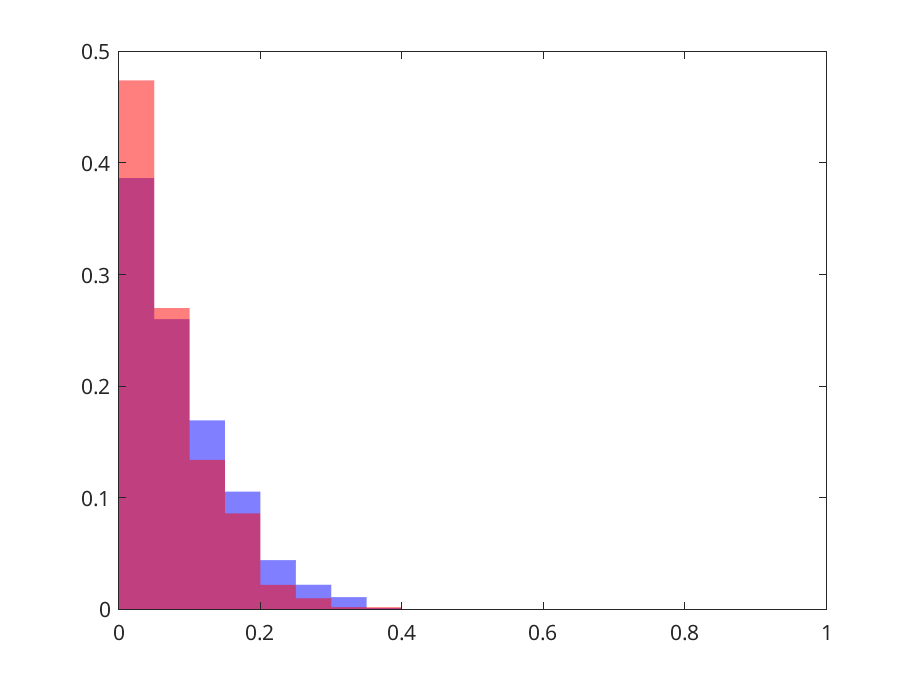}\\
  \includegraphics[width=5cm]{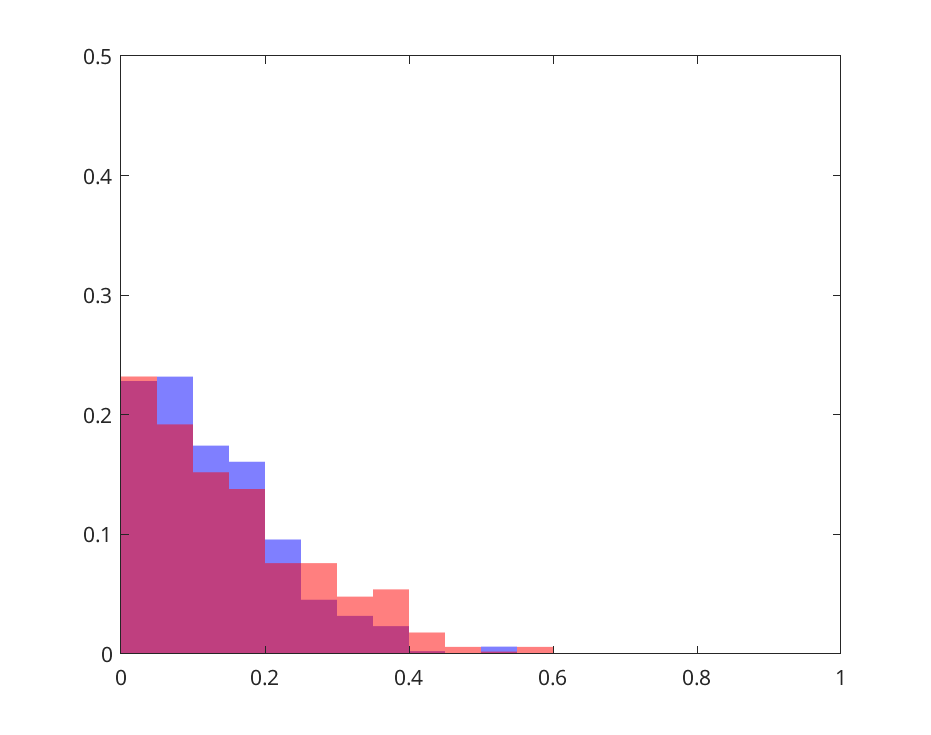}
  \includegraphics[width=5cm]{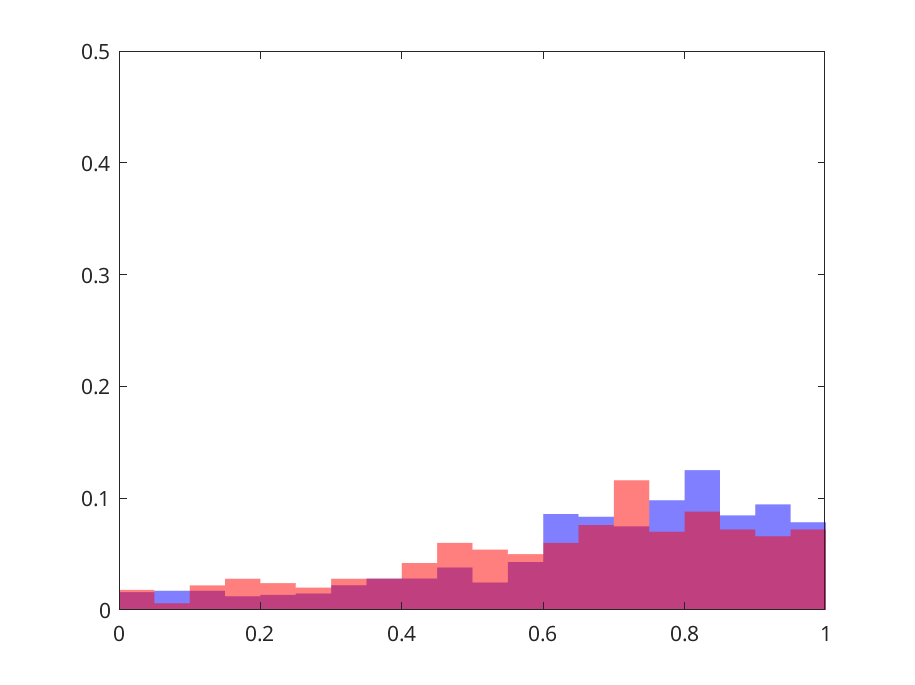}
  \includegraphics[width=5cm]{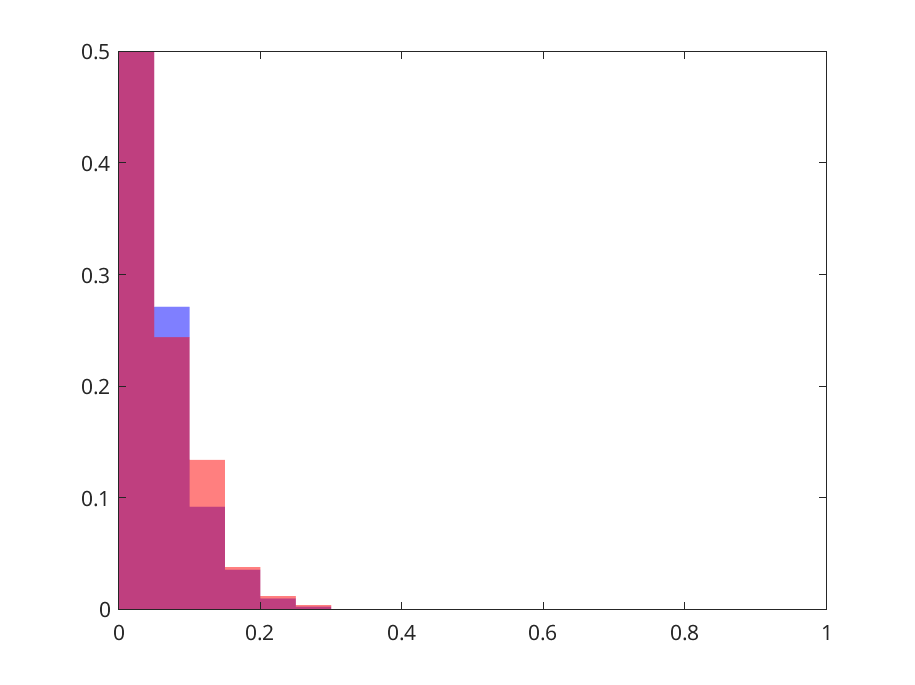}\\
  \includegraphics[width=5cm]{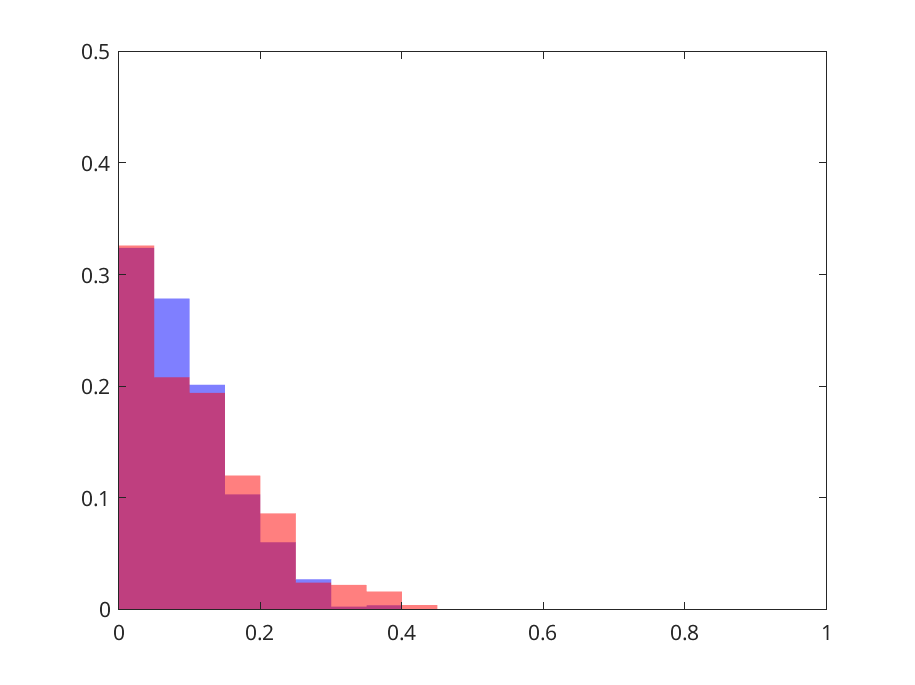}
  \includegraphics[width=5cm]{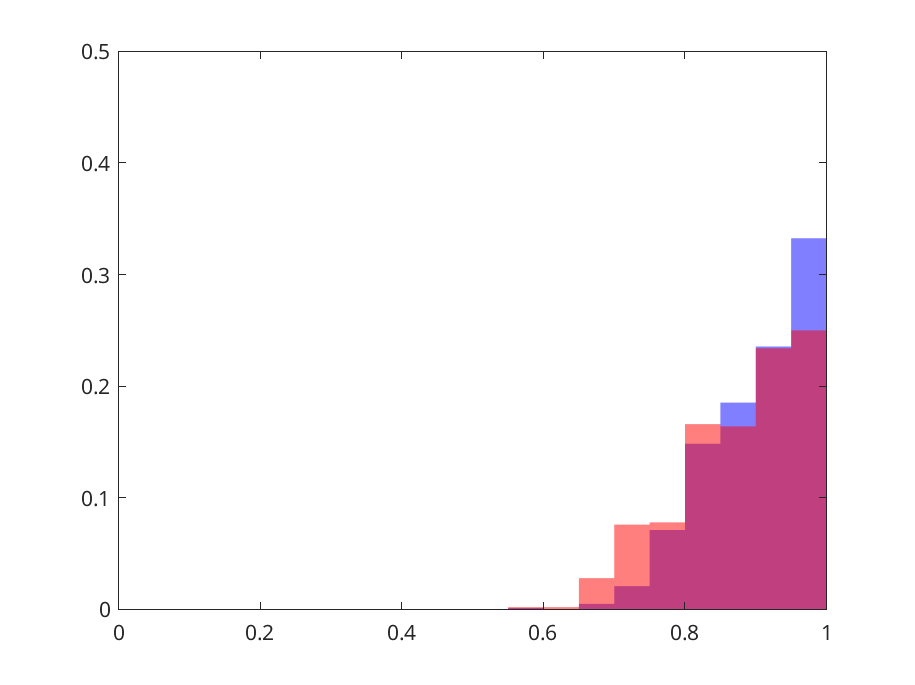}
  \includegraphics[width=5cm]{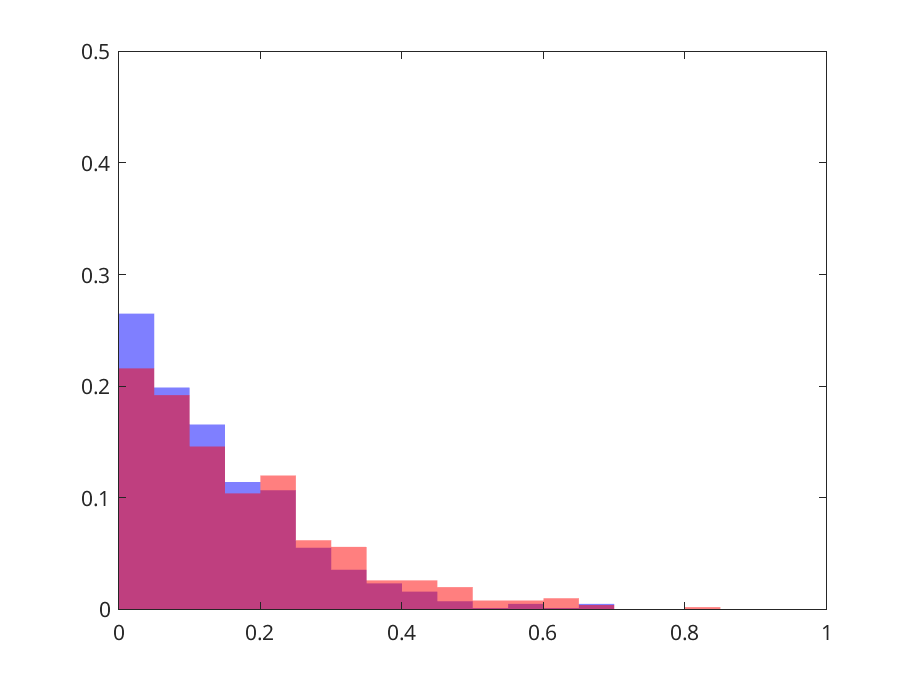}\\
  \includegraphics[width=5cm]{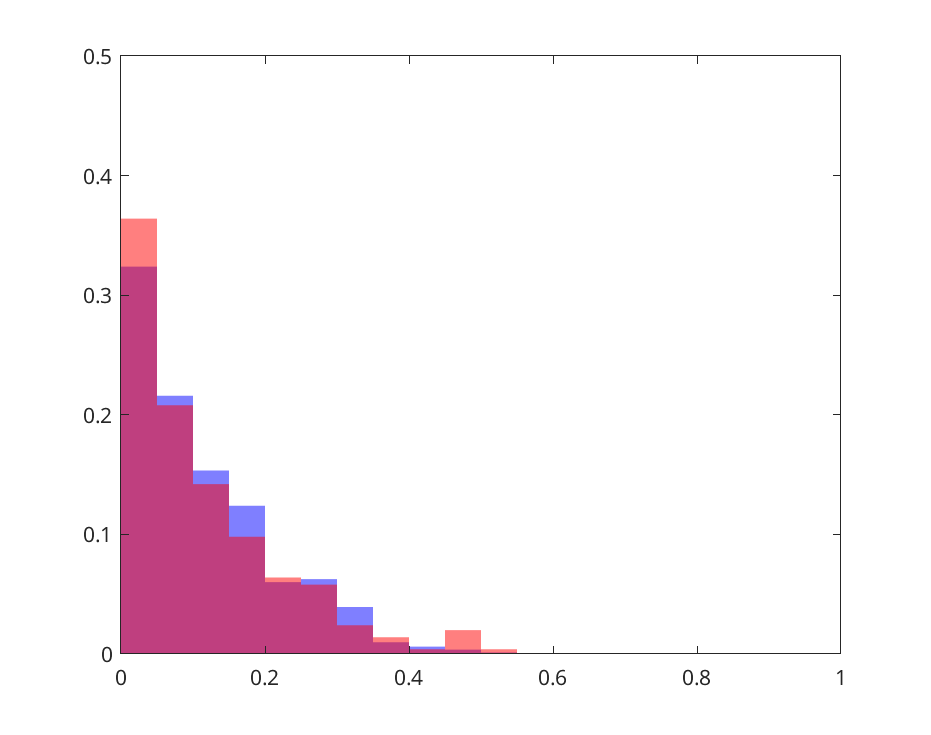}
  \includegraphics[width=5cm]{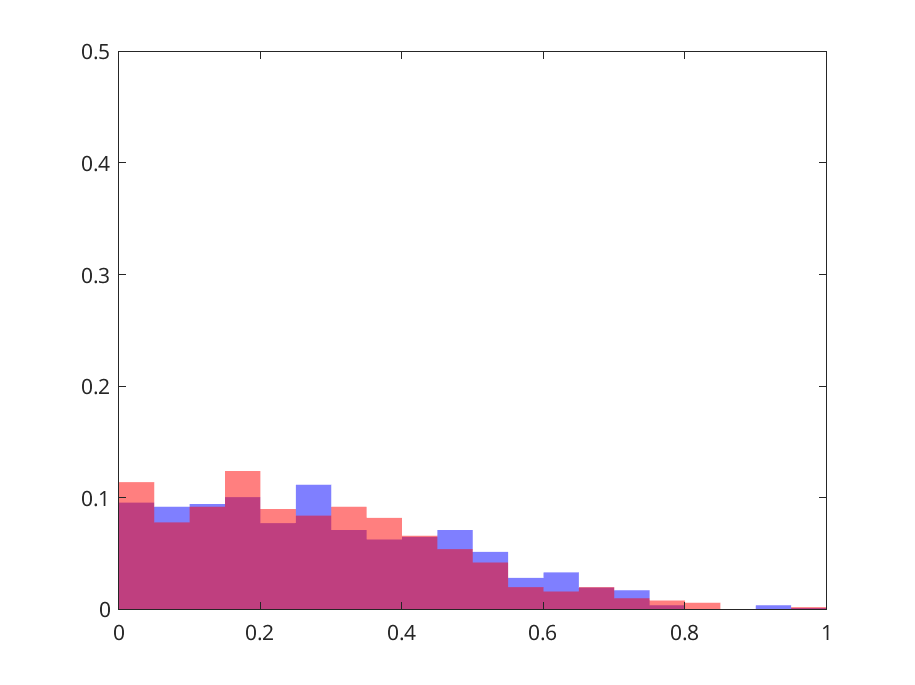}
  \includegraphics[width=5cm]{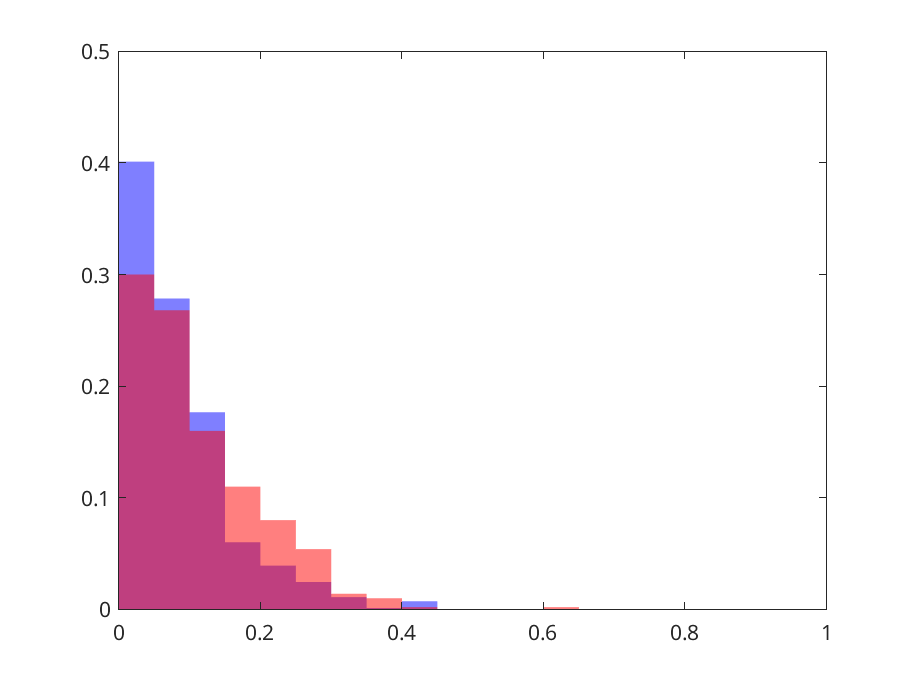}

  \caption{Marginal distributions of $1000$ randomly sampled points in $\Gf$ (blue) and the last $5$ points sampled by the QSI-EEM criterion (red) for a $100$ independent runs. Increasing coordinate number from top to bottom, left to right.}
  \label{eem:fig:rotor_histo}
\end{adjustwidth} 
\end{figure}

\begin{figure} 
  \center
  \includegraphics[width=13cm]{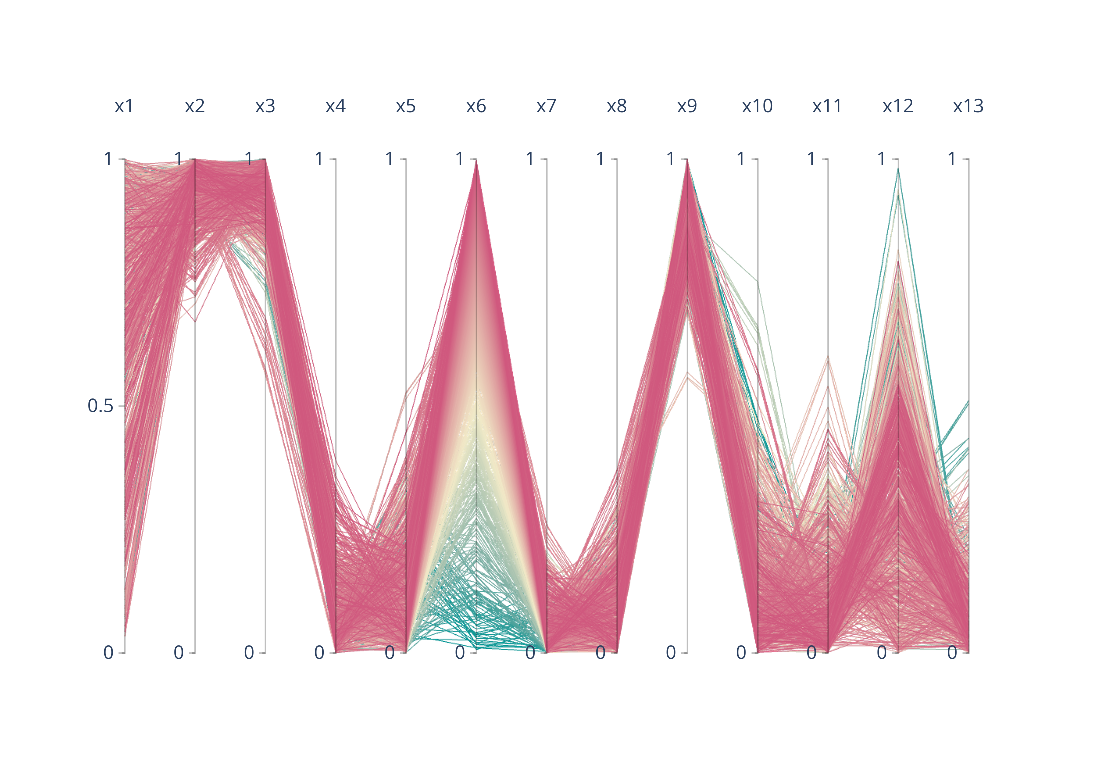}\\
  \includegraphics[width=13cm]{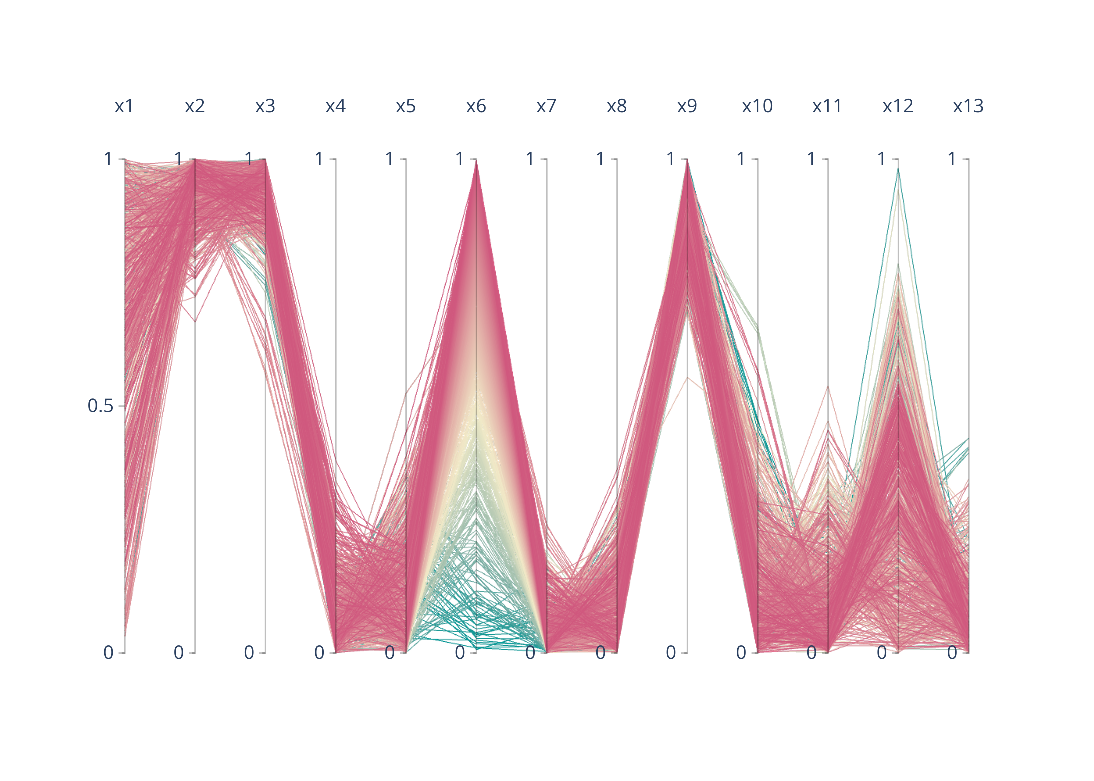}
  \caption{Parallel coordinates plots of $1000$ random points sampled in $\Gf$ (top), and $1000$ points sampled in $\hGn$ for a given run (bottom).}
  \label{eem:fig:parallel_rotor}
\end{figure}

\newpage

\section{Conclusion}
\label{eem:sec:conclusion}

We introduced a Bayesian strategy for a robust set inversion problem, known as quantile set inversion (QSI), where the objective is to estimate the set of deterministic inputs for which the probability---with respect to uncertain inputs---that the simulator output lies in a specified region is below a given threshold.

\medbreak
Through a new perspective referred to as Expected Estimator Modification (EEM), we derived a Bayesian sequential strategy, based on Gaussian process modeling of the simulator, to tackle the QSI problem. This method can be seen as a simplified approximation of the QSI-SUR strategy developed in \cite{ait:2024:qsi}. The relevance of the QSI-EEM method is demonstrated on several moderately difficult synthetic problems, where it achieves comparable performance to QSI-SUR with significantly lower computational cost.

\medbreak

We then considered more challenging QSI problems where the target set is small relative to the full deterministic input space. By combining QSI-EEM with a sequential Monte Carlo framework, we demonstrated the efficient estimation of quantile sets with relative size on the order of $10^{-6}$ to $10^{-8}$, both on synthetic examples and on the ROTOR37 compressor model. Thanks to its moderate computational complexity, the strategy supports the construction of batch designs of experiments, enabling the practical use of parallel evaluations of expensive simulators.

\medbreak
Future work will focus on extending the strategy to broader classes of QSI problems. As currently implemented, it remains limited in applicability to certain classes of QSI problems. In particular, extensions could target high-dimensional input spaces or the case of very small probability thresholds $\alpha$. On the theoretical side, further work could investigate conditions under which EEM methods are consistent or well-posed. Despite promising empirical results, the QSI-EEM strategy currently lacks formal convergence guarantees.

\section*{Acknowledgments}
The authors acknowledge the support of the French National Research Agency (ANR), which funded this work in the context of the SAMOURAI project (ANR-20-CE46-0013).
\medbreak
The authors are grateful to Sébastien Da Veiga (ENSAI, formerly Safran Tech) and Safran Tech for providing the metamodel of the ROTOR37 model used in \autoref{eem:sec:rotor}.

\newpage

\bibliographystyle{siamplain}
\bibliography{refs}

\newpage
\section*{Appendix}

\newcounter{main_figure}
\setcounter{main_figure}{\value{figure}}
\renewcommand\thefigure   {APP.\fpeval{\value{figure}-\value{main_figure}}}

\newcounter{main_equation}
\setcounter{main_equation}{\value{equation}}
\renewcommand\theequation   {APP.\fpeval{\value{equation}-\value{main_equation}}}

\newcounter{main_table}
\setcounter{main_table}{\value{table}}
\renewcommand\thetable   {APP.\fpeval{\value{table}-\value{main_table}}}

\newcounter{main_remark}
\setcounter{main_remark}{\value{theorem}}
\renewcommand\theremark   {APP.\fpeval{\value{remark}-\value{main_remark}}}

\setcounter{section}{0}
\renewcommand{\thesection}{\Alph{section}}

\section{Mathematical derivations}

\subsection{Proof of Proposition 2.1}
\label{eem:app:proof_sur_gain}

As in \cite{ait:2024:qsi}, notice that
\begin{equation*}
  \label{eem:app:eq_qsi_sur}
  \begin{split}
\Hn & = \int_\X\min(\pi_n(x),1-\pi_n(x))\,\dx\\ 
& = \int_\X\left(\pi_n(x)\mathds{1}_{\Gamma_n^{\mathrm{B},c}}(x) + (1-\pi_n(x))\mathds{1}_{\Gamma_n^\mathrm{B}}(x)\right) \, \dx\\
  \end{split}
\end{equation*}
with $\Gamma_{n}^{\mathrm{B},c}$ the complement of $\Gamma_{n}^{\mathrm{B}}$ in $\X$.
For simplicity, and without loss of generality, we assume a batch size of $r = 1$. Let us define the noisy observation of $\xi$ at a point $u \in \Uset$ as $Z_{n+1}(u) = \xi(u) + \epsilon_{n+1}$. We note
\begin{equation*}
  \begin{split}
\pi_{n+1}(x \, | u) & = \Pn(x\in\Gxi \, | Z_{n+1}(u))\\
& = \En(\mathds{1}_{\Gxi}(x) \, | Z_{n+1}(u)).
  \end{split}
\end{equation*}
We have that
\begin{equation}
  \label{eem:app:HNNU}
\begin{split}
\En(\Hnn \, | \, U_{n+1} = u) & = \int_\X \En(\min(\pi_{n+1}(x), 1-\pi_{n+1}(x))\, | \, U_{n+1}=u)\, \dx \\
& = \int_\X \En(\min(\pi_{n+1}(x \,| \, u), 1 - \pi_{n+1}(x \, | \, u)))\, \dx\\
& = \int_\X \En(\pi_{n+1}(x \,| \, u)\mathds{1}_{\Gamma_{n+1}^{\mathrm{B},c}(u)}(x) + (1-\pi_{n+1}(x \,| \, u))\mathds{1}_{\Gamma_{n+1}^{\mathrm{B}}(u)}(x))\, \dx,
\end{split}
\end{equation}
where $\Gamma_{n+1}^{\mathrm{B}}(u) = \{x \in \X \, : \, \pi_{n+1}(x \, | \, u) > 1/2\}$.
Similarly, using that
\begin{equation*}
\begin{split}
\pi_n(x)  = \En(\En(\mathds{1}_{\Gxi} (x) \, | Z_{n+1}(u))) = \En(\pi_{n+1}(x \, | \, u)),
\end{split}
\end{equation*}
we obtain 
\begin{equation}
  \label{eem:app:HNU}
  \begin{split}
\Hn = \int_\X \En(\pi_{n+1}(x \, | \, u)\mathds{1}_{\Gamma_{n}^{\mathrm{B},c}}(x) + (1-\pi_{n+1}(x \, | \, u))\mathds{1}_{\Gamma_{n}^{\mathrm{B}}}(x))\,\dx.
  \end{split}
\end{equation}
Consider the information gain $G_n(u) = \Hn - \En(\Hnn \, | \, U_{n+1} = u)$, and notice that maximizing $G_n$ is equivalent to minimizing the QSI-SUR criterion $J_n(u) = \En(\Hnn \, | \, U_{n+1}=u)$. Using \autoref{eem:app:HNNU} and \autoref{eem:app:HNU}, we have
\begin{equation*}
G_n(u) = \int_\X\En(\Delta(x \, | \, u))\,\dx,
\end{equation*}
where
\begin{equation*}
  \begin{split}
\Delta(x\,|\,u) & = \pi_{n+1}(x \, | \, u)(\mathds{1}_{\Gamma_n^{\mathrm{B},c}}(x) - \mathds{1}_{\Gamma_{n+1}^{\mathrm{B},c}(u)}(x)) + (1-\pi_{n+1}(x \, | \, u))(\mathds{1}_{\Gamma_n^\mathrm{B}}(x)-\mathds{1}_{\Gamma_{n+1}^\mathrm{B}}(x)).
  \end{split}
\end{equation*}
For all $x \in \X$, we observe that 
\begin{equation*}
  \Delta(x\, | \, u) =
  \begin{cases}
  2\pi_{n+1}(x\,|\,u) - 1 & \text{if} \quad x \in \Gamma_{n+1}^\mathrm{B}(u)\cap\Gamma_{n}^{\mathrm{B},c},\\
  1 - 2\pi_{n+1}(x \, | \, u) & \text{if} \quad x \in \Gamma_{n}^\mathrm{B}\, \cap \,\Gamma_{n+1}^{\mathrm{B}, c}(u),\\
  0 & \text{otherwise}.
  \end{cases}
\end{equation*}
 To conclude, it suffices to notice that, by definition of $\Gamma_{n+1}^B(u)$, $x \in \Gamma_{n+1}^B(u)$ if and only if $\pi_{n+1}(x \, | \, u) > 1/2$. Therefore,
\begin{equation*}
  \begin{split}
\Delta(x \, | \, u) & = 
\begin{cases}
  |2\pi_{n+1}(x \, | \, u) - 1| & \text{if} \quad (x \in \Gamma_{n+1}^\mathrm{B}(u)\cap\Gamma_{n}^{\mathrm{B},c}) \quad \text{or} \quad (x \in \Gamma_{n+1}^{\mathrm{B},c}(u)\cap\Gamma_{n}^{\mathrm{B}}),\\
  0 & \text{otherwise},
  \end{cases}\\
  & = |2\pi_{n+1}(x \, | \, u) - 1|\, \cdot \,\mathds{1}_{\Gamma_{n+1}^\mathrm{B}(u)\Delta \Gamma_{n}^\mathrm{B}}(x).
\end{split}
\end{equation*}
Hence,
\begin{equation*}
\begin{split}
G_n(u) & = \int_\X\En\left(|2\pi_{n+1}(x \, | \, u) - 1|\, \cdot \,\mathds{1}_{\Gamma_{n+1}^\mathrm{B}(u)\Delta \Gamma_{n}^\mathrm{B}}(x)\right)\,\dx\\
& = \int_\X\En\left(|2\pi_{n+1}(x) - 1|\, \cdot \, \mathds{1}_{\Gamma_{n+1}^\mathrm{B}\Delta \Gamma_{n}^\mathrm{B}}(x) \, \middle| \, U_{n+1} = u\right)\,\dx,
\end{split}
\end{equation*}
which concludes the proof.

\subsection{Proof of Proposition 2.2}
\label{eem:app:proof_approx}

  Under the measure $\Pn$, $\xi$ is also a GP with mean $\mu_n$ and covariance $k_n$ as the conditioning of a GP given a finite number of values. Hence, given a set of points $U \in \Uset$ and their observations $\xi(U) = (\xi(U_{n+1}),\dotsc, \xi(U_{n+r}))^t$, the Matheron's update rule indicates that the posterior mean of $\xi$ given $\sigma(\In\cup\{\xi(U)\})$ is
  \begin{align*}
  \mu_{n+r}(u) = \mn(u) + k_n(U,u)\Sigma_n^{-1}(U)(\xi(U)-\mn(U)).
  \end{align*}
  It is easy to see that under the measure $\Pn$, $k_n(U,u)\Sigma_n^{-1}(U)(\xi(U)-\mn(U)) \sim \mathcal{N}(0, \kappa^2_n(u))$. Hence, under $\Pn$,
  \begin{align*} \mu_{n+r}(u) \overset{\mathrm{d}}{=} \mn(u) + \kappa_n(u)Z,
  \end{align*}
  with $Z \sim \mathcal{N}(0,1)$. We conclude using that, under $\Pn$, $\mn(\cdot) + \kappa_n(\cdot)Z$ and $\mu_{n+r}$ are both GPs with similar mean and covariance functions.

  \subsection{Equivalences EEM / SUR}
  \label{eem:app:eq_eem_sur}
  
   In this section, we give the mathematical derivations of the equivalences between several SUR methods et EEM methods. Without loss of generality, we assume a batch size of $r=1$. Let us first remind the law of total variance.

\medbreak

\begin{lemma}[Law of the total variance]
  Let $\mathcal{F}_1$ and $\mathcal{F}_2$ two $\sigma$-algebras such that $\mathcal{F}_1 \subset \mathcal{F}_2$, and denote $\E_i$ and $\Var_i$ the conditional expectation and variance operators associated with $\mathcal{F}_i$.\medbreak
  For any square integrable random variable $Z$, we have
  \begin{align*}
  \Var_1(Z) = \E_1(\Var_2(Z)) + \Var_1(\E_2(Z)). 
  \end{align*}
\end{lemma}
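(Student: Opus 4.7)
The plan is to reduce everything to the identity $\Var_i(Z) = \E_i(Z^2) - (\E_i(Z))^2$, which is valid because $Z$ is square integrable (so $Z^2$, $\E_2(Z)$, and their $\E_1$-expectations are all well defined and finite), and then to invoke the tower property $\E_1 \circ \E_2 = \E_1$, which is where the hypothesis $\mathcal{F}_1 \subset \mathcal{F}_2$ enters.

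Concretely, I would first apply the definition of conditional variance at level $\mathcal{F}_2$ to get $\Var_2(Z) = \E_2(Z^2) - (\E_2(Z))^2$, then take the conditional expectation $\E_1$ on both sides. Using linearity of $\E_1$ and the tower property $\E_1(\E_2(Z^2)) = \E_1(Z^2)$, this yields
\begin{equation*}
\E_1(\Var_2(Z)) = \E_1(Z^2) - \E_1\bigl((\E_2(Z))^2\bigr).
\end{equation*}
Next I would apply the same variance identity at level $\mathcal{F}_1$ to the random variable $\E_2(Z)$, and again use the tower property $\E_1(\E_2(Z)) = \E_1(Z)$, giving
\begin{equation*}
\Var_1(\E_2(Z)) = \E_1\bigl((\E_2(Z))^2\bigr) - (\E_1(Z))^2.
\end{equation*}
Summing these two displays causes the terms $\pm \E_1((\E_2(Z))^2)$ to cancel, leaving exactly $\E_1(Z^2) - (\E_1(Z))^2 = \Var_1(Z)$, which is the desired identity.

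There is no real obstacle here: all steps are purely algebraic manipulations of conditional moments, and the only ingredient beyond linearity is the tower property, whose applicability is guaranteed by the nesting $\mathcal{F}_1 \subset \mathcal{F}_2$. The square-integrability hypothesis is used only to ensure that $\E_2(Z^2)$, $(\E_2(Z))^2$, and their $\mathcal{F}_1$-conditional expectations exist and can be freely rearranged.
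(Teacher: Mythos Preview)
Your proof is correct and is the standard argument for the conditional law of total variance. The paper itself does not prove this lemma; it simply states it as a reminder before using it, so there is no paper proof to compare against.
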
 
\medbreak
In the following, we note $G_n(u) = \Hn - \En(\Hnn \, | \, U_{n+1} = u)$ the expected gain function (given $\In$) associated to a SUR criterion $J_n(u) = \En(\Hnn \, | \, U_{n+1} = u)$. Notice that maximizing the $G_n$ is equivalent to minimizing $J_n$.

\medbreak

\paragraph{Integrated variance}
By the law of total variance, we have that 
\begin{equation}
  \begin{split}
G_n(\breve{u}) & = \int_\Uset\sn^2(u) \, \du - \int_\Uset\En(\sigma_{n+1}^2(u) \, | U_{n+1} = \breve{u})\,\du\\
& = \int_\Uset\Var_n(\Enn(\xi(u))\, | \, U_{n+1} = \breve{u})\, \du\\
& = \int_\Uset\Var_n(\mu_{n+1}(u)\, | \, U_{n+1} = \breve{u})\, \du\\
& = \int_\Uset\En((\mu_{n+1}(u)-\mu_n(u))^2  \, | \, U_{n+1} = \breve{u})\, \du,  
  \end{split}
\end{equation}
the last equality being deduced from the Matheron update formula, which implies that under $\Pn$ (and given $U_{n+1} = \breve{u}$), $\mu_{n+1}$ is a GP with mean $\mu_n$.

\medbreak

\paragraph{Variance of the volume of the excursion} 
Similarly to the derivation for the integrated variance criterion
\begin{equation}
  \begin{split}
    G_n(\breve{u}) & = \Var_n(\P_\Uset(\Lambda(\xi))) - \En(\Var_{n+1}(\P_\Uset(\Lambda(\xi)))\,|\, U_{n+1} = \breve{u})\\
    & = \Var_{n}(\Enn(\P_\Uset(\Lambda(\xi)))\,|\, U_{n+1} = \breve{u})\\
    & = \Var_n(\gamma_{n+1} \,|\, U_{n+1} = \breve{u})\\
    & = \En((\gamma_{n+1}-\gamma_n)^2 \,|\, U_{n+1} = \breve{u}),
  \end{split}
\end{equation}
where $\gamma_n = \int_\Uset \P_n(u \in \lambda(\xi))\,\du$.

\medbreak

  \paragraph{Expected improvement} 
  From \cite{bect:2019:supermartingale}, we know that choosing an evaluation point according to the expected improvement can be formulated as a SUR criterion. More specifically, it is equivalent to minimizing the criterion
\begin{equation}
  J_n(\breve{u}) = \En(\Hn\, | \, U_{n+1} = \breve{u}),
\end{equation}
where $\Hn = \En(\max\,\xi - M_n)$. Hence, we have
\begin{equation}
  \begin{split}
G_n(\breve{u}) & = \En(\max\,\xi - M_n) - \En(\Enn(\max\,\xi - M_{n+1}) \, | \, U_{n+1} = \breve{u})\\
& = \En(\max\,\xi - M_n - \max\, \xi + M_{n+1}\, | \, U_{n+1} = \breve{u})\\
& = \En(|M_{n+1}-M_n| \, | \, U_{n+1} = \breve{u}),
  \end{split}
\end{equation}
where the absolute value arises from the fact that, by definition, $M_{n+1} \ge M_n$.

\section{Metropolis-Hasting with adaptive variance}
\label{eem:app:MH}
In this section, we give details regarding the move step. We use a Metropolis-Hastings algorithm \citep{hasting:1970:MH}, with a Gaussian kernel whose variance is adapted in the same way as in \cite{bect:2017:bss}.
\medbreak
Denote by $(X_j^0)_j$ the points obtained after the resampling step and $q_n^k$ the density targeting $\Gamma^k(f)$ after $n$ evaluations. We repeat the following instructions for a given number of steps. First, we generate new candidates points by applying a Gaussian perturbation
\begin{equation}
\bar{X_j^i} = X_j^{i-1} + \Sigma_{k}^iZ_j,
\end{equation}
with $Z_j$ independent standard Gaussian random vectors of length $d_\X$ and $\Sigma_k^i$ a diagonal matrix. Then, we set
\begin{equation}
X_j^i = \begin{cases}
  \bar{X_j^i} \quad \text{with probability} \quad \mathds{1}_\X(\bar{X_j^i})\min\left(1, \frac{q_{n}^k(\bar{X_j^i})}{q_n^k(X_j^{i-1})}\right)\\
  X_j^{i-1} \quad \text{with probability} \quad 1-\mathds{1}_\X(\bar{X_j^i})\min\left(1, \frac{q_{n}^k(\bar{X_j^i})}{q_n^k(X_j^{i-1})}\right)
\end{cases}
\end{equation}
Consider now $A_j = \frac{1}{m}\sum_{i=1}^m\mathds{1}_{X_j^i=X_j^{i-1}}$ the empirical acceptation rate, and $\sigma_{k,l}^i$ the $l$-th diagonal element of $\Sigma_{k,j}$. In order to calibrate the variance of the Gaussian perturbations, and hence sequentially adapt to sampling in smaller quantile sets, we set 
\begin{equation}
  \log\sigma_{k,j}^{i+1} = 
  \begin{cases}
    \log(\sigma_{k,j}^i)+\epsilon \quad \text{if} \quad A_j > A_{target}\\
    \log(\sigma_{k,j}^i)-\epsilon \quad \text{if} \quad A_j \le A_{target},
  \end{cases}
  \end{equation}
where $A_{target}$ is a specified target acceptation rate. 

\medbreak
In our implementation, for all $l=1,\dotsc,d_\X$, we fix $\sigma_{0,l}^0$ as the standard deviation of the uniform distribution over the $l$-th dimension of $\X$ and $\epsilon = \log2$

\medbreak 
Note that this adaptation scheme is quite rudimentary, but sufficient for the examples displayed in this article. In the context of hard shaped quantile sets, for example sets concentrated around a subspace, more complex adaptation schemes, notably full covariance adaptation \citep[see, e.g.,][]{haario:2001:adapt_MH}, could prove useful. For more information regarding adaptive SMC strategies, the interested reader can refer, for instance, to \cite{roberts:2009:example_smc,vihola:2010:thesis,utkin:2020:adapt_smc} and references therein.
\medbreak

\section{Complements to the experiments of Section 4}
\subsection{Implementation details}
\label{eem:app:details_implementation}
 We provide here details regarding the implementation of the QSI-SUR criterion and other competitor methods used in \autoref{eem:sec:sanity_check}. We also precise the implementation of the QSI-EEM criterion for these experiments.
 \medbreak

 \paragraph{Bayesian model} %
The function $f$ is represented by a GP with a constant
mean function and an anisotropic Matérn covariance. %
The parameters of the GP~prior are estimated, at each step, by
restricted maximum likelihood (ReML) method \citep{stein1999interpolation}. Notice that we enforce the constraint that the
regularity parameter $\nu$ of the kernel should belong
to~$\left\{\frac{1}{2}, \frac{3}{2}, \frac{5}{2}, +\infty \right\}$. %
To limit the occurrence of
ill-conditioned covariance matrices, a nugget of value~$10^{-6}$ is
added. %
Regarding the initial training points, we use an approximated maximin Latin Hypercube of size $n_0 = 10(d_\X+d_\S)$ \citep{loeppky2009choosing}.

\medbreak

\paragraph{Implementation of the QSI-EEM and QSI-SUR criteria}
At each step, $100$ points are sampled according to $\Ps$ as $\tS_n$. A set of
$n_\X = 500\, d_\X$ points is uniformly sampled in $\X$. %
On these points, we approach the misclassification
probability $\min(\pi_n(x), 1-\pi_n(x))$ using Monte Carlo simulations
of $\xi(x,\cdot)$ on $\tS$ (with respect to $\Pn$). A subset $\tX_n$ of $n_{\tX} = 40$~points is defined as the point with the highest misclassification probability, and $n_{\tX} - 1 = 39$~points drawn (without replacement) according to the discrete distribution $p_\X(x) \propto \min(\pi_n(x), 1-\pi_n(x))$, with $\pi_n(x) = \Pn(x\in\Gxi)$. %
The set $\tX_n\times\tS_n$ is used to approximate the integrals involved in the QSI-EEM and QSI-SUR criteria. In the case of the QSI-SUR criterion, the integrand arising is
approximated using a Gauss-Hermite quadrature of $10$~points coupled
with $100$~conditioned sample paths. The QSI-EEM is approached using a similar quadrature and the formulas of \autoref{eem:sec:QSI-EEM-approx}.
The approximated sampling criteria are optimized using a simple search over a subset of $\tX_n \times \tS_n$ of $250$
candidate points. This subset is constructed by sampling according to
$\min(p_n(x,s), 1 - p_n(x,s))$, where $p_n(u) = \Pn(\xi(u) \in C)$.%

\medbreak

\paragraph{Other competitors}
As in \cite{ait:2024:qsi}, we also compare the performance of the
QSI-EEM criterion against methods that aim at approximating the
set~$\Lambda(f)$ in the joint space~$\X \times \S$. The results are compared to the ``Joint-SUR'' criterion of
\cite{bect:2012:sur_failureproba, chevalier:2014:fast_parallel_kriging}.
We sample $5d_\X\times10^4$ points according to the uniform distribution
on $\Uset$. %
The criterion is approximated using a subset of $4000$ points
constructed from the points with the highest misclassification
probability $\min(p_n(u), 1-p_n(u))$, and a sample (without
replacement) according to this probability. %
It is then optimized using an exhaustive search on $250$ of those
points.

\medbreak
We also include the ``Ranjan'' criterion \citep{ranjan:2008:contour}
with $\kappa = 1.96$ and the misclassification probability criterion of
\citep{bryan:2005:active}. Those are evaluated on $5d_\X\times10^4$ points
sampled uniformly on $\Uset$. %

\medbreak

As a baseline, we consider uniform random sampling in the space $\Uset$.

\subsection{Description of the test cases}
\label{eem:app:sec4_examples}
We provide here a description of the synthetic functions used in \autoref{eem:sec:sanity_check}.
\medbreak

\paragraph{Case 1 - modified Branin-Hoo function}
\begin{align*}
f_1(x,s) = \frac{1}{12}\, b(x,s) + 3\sin\left(x^{\frac{5}{4}}\right)
+ \sin\left(s^{\frac{5}{4}}\right)
\end{align*} where $b$ is the Branin-Hoo function \citep{branin:1972:branin-hoo}
\begin{align*}
  b(x,s) = \left(s - \frac{5.1x^2}{4\pi^2} + \frac{5x}{\pi} - 6
\right)^2 + 10 \left(1-\frac{1}{8\pi}\right)\cos(x) + 10.
\end{align*}
We set $C = (-\infty; T]$ with $T = 7.5$, $\alpha = 0.05$ and
$\Ps$~the Beta distribution with parameters $(7.5, 1.9)$, rescaled
from~$\left[ 0, 1 \right]$ to~$\S$. %

\medbreak

\paragraph{Case 2}
\begin{align*}f_2\left( x,s \right) = \frac{1}{2}\, c(x_1,s_1) + \frac{1}{2}\,
c(x_2,s_2)
\end{align*} on~$\X = [-2;2]^2$ and~$S = [-1;1]^2$, with $c$ the
``six-hump camel'' function \citep{dixon:1978:intro}
\begin{align*}
c(x,s) = \left(4-2.1x^2+\frac{x^4}{3}\right)x^2 + xs + (4s^2-4)s^2.\end{align*} %
We set $C = (-\infty; T]$ with $T = 1.2$, $\alpha = 0.15$ and
$\Ps$~the uniform distribution on~$\S$.

\medbreak

\paragraph{Case 3 - Hartmann4 function}
We consider the Hartman4 function
\citep{picheny:2013:benchmark}, defined on~$\X = [0;1]^2$
and~$\S = [0;1]^2$ equipped with the uniform distribution. %
We set $\alpha = 0.6$ and $C = [T, +\infty)$, with
$T = -1.1$.

\subsection{Complementary numerical results (Section 4)}
\label{eem:SM:sec:results-experiences}

In this section, some complementary details on the numerical
experiments of \autoref{eem:sec:sanity_check} are given. %
This includes, for all the competitors, the 100~sample paths and the
quantiles of order~$75\%$ and~$95\%$ of the error (proportion of
misclassified points) as a function of the number of steps.

\subsubsection{\texorpdfstring{Example 1}{Example 1}}

See Figures~\ref{eem:SM:fig:stats_results_f1_2}--\ref{eem:SM:fig:trajs_results_f1}.

\begin{figure}[p]
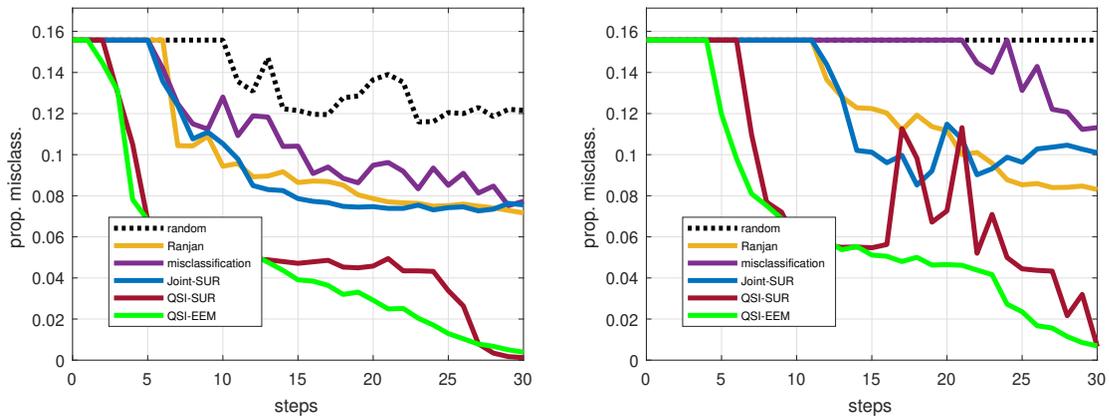

  \includegraphics[width=\toto]{graphs_eem/metric_75_branin_mod.eps}\hspace*{5mm}
  \includegraphics[width=\toto]{graphs_eem/metric_95_branin_mod.eps}
  \caption{%
    Quantiles of level~$0.75$ and~$0.95$ for the proportion of
    misclassified points vs.\ number of steps, for 100~repetitions of
    the algorithms on the test function~$f_1$.}
  \label{eem:SM:fig:stats_results_f1_2}
\end{figure}

\begin{figure}[p]
  \includegraphics[width=\toto]{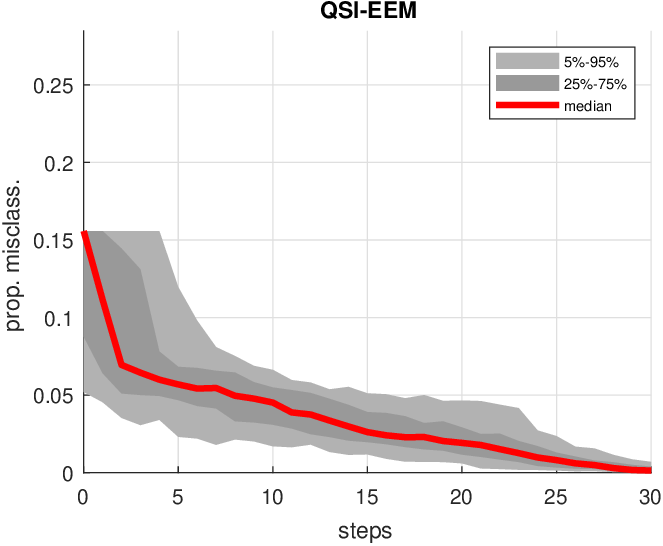}\hspace*{5mm}
  \includegraphics[width=\toto]{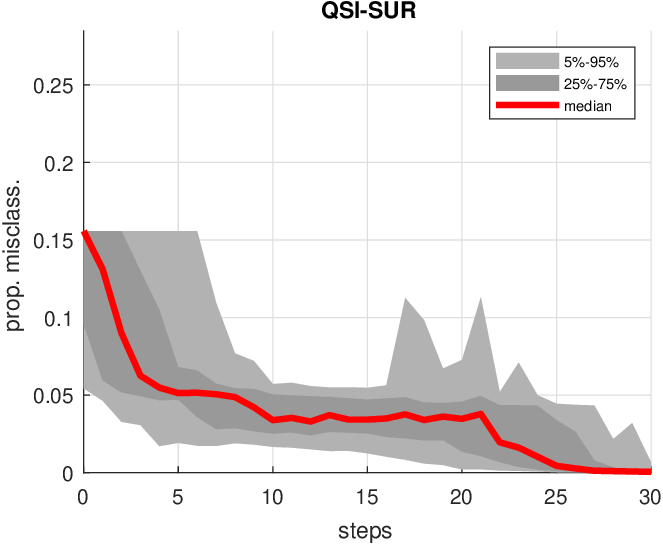}\\[5mm]
  \includegraphics[width=\toto]{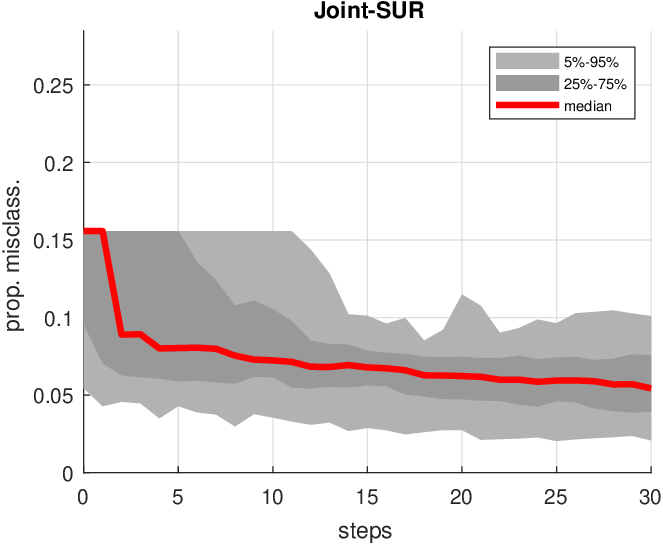}\hspace*{5mm}
  \includegraphics[width=\toto]{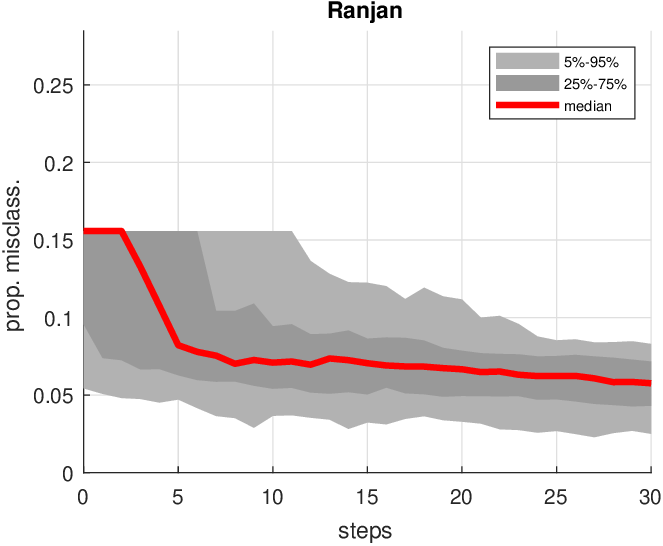}\\[5mm]
  \includegraphics[width=\toto]{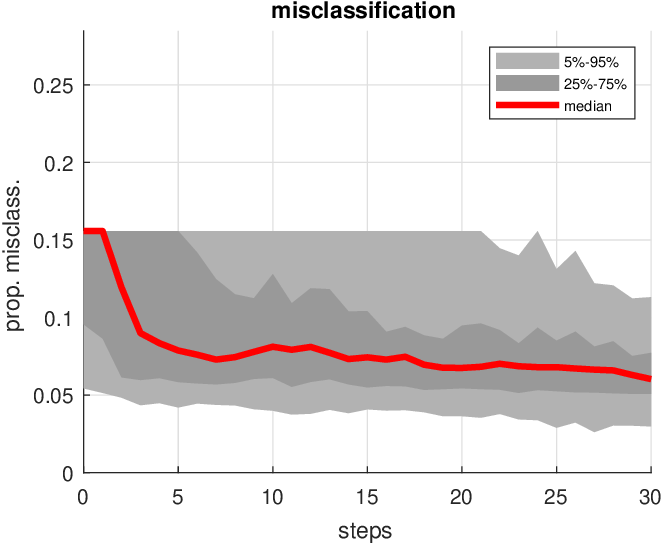}\hspace*{5mm}
  \includegraphics[width=\toto]{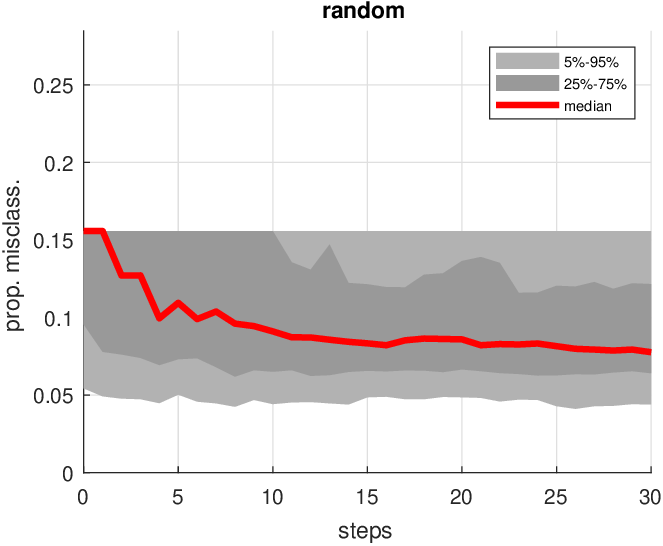}
  \caption{%
    Median and several quantiles of the proportion of misclassified
    points vs.\ number of steps, for 100~repetitions of the algorithms
    on the test function~$f_1$.}
  \label{eem:SM:fig:stats_results_f1}
\end{figure}

\begin{figure}[p]
  \includegraphics[width=\toto]{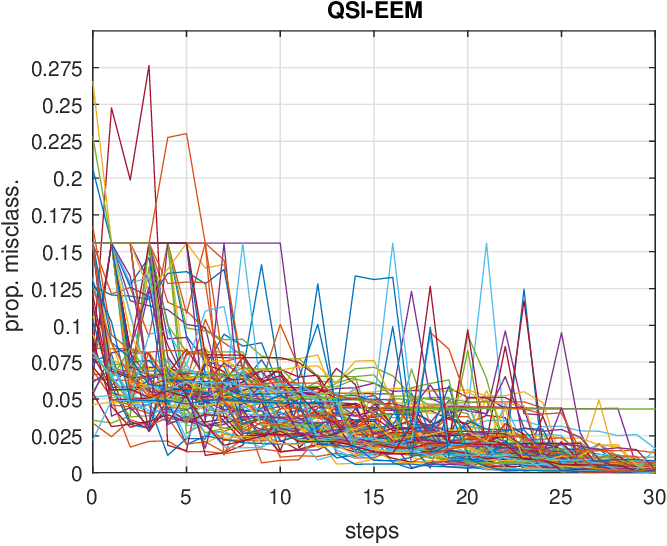}\hspace*{5mm}
  \includegraphics[width=\toto]{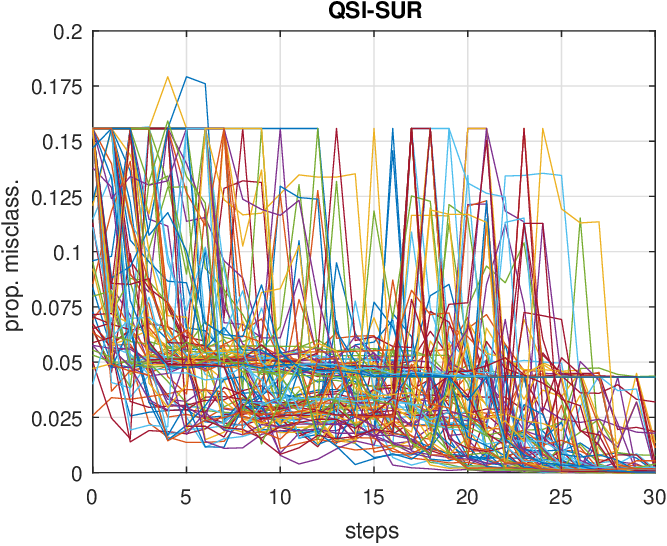}\\[5mm]
  \includegraphics[width=\toto]{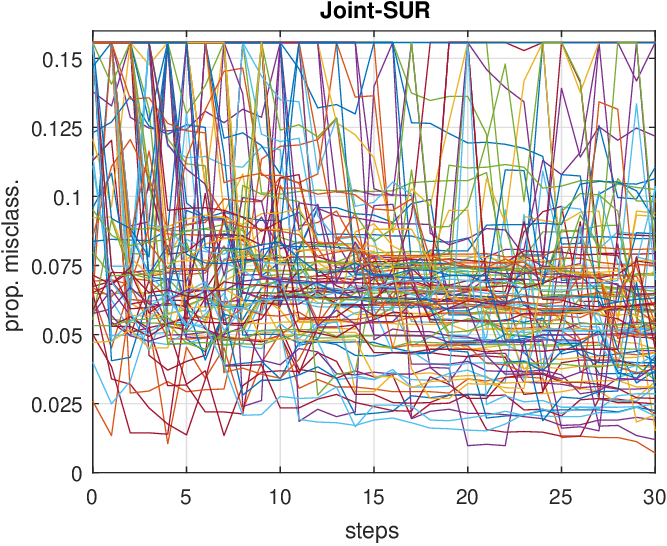}\hspace*{5mm}
  \includegraphics[width=\toto]{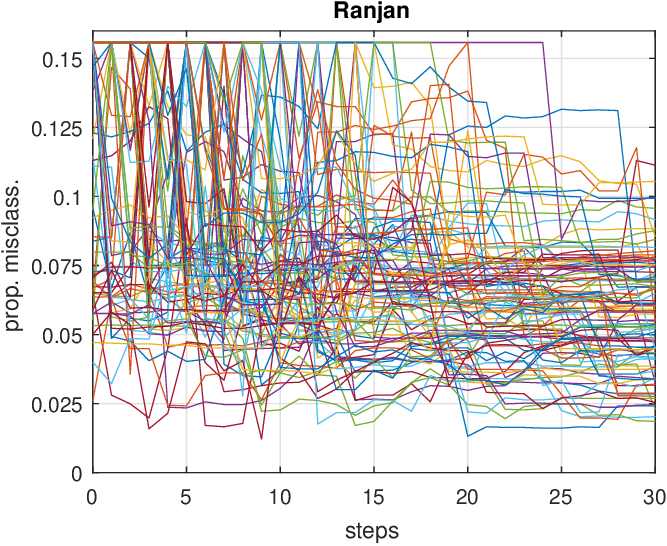}\\[5mm]
  \includegraphics[width=\toto]{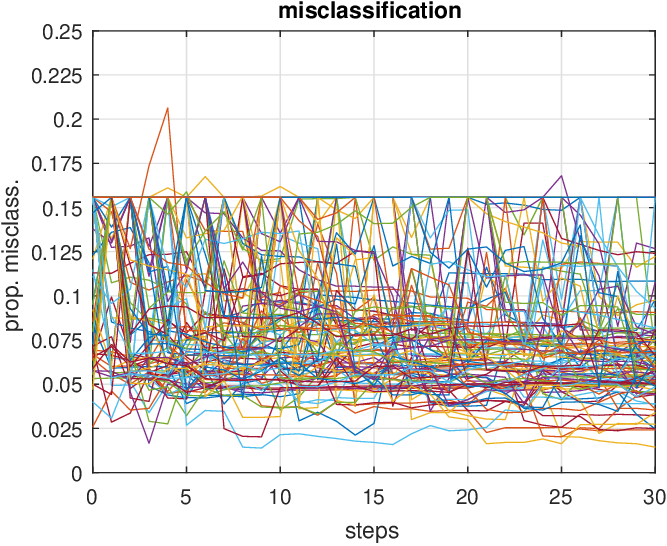}\hspace*{5mm}
  \includegraphics[width=\toto]{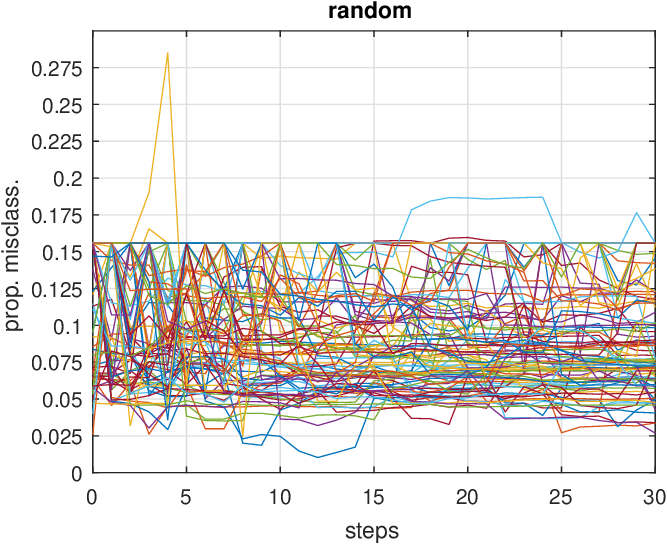}
  \caption{%
    Different sample paths of the proportion of misclassified points
    vs.\ number of steps, for 100~repetitions of the algorithms on the
    test function~$f_1$.}
  \label{eem:SM:fig:trajs_results_f1}
\end{figure}

\subsubsection{\texorpdfstring{Example 2}{Example 2}}

See Figures~\ref{eem:SM:fig:stats_results_f2_2}--\ref{eem:SM:fig:trajs_results_f2}.

\begin{figure}[p]

  \psfrag{0}     {}
  \psfrag{0.05}  {}
  \psfrag{50}    {}
  \psfrag{100}   {}

  \includegraphics[width=\toto]{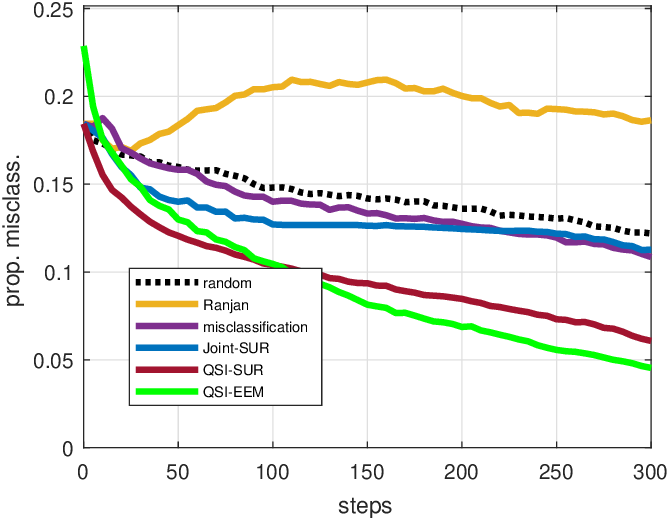}\hspace*{5mm}
  \includegraphics[width=\toto]{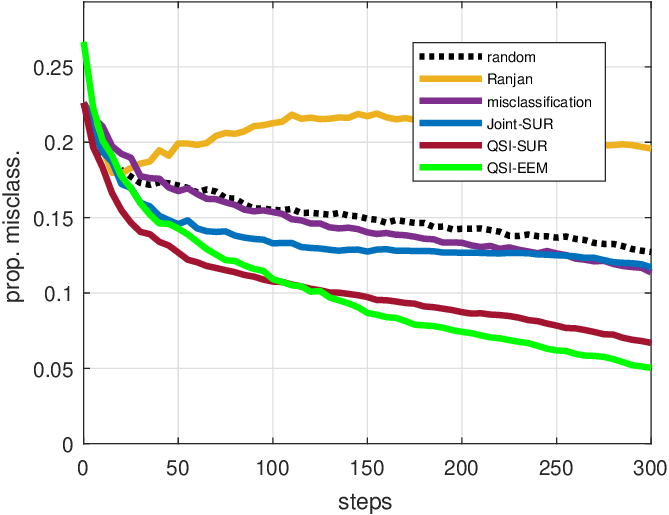}
  \caption{%
    Quantiles of level~$0.75$ and~$0.95$ for the proportion of
    misclassified points vs.\ number of steps, for 100~repetitions of
    the algorithms on the test function~$f_2$.}
  \label{eem:SM:fig:stats_results_f2_2}
\end{figure}

\begin{figure}[p]
  \includegraphics[width=\toto]{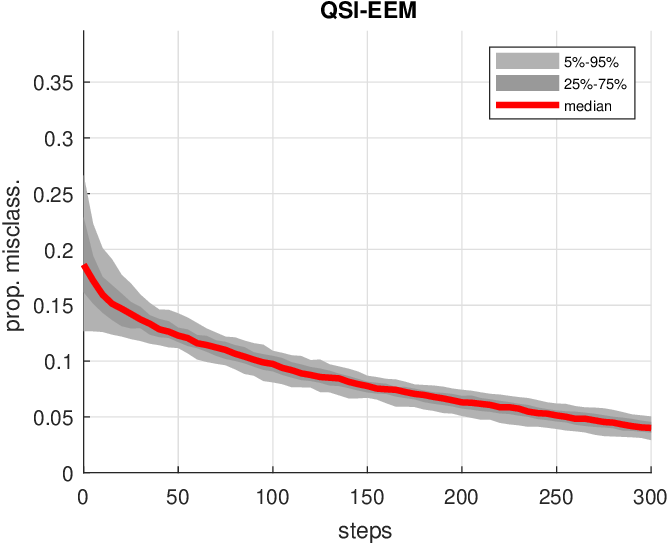}\hspace*{5mm}
  \includegraphics[width=\toto]{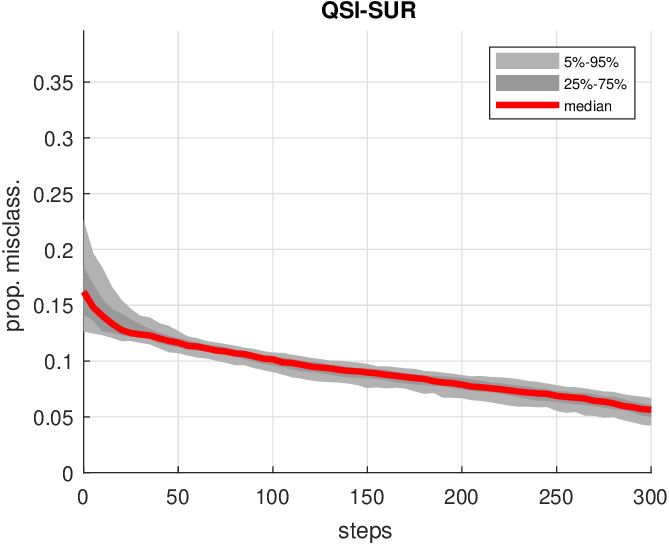}\\[5mm]
  \includegraphics[width=\toto]{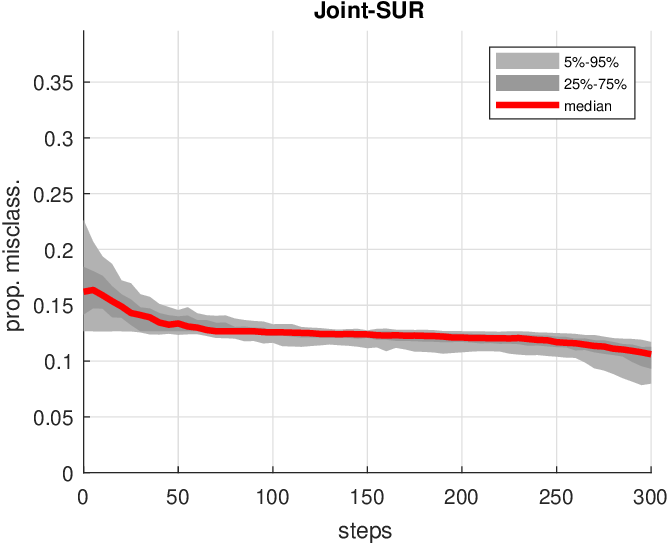}\hspace*{5mm}
  \includegraphics[width=\toto]{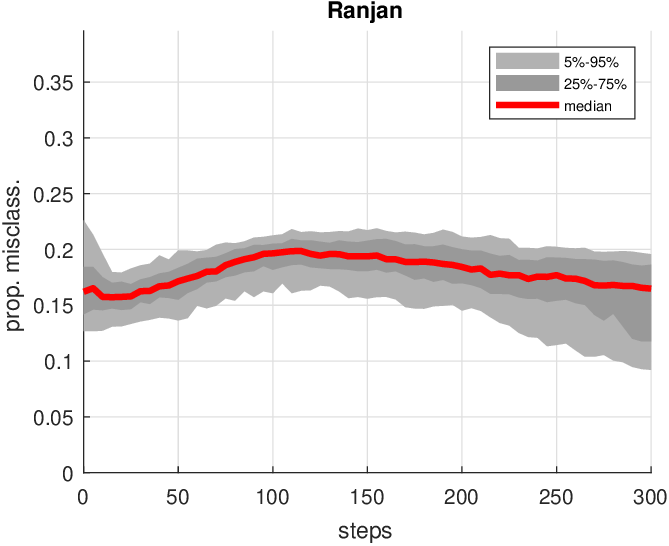}\\[5mm]
  \includegraphics[width=\toto]{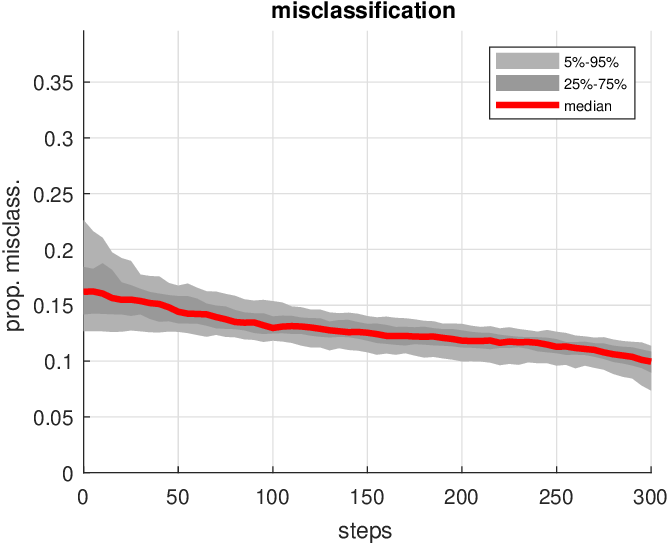}\hspace*{5mm}
  \includegraphics[width=\toto]{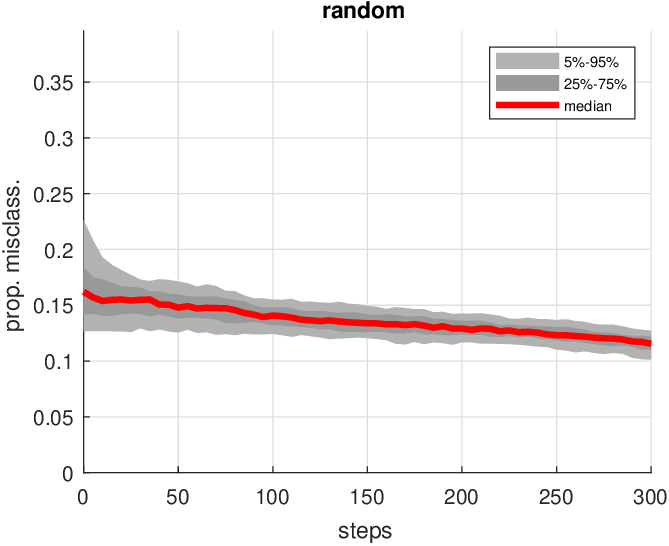}
  \caption{%
    Median and several quantiles of the proportion of misclassified
    points vs.\ number of steps, for 100~repetitions of the algorithms
    on the test function~$f_2$.}
  \label{eem:SM:fig:stats_results_f2}
\end{figure}

\begin{figure}[p]
  \includegraphics[width=\toto]{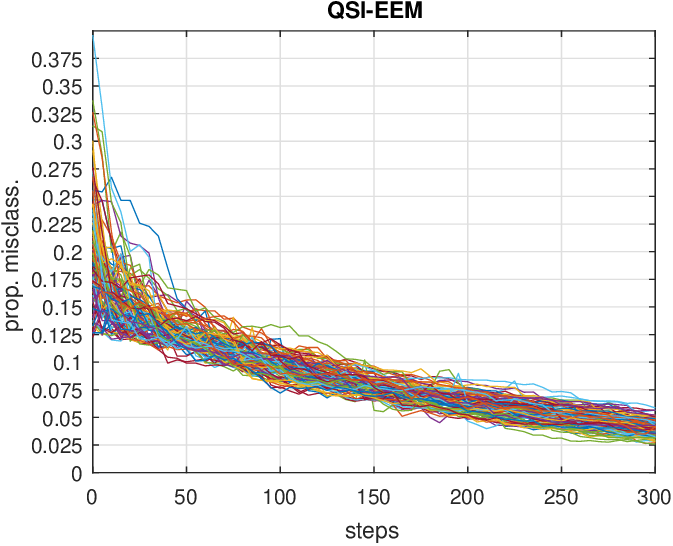}\hspace*{5mm}
  \includegraphics[width=\toto]{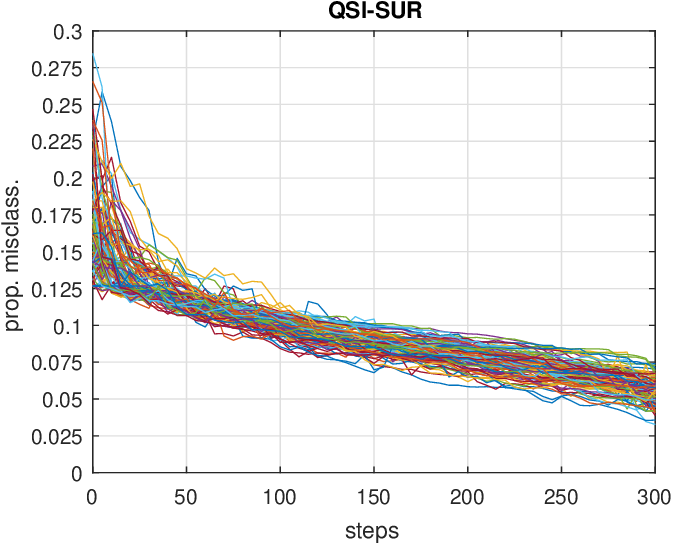}\\[5mm]
  \includegraphics[width=\toto]{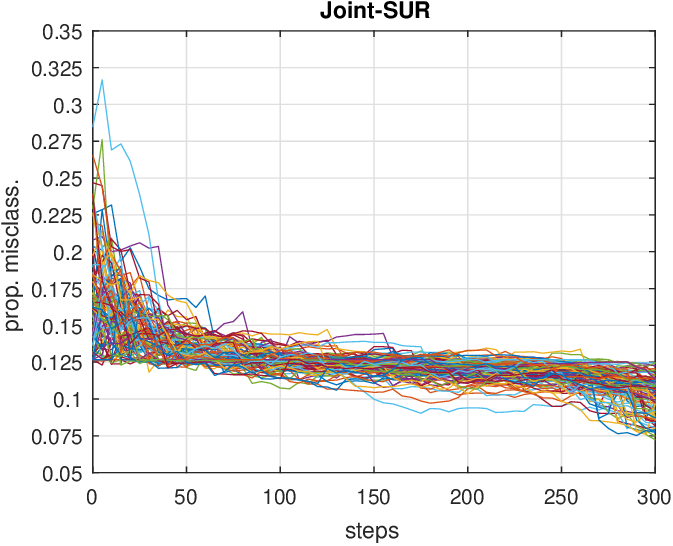}\hspace*{5mm}
  \includegraphics[width=\toto]{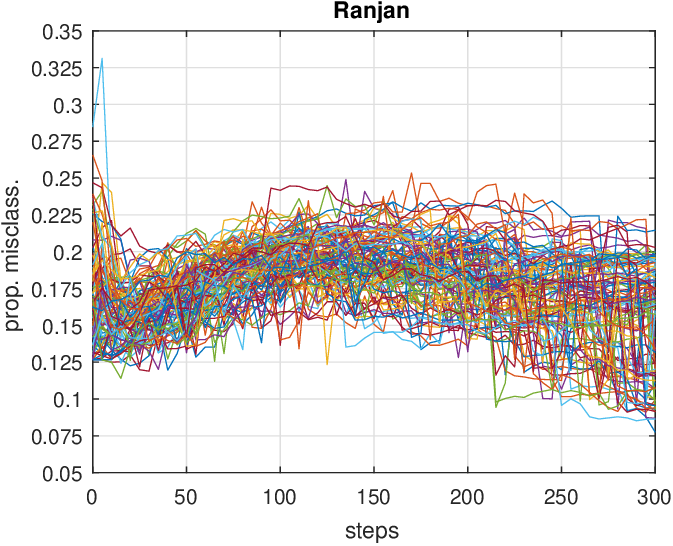}\\[5mm]
  \includegraphics[width=\toto]{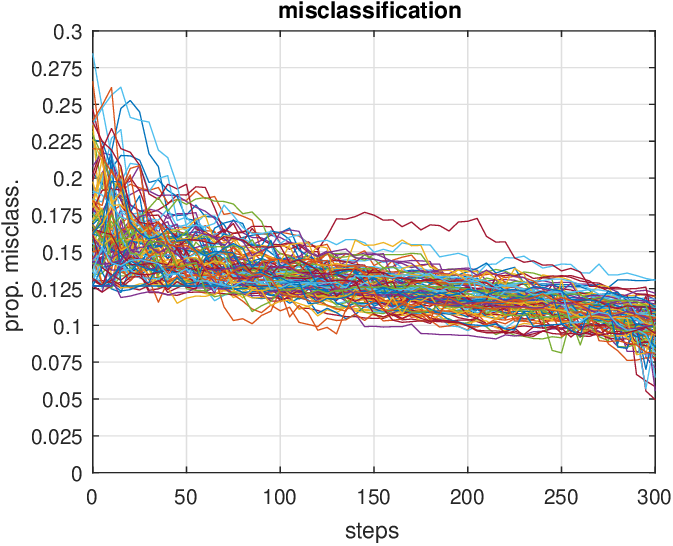}\hspace*{5mm}
  \includegraphics[width=\toto]{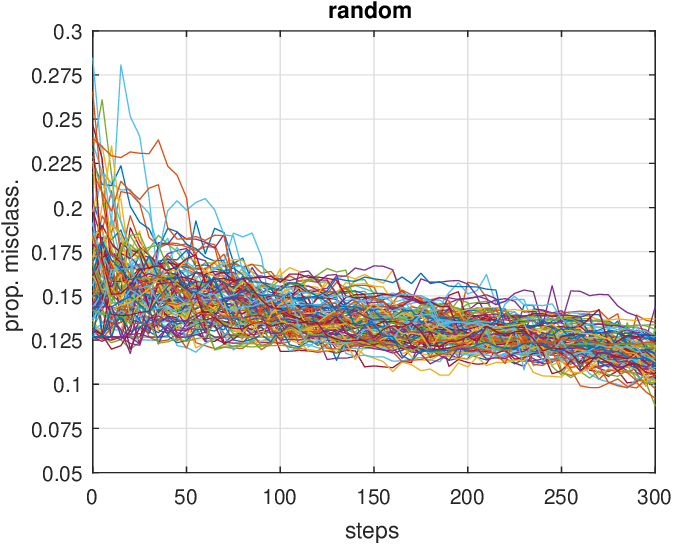}
  \caption{%
    Different sample paths of the proportion of misclassified points
    vs.\ number of steps, for 100~repetitions of the algorithms on the
    test function~$f_2$.}
  \label{eem:SM:fig:trajs_results_f2}
\end{figure}

\subsubsection{\texorpdfstring{Example 3}{Example 3}}

See Figures~\ref{eem:SM:fig:stats_results_f3_2}--\ref{eem:SM:fig:trajs_results_f3}.

\begin{figure}[p]
  \includegraphics[width=\toto]{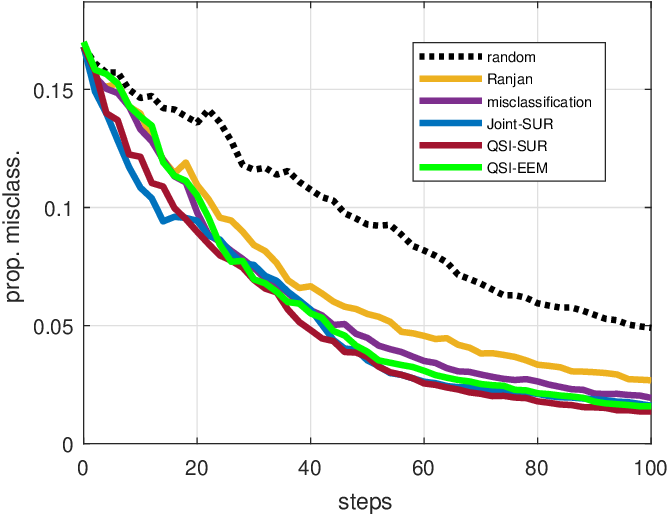}\hspace*{5mm}
  \includegraphics[width=\toto]{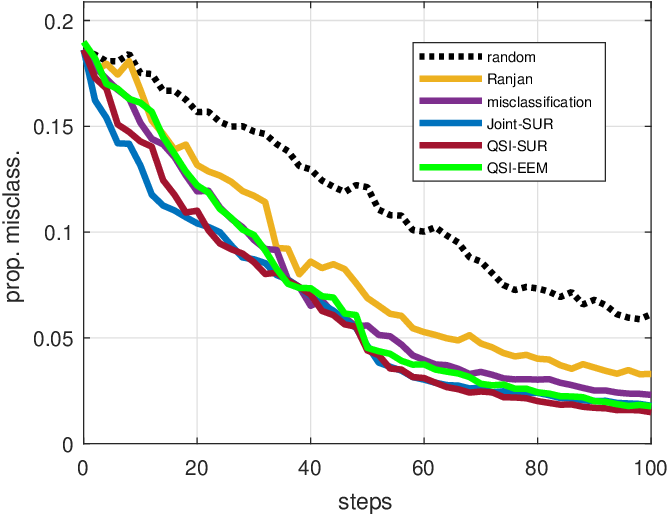}
  \caption{%
    Quantiles of level~$0.75$ and~$0.95$ for the proportion of
    misclassified points vs.\ number of steps, for 100~repetitions of
    the algorithms on the test function~$f_3$.}
  \label{eem:SM:fig:stats_results_f3_2}
\end{figure}

\begin{figure}[p]
  \includegraphics[width=\toto]{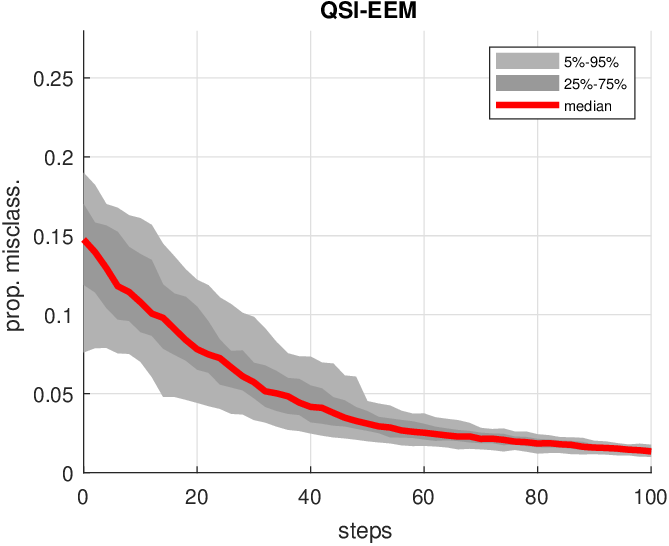}\hspace*{5mm}
  \includegraphics[width=\toto]{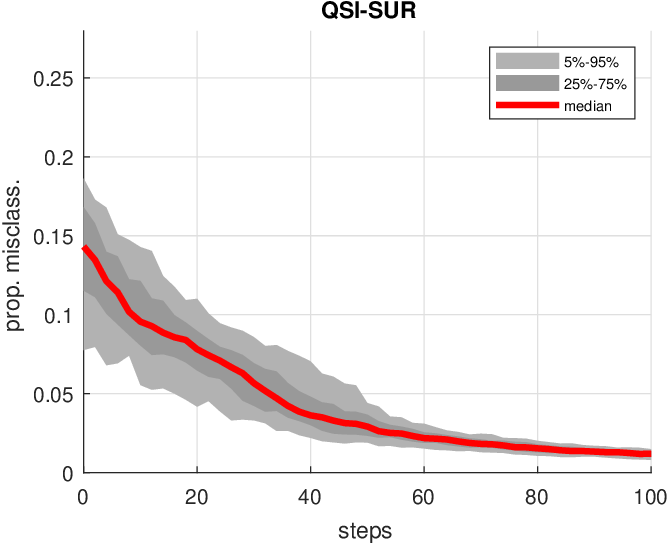}\\[5mm]
  \includegraphics[width=\toto]{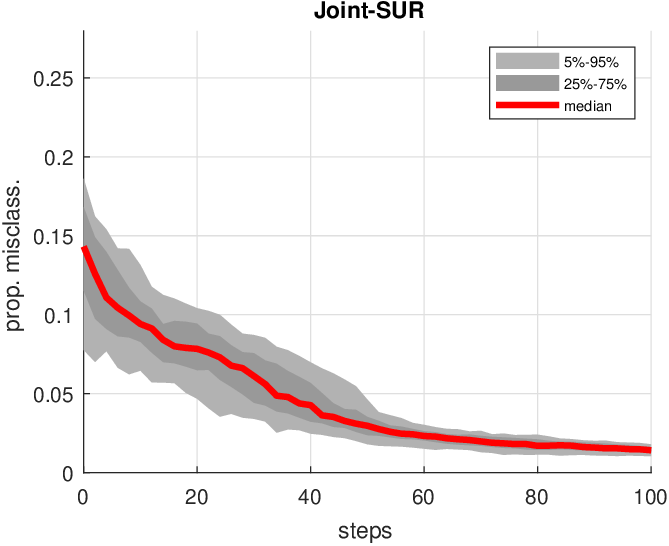}\hspace*{5mm}
  \includegraphics[width=\toto]{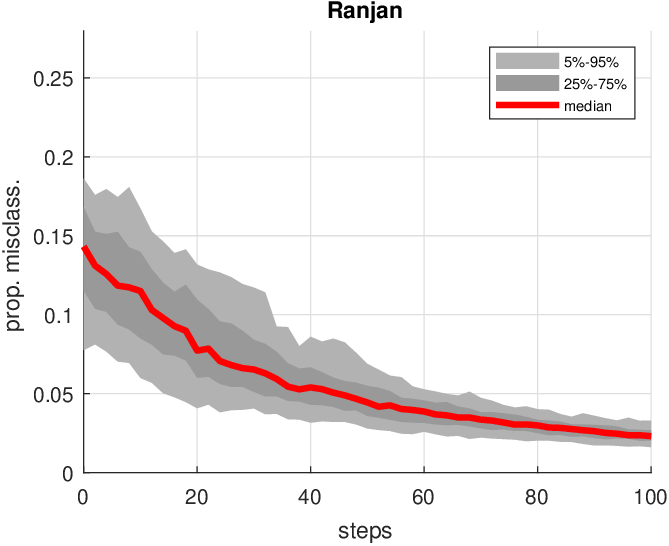}\\[5mm]
  \includegraphics[width=\toto]{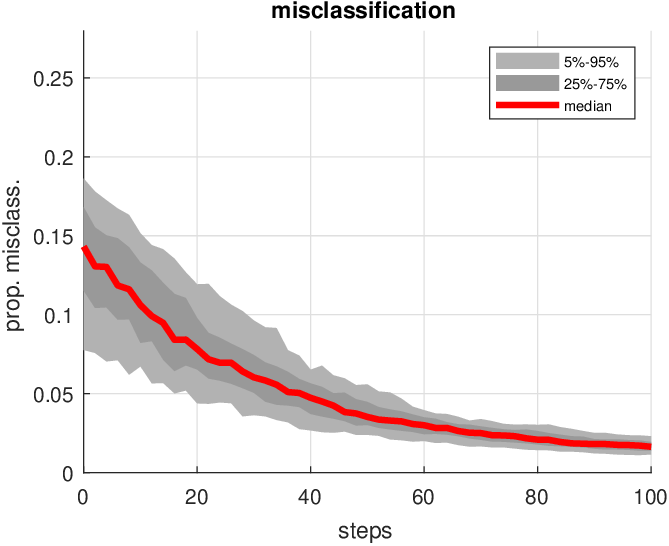}\hspace*{5mm}
  \includegraphics[width=\toto]{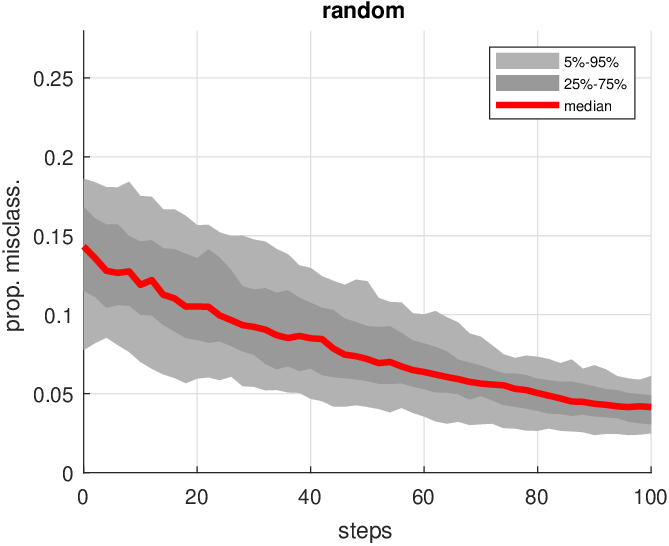}
  \caption{%
    Median and several quantiles of the proportion of misclassified
    points vs.\ number of steps, for 100~repetitions of the algorithms
    on the test function~$f_3$.}
  \label{eem:SM:fig:stats_results_f3}
\end{figure}

\begin{figure}[p]
  \includegraphics[width=\toto]{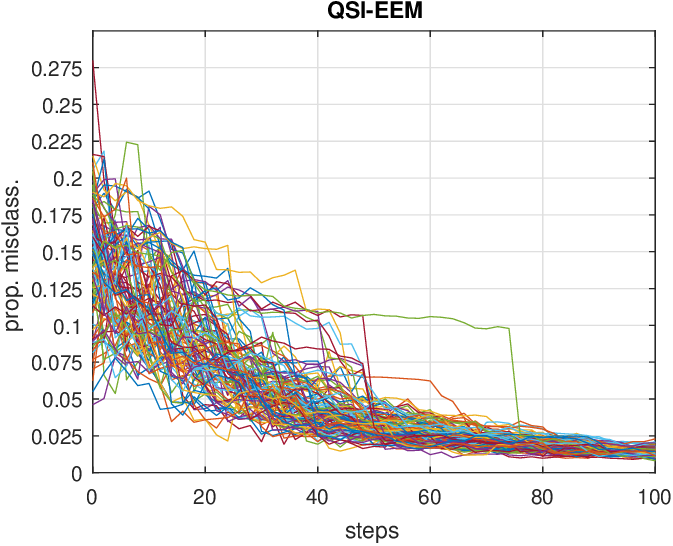}\hspace*{5mm}
  \includegraphics[width=\toto]{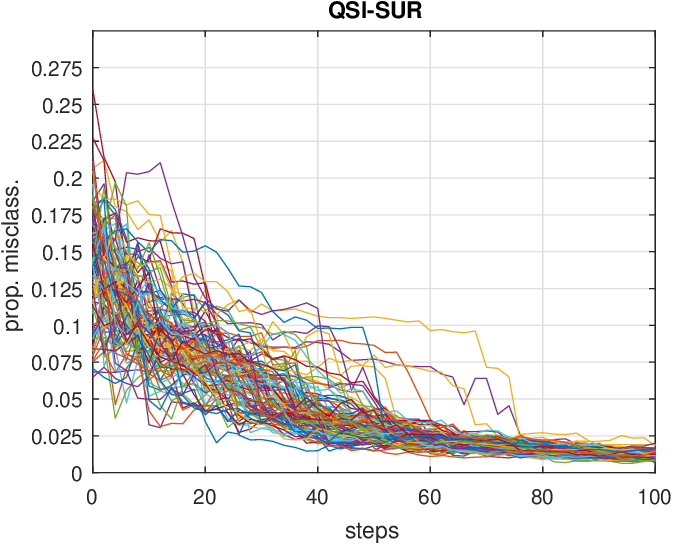}\\[5mm]
  \includegraphics[width=\toto]{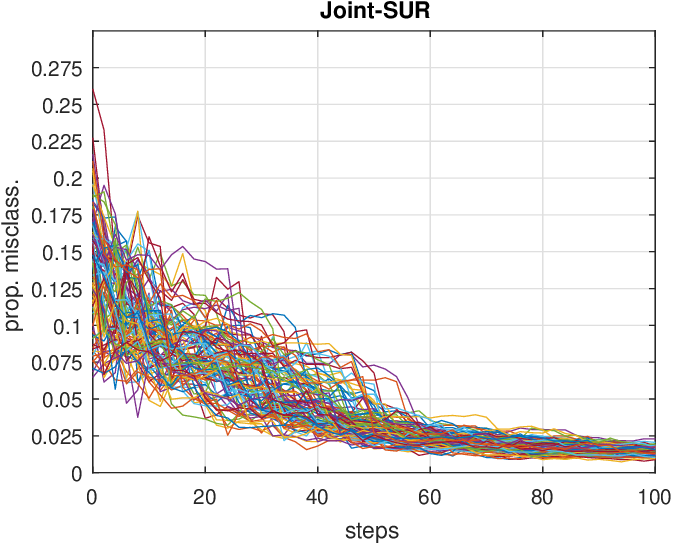}\hspace*{5mm}
  \includegraphics[width=\toto]{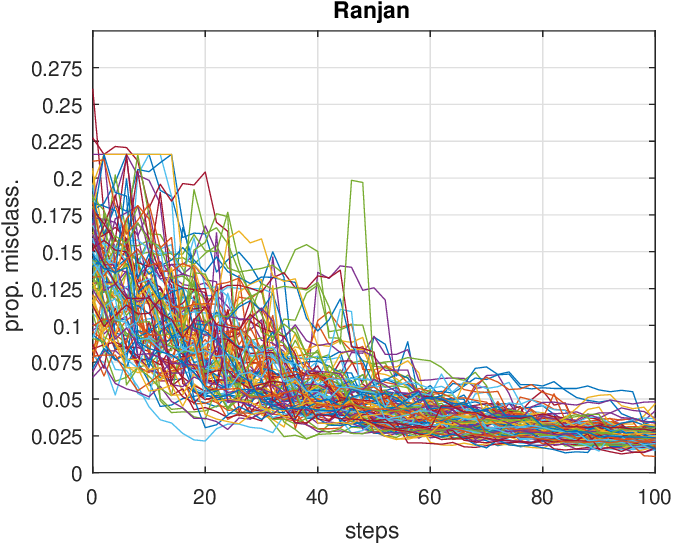}\\[5mm]
  \includegraphics[width=\toto]{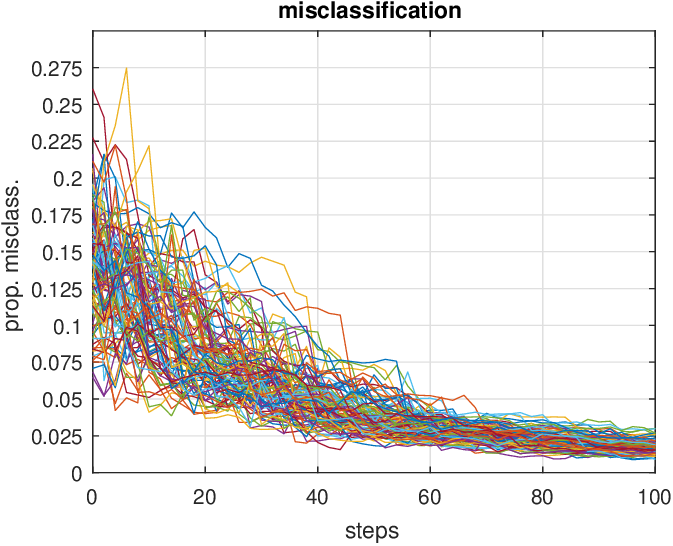}\hspace*{5mm}
  \includegraphics[width=\toto]{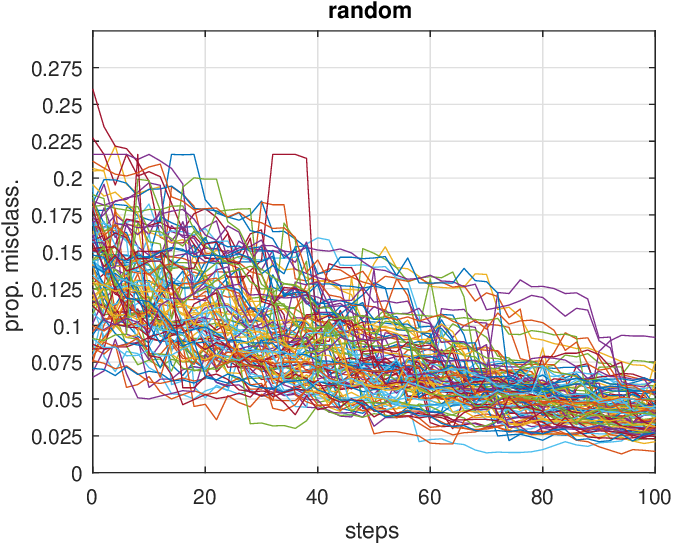}
  \caption{%
    Different sample paths of the proportion of misclassified points
    vs.\ number of steps, for 100~repetitions of the algorithms on the
    test function~$f_3$.}
  \label{eem:SM:fig:trajs_results_f3}
\end{figure}

\end{document}